\title{Automata on $S$-adic words} 
\author {Val\'erie Berth\'e} 
{Universit\'e Paris Cité, IRIF, CNRS, Paris, France}
{berthe@irif.fr}
{https://orcid.org/0000-0001-5561-7882}
{}
\author {Toghrul Karimov} 
{Max Planck Institute for Software Systems, Saarland Informatics Campus, Saarbr\"ucken, Germany}
{toghs@mpi-sws.org}
{https://orcid.org/0000-0002-9405-2332}
{}
\author {Mihir Vahanwala} 
{Max Planck Institute for Software Systems, Saarland Informatics Campus, Saarbr\"ucken, Germany}
{mvahanwa@mpi-sws.org}
{https://orcid.org/0009-0008-5709-899X}
{}
\authorrunning{V. Berth\'e et al.} 
\keywords{Sturmian words, $S$-adic words, automata theory, word combinatorics} 
\newcommand{\nat}{\mathbb{N}}
\newcommand{\intg}{\mathbb{Z}}
\newcommand{\rat}{\mathbb{Q}}
\newcommand{\Acal}{\mathcal{A}}
\newcommand{\Bcal}{\mathcal{B}}
\newcommand{\Ccal}{\mathcal{C}}
\newcommand{\Lcal}{\mathcal{L}}
\newcommand{\Wcal}{\mathcal{W}}
\newcommand{\Xcal}{\mathcal{X}}
\begin{document}

\maketitle

\begin{abstract}
    A fundamental question in logic and verification is the following: for which unary predicates $P_1, \ldots, P_k$ is the monadic second-order theory of $\langle \mathbb{N}; <, P_1, \ldots, P_k  \rangle$ decidable? Equivalently, for which infinite words $\alpha$ can we decide whether a given B\"uchi automaton $\Acal$ accepts $\alpha$? 
    Carton and Thomas showed decidability in case $\alpha$ is a fixed point of a letter-to-word substitution $\sigma$, i.e., $\sigma(\alpha) = \alpha$. 
    However, abundantly more words, e.g., Sturmian words, are characterised by a broader notion of self-similarity that uses a set $S$ of substitutions. 
    A word $\alpha$ is said to be directed by a sequence $s = (\sigma_n)_{n \in \mathbb{N}}$ over $S$ if there is a sequence of words $(\alpha_n)_{n \in \mathbb{N}}$ such that $\alpha_0 = \alpha$ and $\alpha_n = \sigma_n(\alpha_{n+1})$ for all $n$; such $\alpha$ is called $S$-adic. 
    We study the automaton acceptance problem for such words and prove, among others, the following.
    Given finite $S$ and an automaton~$\Acal$, we can compute an automaton $\Bcal$ that accepts $s \in S^\omega$ if and only if $s$ directs a word $\alpha$ accepted by $\Acal$. Thus we can algorithmically answer questions of the form  ``Which  $S$-adic words are accepted by a given automaton $\Acal$?''
\end{abstract}

\section{Introduction}
In 1962, Büchi proved that the monadic second-order (MSO) theory of the structure $\langle\mathbb{N}; <\rangle$ is decidable \cite{buchi-seminal}, and, in doing so, laid the foundations of the theory of automata over infinite words. 
Subsequently, Elgot and Rabin \cite{elgot-rabin} adopted automata-theoretic techniques to show how to decide the MSO theory of $\langle\mathbb{N}; <, P\rangle$ for various interesting unary predicates~$P$ including $\{n! \colon n\in\nat\}$ and $\{2^n \colon n\in\nat\}$.
By then, it was already known that unary predicates lay at the frontiers of decidability: expanding $\langle \nat; < \rangle$ with most natural functions (e.g., addition or doubling) or non-unary predicates yields undecidable MSO theories \cite{robinson1958restricted,trahtenbrot1962finite,thomas1975note}. 
The question thus arose: for which unary predicates $P_1,\ldots,P_k$ is the MSO theory of $\langle\mathbb{N}; <, P_1, \ldots, P_k\rangle$ decidable? 
Equivalently, 
%\footnote{We can encode $\langle \mathbb{N}; <, P_1, \cdots, P_{k}\rangle$ as a word $\alpha$ over the alphabet $\{0, 1, \cdots, 2^{k-1}\}$ where the binary expansion of $\alpha(n)$ indicates the predicates to which $n$ belongs. An MSO formula $\varphi$ is then translated into an automaton $\Acal$ that accepts $\alpha$ if and only if $\varphi$ holds in the structure.Conversely, we can encode a word $\alpha \in \{1, \cdots, k\}^\omega$ as the structure $\langle\mathbb{N}; <, P_1, \ldots, P_{k}\rangle$ where $n \in P_i$ if and only if $\alpha(n) = i$.} 
for which infinite words $\alpha$ is the \emph{automaton acceptance problem\footnote{Here the problem is parametrised by $\alpha$; the only input is the automaton $\Acal$.}}, which asks whether a given automaton $\Acal$ accepts $\alpha$, decidable?

The automaton acceptance problem under various assumptions on $\alpha$ has been studied, among others, by Semënov, Carton and Thomas, and Rabinovich \cite{semenov1984logical,carton-thomas,rabinovich}.
Semënov \cite{semenov1984logical} showed decidability for $\alpha$ that are \emph{effectively almost-periodic}.
These include, for example, the Thue-Morse word and \emph{toric words}, which are obtained from certain compact dynamical systems~\cite{berthe2025monadic}.
Carton and Thomas \cite{carton-thomas}, on the other hand, used algebraic methods to show decidability for \emph{morphic} $\alpha$, which include $\alpha$ that can be constructed by infinitely iterating a letter-to-word morphism $\sigma$ on a starting letter $a$.
Their result implies, in one fell swoop, the decidability of the MSO theory of $\langle \nat; <, \{p(n)a^n \colon n \in \nat \}\rangle$ where $a \ge 1$ and $p$ is a polynomial with integer coefficients satisfying $p(\nat) \subseteq~\nat$.
%Finally, Sem\"enov~\cite{semenov1984logical} (also see \cite{colcombet2011green}) as well as Rabinovich and Thomas \cite{rabinovich-thomas} gave two characterisations of $P$ such that the MSO theory of $\langle \nat; <, P\rangle$ is decidable, with the caveat that it is not possible to effectively verify either of their criteria. % on any particular $P$ of interest.
%In fact, it has been shown that determining whether the MSO theory of $\langle \mathbb{N}; <, P\rangle$ is decidable for a given unary predicate $P$ is complete for the class $\Sigma_3$ of the arithmetical hierarchy \cite{kuske-Sigma3}.

By now, the study of the automaton acceptance problem for various special classes of~$\alpha$ has led to rich interactions between word combinatorics, algebra (particularly monoid and group theory), number theory, and formal verification.
For example, effectively almost-periodic words of Sem\"enov have been identified as a powerful tool for analysing the behaviour of linear $\mathsf{while}$ loops in program verification \cite{karimov2022s}.
More recently, \cite{lics24-mso} showed the decidability of the MSO theory of $\langle \mathbb{N}; <, a^\mathbb{N}, b^\mathbb{N}\rangle$ using the fact that the order in which powers of coprime $a$ and $b$ occur is captured by a certain \emph{Sturmian word} (Sec.~\ref{subsection:Sturmian}). 
Sturmian words are an extremely well-studied and fundamental class of ``special words'' that appear naturally in a range of fields including number theory, computer graphics, and astronomy \cite[Chap.~9.6]{allouche_shallit_2003}, \cite[Chap.~6]{fogg2002substitutions}.
%They are toric and contain a subset of morphic words.
We refer the reader to \cite{berthe2025monadic} for a more detailed survey of the role of Sturmian words, and word combinatorics in general, in logic and verification.

In this paper, we are motivated by questions of the form, ``Given an $\omega$-regular language~$L$, decide whether it contains a Sturmian word.'' 
(Of course, we can also ask about any other combinatorial class.)
Our approach adopts the \emph{$S$-adic perspective}, a powerful tool for elucidating combinatorial properties of infinite words.
Akin to the continued fraction expansion of a real number, or even a Fourier decomposition of a signal, we write an infinite word $\alpha$ as an infinite composition of (possibly different) \emph{substitutions}.
%In order to certify whether a word $\alpha$ (effectively) satisfies desirable criteria such as almost-periodicity or profinite ultimate periodicity, we could consider its construction in terms of an infinite sequence $s = (\sigma_n)_{n \in \mathbb{N}}$ of \emph{substitutions}. 
A substitution $\sigma$ over an alphabet $\Sigma$ gives rules to replace each letter $a \in \Sigma$ with a non-empty word $\sigma(a) \in \Sigma^+$.
%thus producing \emph{self-similarity}. 
For example, the Fibonacci substitution $\sigma_\mathsf{fib}$ over $\{0, 1\}$ replaces $0$ with $01$ and $1$ with $0$; the Fibonacci word $\alpha_{\mathsf{fib}} = 01001010\cdots$ is obtained as the limit of iterating $\sigma_\mathsf{fib}$ infinitely on the letter~$0$ (or, alternatively, the letter 1). 
Hence we have the infinite decomposition $\alpha_{\mathsf{fib}} = \sigma_{\mathsf{fib}} \circ \cdots \circ \sigma_{\mathsf{fib}} \cdots$.
In general, we have a set $S$ of substitutions, and say that a sequence $s$ over $S$ \emph{directs} $\alpha \in \Sigma^\omega$ if there exists a sequence of words $\left(\alpha^{(n)}\right)_{n \in \mathbb{N}}$ such that $\alpha^{(0)} = \alpha$, and $\alpha^{(n)} = \sigma_n\left(\alpha^{(n+1)}\right)$ for all~$n$.
This gives us the $S$-adic decomposition $\alpha = \sigma_0 \circ \sigma_1 \circ \cdots$.\footnote{Of course, we are not interested in trivial decompositions that, for example, just permute the letters back and forth, as these do not tell us anything new about $\alpha$.}
%we \emph{generate}
%$$
%\alpha = \lim_{n \rightarrow \infty} \sigma_0 \cdots \sigma_n(a_n)
%$$
%where $(a_n)_{n \in \mathbb{N}}$ is a sequence of letters. The above generation of $\alpha$ implies the \emph{direction} of $\alpha$:  We thus refer to a sequence $s$ of substitutions as a \emph{directive sequence}.
We refer to any $s \in S^\omega$ as a \emph{directive sequence}.
%Our central question is the following: 

Entire classes of words such as Sturmian and Arnoux-Rauzy words (Sec.\ \ref{subsection:Sturmian}) can be defined in terms of directive sequences over a specific $S$;
such a class $\Wcal$ is called \emph{$S$-adic}. 
%Many interesting $S$-adic families, including the ones we consider in this paper are defined by \emph{weakly primitive} directive sequences, which implies that every $\alpha \in \Wcal$ is uniformly recurrent and hence almost-periodic.
%By the result of Sem\"enov discussed above, under an effectiveness assumption, such words have a decidable automaton acceptance problem and MSO theory.
%we thus witness that all Sturmian, and Arnoux-Rauzy words that are effective through their directive sequences are effectively almost-periodic, and hence have decidable automaton acceptance problems, or equivalently, MSO theories.
%It is thus natural to consider the decision problem of the \emph{common} MSO theory of such a class $\Wcal$ of words: 
Our central question is the following: 
given an $\omega$-regular language $L \subseteq \Sigma^\omega$ and an $S$-adic class of words $\Wcal \subseteq \Sigma^\omega$, what is the set of all $\alpha \in \Wcal$ 
%(represented in the $S$-space via directive sequences) 
that are contained in~$L$? Does there exist at least one such $\alpha$? Equivalently, given an MSO formula $\varphi$, can we decide whether there exists $\alpha \in \Wcal$ that induces a structure in which $\varphi$ holds?
We remark that this question is similar in spirit to the problem solved in \cite{hieronymi2024decidability}: given a \emph{first-order} formula~$\varphi$, it is decidable whether there exists a Sturmian word $\alpha$ such that $\varphi$ holds in the induced first-order structure $\langle\mathbb{N}; <, +, P_\alpha \coloneqq \{n \colon \alpha(n)=1\}\rangle$.
The key idea there is that the class of structures $\langle \nat; <, +, P_\alpha\rangle$, where $\alpha$ ranges over all Sturmian words, is \emph{automatic} \cite{shallit2022logical}.
%have descriptions in certain non-standard number systems that are amenable to automata-theoretic tools.
%In particular, addition in this number system can be performed by a finite automaton.
%We refer the reader to \cite{shallit2022logical} for more on \emph{automatic structures} that are used to decide various extensions of Presburger arithmetic like the one above.
%Continuing the analogy, $S$-adic perspective also allows us, in addition to describing all $\alpha \in \Wcal$ that are accepted by to decide the questionto be not only a witness of decidability, but also a tool to decide (combined) MSO theories.

%In a recent step towards this goal, \cite{competing-result} used topological arguments to show that given a finite set $S$, the set of sequences $s \in S^\omega$ that direct a word $\alpha \in \Sigma^\omega$ in a given closed $\omega$-regular language (i.e., $\alpha$ accepted by a given Büchi automaton with all states accepting, see also Sec.\ \ref{prelims-topology}) is $\omega$-regular. The natural follow-up is to lift the assumption of closure and consider general $\omega$-regular languages over $\Sigma$. 

Our main contribution is that $\omega$-regular specifications on words in the $\Sigma$-space translate to $\omega$-regular constraints on the directive sequences in the original  $S$-space of substitutions. 
Hence the common MSO theory of an $S$-adic class of words is decidable.
This generalises the result of \cite{competing-result} that, given finite $S$ and an automaton $\Acal$ whose language is closed (i.e., a B\"uchi automaton whose states are all accepting), it is decidable whether $\Acal$ accepts a word directed by $S$. 
Our algorithms thoroughly answer questions of the kind ``Which $S$-adic words are accepted by a given automaton $\Acal$?''
For example, given two words $u,v \in \{0,\ldots,d-1\}^*$ and  $a,b$ with $b\ne 0$, we can compute an effective representation (as an $\omega$-regular language of directive sequences) of all Arnoux-Rauzy words in which between any two consecutive occurrences of $u$, the number of occurrences of $v$ is congruent to $a$ modulo $b$.

\subsection*{Outline and contributions of the paper}
In Sec.~\ref{sec:prelims} we establish the necessary mathematical background.
In Sec.\ \ref{sec::sadic-general} we formally define what it means for a directive sequence to generate and to direct a word.
Briefly, a sequence of substitutions $(\sigma_n)_{n\in\nat}$ generates $\alpha$ if there exists a sequence of letters $(a_n)_{n\in\nat}$ such that $\alpha = \lim_{n \rightarrow \infty} \sigma_0 \cdots \sigma_n(a_n)$; generating a word is a strong form of directing it.
We then recall well-known properties of directive sequences, the most important of them being \emph{weak primitivity}, and then describe Sturmian and Arnoux-Rauzy words, which are the most-well known examples of $S$-adic words (see e.g.\ \cite{fogg2002substitutions,Berthe-Delecroix}).

In Sec.~\ref{sec::structure-theorems} we study the structure of words directed or generated by directive sequences. 
Our key new insight is the augmentation of a directive sequence $s$ over\footnote{Our set $S$ of substitutions could possibly be infinite.} $S$ into a \emph{congenial} expansion $\hat{s}$ over $S \times \Sigma$ (Def.~\ref{congenial-definition}), which generates a word incrementally and predictably (Lem.~\ref{lem::congenial-limit}). 
We prove the following pivotal results.
\begin{itemize}
    \item If $s$ generates $\alpha$, then it also directs $\alpha$ (Lem.\ \ref{generated-implies-directed}).
    \item A word $\alpha$ directed by $s$ is a concatenation of words generated congenially by $s$ (Lem.\ \ref{lem::congenial-concatenate}). 
    \item For $s$ weakly primitive, $s$ congenially generates $\alpha$ if and only if it directs $\alpha$ (Lem.\ \ref{lem:eg-sadic=sdirected}).
\end{itemize}

In Sec.\ \ref{sec:S-mod-A} we introduce a suitable equivalence relation of finite index on substitutions modulo a semi-group associated with a given $\omega$-regular language~$L$, inspired by the syntactic monoid of~$L$, that behaves well with respect to infinite composition of substitutions; we denote the set of classes by~$\Xi_L$. 
Given a sequence $s$ over  the set of substitutions $S$ or a sequence $\hat{s}$ over $S \times \Sigma$, we naturally define its trace to be, respectively, a word over $\Xi_L$ or over $\Xi_L \times \Sigma$. 
In Sec.\  \ref{section-sigma-omega} and \ref{sec::aut-acceptance} we use our two key ingredients, the notion of congeniality together with the monoid  $\Xi_L$ of equivalence classes, to prove our main results.
\begin{description}
    \item[Morphic Words] Let $\Acal$ be an automaton over $\Sigma$, and $\sigma, \pi$ be substitutions. 
    Using only the respective equivalence classes $\xi, \zeta\in \Xi_L$ of the latter, we can compute a regular language $L \subseteq \Sigma^+$ such that the word $\pi( \sigma^\omega(u))$ is well-defined and accepted by $\Acal$ if and only if $u \in L$ (Thm.\ \ref{morphic-regular-lang}).
    We can thus characterise all such $\pi,\sigma,u$, which generalises the result of \cite{carton-thomas} that it is decidable whether a given morphic word is accepted by $\Acal$.
    \item[Generated Words] Given an automaton $\Acal$ over $\Sigma$, we can construct an automaton $\Bcal$ over $\Xi_L \times \Sigma$ such that $\Bcal$ accepts the trace of $\hat{s}$ if and only if $\hat{s}$ is congenial and generates a word accepted by~$\Acal$ (Thm.~\ref{cor::congenial-automaton-version}).
    \item[Directed Words] Given $\Acal$ as above, we can construct an automaton $\Bcal$ over $\Xi_L$ such that $\Bcal$ accepts the trace of $s$ if and only if $s$ directs a word accepted by~$\Acal$ (Thm.\ \ref{desubstitutible-dream}).
    This further generalises the result of \cite{carton-thomas} from infinite compositions of a single substitution $\sigma$ to arbitrary infinite compositions over a set of substitutions $S$.
\end{description}

In Sec. \ref{sec:partial-quotients}, we refine our main results for Sturmian and Arnoux-Rauzy words, which have an \emph{a priori} known \emph{factor complexity} (Sec.\ \ref{sec:uniformly-recurrent}). 
We show that for such classes, acceptance by~$\Acal$ is completely determined by the first $N(\mathcal{A})$ \emph{partial quotients} of the directive sequence (Thm.~\ref{thm::partial-quotients}). 
In the case of Sturmian words, this has a nice geometric interpretation: an automaton can only resolve the slope and intercept associated with a Sturmian word up to a ``pre-determined'' finite precision.

\section{Preliminaries}
\label{sec:prelims}
An alphabet $\Sigma$ is a finite and non-empty set of symbols.
We write $\varepsilon$ for the empty word.
For a word $\alpha$, $\alpha(j)$ denotes the letter at the $j$th position of $\alpha$, $\alpha[i, j)$ denotes the finite word $\alpha(i) \cdots \alpha(j-1)$, and $\alpha[j, \infty)$ denotes the infinite word $\alpha(j)\alpha(j+1)\cdots$. A finite word $u$ is a factor of a word $v$ if there exist indices $i, j$ such that $v[i, j) = u$.
When we say that an object is effectively computable, we mean that a representation in a scheme (that will be clear from the context) is effectively computable. 

A substitution $\sigma$ is a non-erasing morphism from $\Sigma^*$ to $\Sigma^*$, i.e., $\sigma(v) = \varepsilon$ if and only if $v = \varepsilon$.
We denote the set of all such substitutions by $S(\Sigma)$.
For substitutions $\mu, \sigma$, we write $\mu \sigma$ for $\mu \circ \sigma$.
A substitution is positive if every $b \in \Sigma$ appears in $\sigma(a)$ for all $a \in \Sigma$, and left-proper if the images of all letters by $\sigma$ begin with the same letter.
%and left-proper if there exists $b \in \Sigma$ such that $\sigma(a)$ begins with $b$ for all $a \in \Sigma$.
For more on the subject of substitutions, see e.g.\ \cite{fogg2002substitutions}.

\subsection{Topology of finite and infinite words}
\label{prelims-topology}
We equip $\Sigma^\infty \coloneqq \Sigma^+ \cup \Sigma^\omega$ with the product topology, and define the distance between words $u, v$ to be $2^{-n}$, where $n$ is the first position in which they differ. E.g., distinct $a, b \in \Sigma$ are a distance of $2^0 = 1$ apart.
A notion of convergence of sequences of words follows naturally.

\begin{definition}
Let $(u_n)_{n\in\nat}$ be a sequence of finite non-empty words. We define $\alpha = \lim_{n\rightarrow \infty} u_n \in  \Sigma^\infty \cup \{\bot\}$ as follows.
\begin{itemize}
    \item If there exists $v \in \Sigma^*$ and $N$ such that for all $n \ge N$, $u_n = v$, then $\alpha = v$. 
    \item If there exists $\beta \in \Sigma^\omega$ such that for all $j$, $u_n[0, j) = \beta[0, j)$ for all sufficiently large $n$, then $\alpha = \beta$.
    \item Otherwise, $\alpha = \bot$, which denotes lack of convergence in $\Sigma^\infty$.
\end{itemize}
\end{definition}
Under this topology, $\Sigma^\infty$ is compact.
%In particular, every infinite sequence of words from $\Sigma^\infty$ has some infinite subsequence that converges to an element of $\Sigma^\infty$.
The cylinder sets defined by fixing finitely many letters are both closed and open. 
%\footnote{If an infinite subsequence has words of bounded length, a trivial pigeonhole argument suffices; otherwise we consider the finitely branching tree of prefixes that appear infinitely often and apply König's lemma.}

\subsection{Automata and semigroups for infinite words}
\label{sec::aut-semigroups}
We consider infinite words and $\omega$-regular languages from the algebraic and combinatorial perspectives. 
A language $L \subseteq \Sigma^\omega$ is $\omega$-regular if and only if it can be recognised by a (nondeterministic) Büchi automaton $\Acal = (\Sigma, Q, I, \Delta, F)$, where $\Sigma$ is the alphabet, $Q$ is the finite set of states, $I \subseteq Q$ is the set of initial states, $\Delta \subseteq Q \times \Sigma \times Q$ is the transition relation, and $F$ is the set of accepting states. A run $r \in Q^\omega$ of the automaton on an input word $\alpha$ satisfies $r(0) \in I$, and for all $n$, $(r(n), \alpha(n), r(n+1)) \in \Delta$. A word $\alpha$ is accepted if it has a run $r$ such that $r(n) \in F$ for infinitely many $n$. 

Some of our technical tools, e.g.\ Semënov's theorem, require the automaton to be \emph{deterministic}, i.e., there must be a single initial state, and the transition relation must induce a function $\delta: Q \times \Sigma \rightarrow Q$. For this reason, we also use deterministic parity automata, which are further equipped with a function $\mathsf{index}: Q \rightarrow \mathbb{N}$. In the case of deterministic parity automata, a word $\alpha$ has a single run $r$, and is accepted if $\limsup_{n \to \infty} \mathsf{index}(r(n))$ is even. Deterministic parity automata recognise precisely the class of $\omega$-regular languages. Thus, in most contexts, ``automaton'' can interchangeably be taken to mean nondeterministic Büchi automaton, or deterministic parity automaton.

We now recall $\omega$-semigroups (see \cite[Sec.~7]{perrin1995semigroups}) as an equivalent way to recognise $\omega$-regular languages. Formally, \emph{an $\omega$-semigroup} $M = (M_f, M_\omega)$ is a two-sorted algebra equipped with the following operations:
\begin{enumerate}
    \item A binary operation defined on $M_f$ and denoted multiplicatively,
    \item A mapping $M_f \times M_\omega \rightarrow M_\omega$, called the \emph{mixed product}, also denoted multiplicatively,
    \item An infinite product $\pi$ that maps infinite sequences over $M_f$ to an element of $M_\omega$.
\end{enumerate}
These operations must satisfy the following associativity properties:
\begin{enumerate}
\item $M_f$, equipped with the binary operation, is a semigroup.
\item For every $m_1, m_2 \in M_f$ and $m_3 \in M_\omega$, we have that $(m_1 m_2)m_3 = m_1(m_2 m_3)$.
\item For every sequence $(m_n)_{n \in \nat}$ over $M_f$, and every strictly increasing sequence $(k_n)_{n \in \nat}$ of indices, we have that
$\pi(m_0, m_1, m_2, \ldots) = \pi(m_0\cdots m_{k_0}, m_{k_0+1}\cdots m_{k_1}, \ldots)$.
\item For every $m \in M_f$ and every sequence $(m_n)_{n \in \nat}$ over $M_f$, we have that $m\cdot \pi(m_0, m_1, \ldots) = \pi(m, m_0, m_1, \cdots)$.
\end{enumerate}
The last two conditions show that the infinite product $\pi(m_0, m_1, \ldots)$ can also be denoted multiplicatively as $m_0m_1\cdots$. Indeed, an $\omega$-semigroup can be intuited as a semigroup where infinite products are defined. An immediate example is $\Sigma^\infty$ with word concatenation, where $M_f = \Sigma^+$, and $M_\omega = \Sigma^\omega$.

Given $\omega$-semigroups $M_1, M_2$ a morphism $h$ of $\omega$-semigroups is a pair $h_f, h_\omega$ such that $h_f$ is a semigroup morphism from $M_{1,f}$ into $M_{2,f}$, and $h_\omega$ is a map from $M_{1,\omega}$ to $M_{2,\omega}$ preserving the mixed product and infinite product, i.e., for every $m_1 \in M_{1, f}$ and $m_2 \in M_{1, \omega}$, $h_f(m_1)h_\omega(m_2) = h_\omega(m_1 m_2)$, and for every sequence $(m_n)_{n\in \nat}$ over $M_{1, f}$, we have $h_f(m_0)h_f(m_1) \cdots = h_\omega(m_0 m_1 \cdots)$. We shall thus omit subscripts and denote the application of a morphism by simply $h$. As an immediate example, a non-erasing substitution $\sigma$ defines a morphism from $\Sigma^\infty$ to $\Sigma^\infty$.

A set $L \subseteq \Sigma^\infty$ is recognised by a morphism $h$ from $\Sigma^\infty$ into an $\omega$-semigroup $M$ if there exists a subset $H \subseteq M_f \cup M_\omega$ such that $L = h^{-1}(H)$. A language $L \subseteq \Sigma^\omega$ is $\omega$-regular if and only if it is recognisable by a finite $\omega$-semigroup \cite[Thm.~7.6]{perrin1995semigroups}. See \cite[Sections~8,9]{perrin1995semigroups} respectively for effective translations from Büchi automata to $\omega$-semigroups, and vice versa. Furthermore, analogous to the finite-word case, there is a notion of the (finite) syntactic $\omega$-semigroup of an $\omega$-regular language \cite[Sec.~11]{perrin1995semigroups}.

A few remarks addressing presentation concerns are in order. A result of Wilke \cite{wilke1991eilenberg} showed that for finite $\omega$-semigroups $M$, the infinite product is fully determined by the function $M_f \rightarrow M_\omega$ that maps $m$ to $m^\omega$ \cite[Thm.~7.1]{perrin1995semigroups}. This combinatorial result is proven through Ramsey's theorem, and gives a method to construct $\omega$-semigroups through Wilke algebras. We refer the reader to \cite[Sec.~7]{perrin1995semigroups} for details of how the semigroup structure of $M_f$ determines an extension into an $\omega$-semigroup through \emph{linked pairs}.

We use the equivalence of finite $\omega$-semigroups and automata through the following lemma. 
\begin{lemma}
\label{lem::infinite-product-in-monoids-via-automata}
    Let $M = (M_f, M_\omega)$ be a finite $\omega$-semigroup, and $x \in M_f \cup M_\omega$. Extend $M_f$ into a monoid $M_f'$ by adjoining a distinguished neutral element $1_M$: for all $m \in M_f'$, $1_M \cdot m = m \cdot 1_M = m$. We can construct an automaton $\Acal_x$ over the alphabet $M_f'$ that accepts $m_0 m_1 \cdots \in (M_f')^\omega$ if and only if the infinite product $m_0 m_1 \cdots$ (defined as the possibly finite product of the subsequence $m_{i_0}m_{i_1}\cdots$ obtained by discarding all $1_M$ terms) equals $x$.
\end{lemma}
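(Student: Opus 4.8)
The plan is to reduce the claim to the known equivalence between finite $\omega$-semigroups and $\omega$-regular languages, and then to accommodate the adjoined neutral element $1_M$ by a routine erasing construction on B\"uchi automata; in this way the Ramsey-theoretic content of infinite products is absorbed entirely into the cited $\omega$-semigroup/automaton translations.

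Consider the \emph{evaluation morphism} $\mathsf{ev}\colon M_f^\infty \to M$ of $\omega$-semigroups: on a finite word $m_0\cdots m_k \in M_f^+$ it returns the product $m_0\cdots m_k \in M_f$, and on an infinite word $m_0 m_1\cdots \in M_f^\omega$ it returns the infinite product $\pi(m_0, m_1, \ldots) \in M_\omega$. That $\mathsf{ev}$ is a morphism of $\omega$-semigroups is immediate from the associativity axioms: it is a semigroup morphism on finite words; preservation of the mixed product, $\mathsf{ev}(u)\,\mathsf{ev}(w) = \mathsf{ev}(uw)$ for $u \in M_f^+$ and $w \in M_f^\omega$, follows from the grouping axiom~(3) and then axiom~(4); and preservation of the infinite product is again axiom~(3). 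Hence $\mathsf{ev}^{-1}(\{x\})$ is recognised by the finite $\omega$-semigroup $M$, so by the effective translations of \cite[Sec.~8,9]{perrin1995semigroups} we obtain a B\"uchi automaton $\Acal'$ over the alphabet $M_f$ that accepts exactly those $w \in M_f^\omega$ with $\pi(w) = x$. (If $x \in M_f$ this language is empty, and the construction below remains correct.)

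Now lift to the alphabet $M_f' = M_f \cup \{1_M\}$ by ignoring $1_M$; write $\mathsf{del}$ for the operation deleting every occurrence of $1_M$ from a word over $M_f'$. First, a B\"uchi automaton $\Bcal^{\mathrm{inf}}$ over $M_f'$ that, on input $w$, runs in parallel (i)~a check that $w$ has infinitely many letters different from $1_M$ and (ii)~a simulation of $\Acal'$ on $\mathsf{del}(w)$, taking the matching $\Acal'$-transition on a letter $m \ne 1_M$ and staying put on $1_M$, with the B\"uchi condition requiring an accepting state of the simulated run to be seen infinitely often; since (i) forces $\mathsf{del}(w)\in M_f^\omega$, the automaton $\Bcal^{\mathrm{inf}}$ accepts $w$ iff $\mathsf{del}(w)\in M_f^\omega$ and $\pi(\mathsf{del}(w))=x$. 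Second, a B\"uchi automaton $\Bcal^{\mathrm{fin}}$ that maintains the running product in $M_f'$ of the letters read (reading $1_M$ leaves it unchanged), nondeterministically guesses after some prefix that no further non-$1_M$ letter occurs, checks that the current running product equals $x$, and then moves to an accepting sink with a self-loop on $1_M$ and no other outgoing transition; so $\Bcal^{\mathrm{fin}}$ accepts $w$ iff $\mathsf{del}(w)$ is a non-empty finite word with product $x$. Let $\Acal_x$ be the disjoint union of $\Bcal^{\mathrm{inf}}$ and $\Bcal^{\mathrm{fin}}$. Unwinding the definition of the infinite product $m_0 m_1\cdots$ in the statement --- the product of the subsequence surviving deletion of the $1_M$ terms, which is the empty product $1_M \notin M_f \cup M_\omega$ if no term survives, an element of $M_f$ if finitely many (but at least one) survive, and an element of $M_\omega$ if infinitely many survive --- shows that $\Acal_x$ accepts $w$ precisely when that product equals $x$. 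All steps are effective.

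The only delicate point is the acceptance condition of the erasing construction: a naive simulation that simply idles on $1_M$ would wrongly accept an input with only finitely many non-$1_M$ letters whenever the simulated $\Acal'$-run happens to be parked in an accepting state. Conjoining the ``infinitely many non-$1_M$ letters'' condition in $\Bcal^{\mathrm{inf}}$, and routing the finite case through $\Bcal^{\mathrm{fin}}$, is exactly what repairs this; with that in place, correctness is a mechanical verification, and I expect no genuine obstacle beyond such bookkeeping.
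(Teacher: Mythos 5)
Your proof is correct and follows the route the paper intends: the lemma is stated without proof, to be read off from the cited equivalence between finite $\omega$-semigroups and B\"uchi automata, and your evaluation-morphism argument plus the $1_M$-erasing construction (with the careful separation of the finitely-many versus infinitely-many non-identity-letter cases, which is the one genuinely delicate point) supplies exactly the intended bookkeeping. No gaps.
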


\subsection{Uniformly recurrent words}
\label{sec:uniformly-recurrent}
For a word $\alpha \in \Sigma^\omega$, the set $\Lcal(\alpha) = \{u : u \text{ is a factor of } \alpha\}$ is called the \emph{(factor) language} of~$\alpha$. 
The factor complexity function $p_\alpha$ computes the number of factors of $\alpha$ of a given length~$n$. 
For example, if $\alpha$ is Sturmian, then $p_\alpha(n) = n+1$ for all $n$ (see e.g.\ \cite{fogg2002substitutions}).
%if $\alpha$ is an Arnoux-Rauzy word over a $d$-letter alphabet, then $p_\alpha(n) = n(d-1)+1$ for all $n$.

For an infinite word $\alpha$ and $l \ge 0$, denote by $R_\alpha(l)$ the smallest $r \in \nat \cup \{\infty\}$ such that every factor of $\alpha$ of length $l$ is a factor of every factor of $\alpha$ of length $r$.
We call $R_\alpha$ the \emph{recurrence function} of $\alpha$.
A word $\alpha \in \Sigma^\omega$ is said to be \emph{uniformly recurrent} if $R_\alpha(l) \in \nat$ for every $l \in \nat$.
That is, every $u \in \Sigma^l$ either does not occur in $\alpha$, or occurs infinitely often with bounded gaps.
%For characteristic Sturmian words, the value of $R_\alpha(l)$ can be computed exactly from the continued fraction expansion of the slope.
\begin{comment}
    A word $\alpha \in \Sigma^\omega$ is said to be \emph{almost periodic} if for every $u \in \Sigma^+$ of length $\ell$, there exists a return time $R = R_\alpha(u)$ such that either:
\begin{itemize}
    \item The word $u$ is not a factor of $\alpha(R, \infty)$, or
    \item For every index $j$, the word $u$ is a factor of $\alpha(j, j+R)$.
\end{itemize}
If, moreover, all factors of $\alpha$ occur infinitely often (i.e., if $u$ does not occur infinitely often, then $R = 0$), then $\alpha$ is said to be \emph{uniformly recurrent}. 
\end{comment}
It is clear from the definition that for uniformly recurrent $\alpha$, the value of $R_\alpha(l)$ only depends on $\Lcal(\alpha)$.
We record the following, which follows by brute enumeration.
\begin{lemma}
    \label{lemma:language-to-return-time}
    Let $\alpha \in \Sigma^\omega$ be uniformly recurrent. 
    Suppose we have access to an oracle that, given $u \in \Sigma^*$, checks whether $u \in \Lcal(\alpha)$.
    Then we can effectively compute $R_\alpha(l)$.
\end{lemma}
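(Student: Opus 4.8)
The plan is to give an algorithm that, on input $l \in \nat$, searches for $R_\alpha(l)$ by brute-force enumeration, using the oracle to determine the finite set of length-$l$ and candidate-length-$r$ factors of $\alpha$.

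First I would use the oracle to compute the set $F_l \coloneqq \Lcal(\alpha) \cap \Sigma^l$ of factors of $\alpha$ of length $l$: simply query the oracle on each of the finitely many words in $\Sigma^l$ and collect those it accepts. Since $\alpha$ is infinite, $F_l$ is non-empty. Then I would iterate over $r = l, l+1, l+2, \ldots$; for each such $r$, first compute $F_r \coloneqq \Lcal(\alpha) \cap \Sigma^r$ via the oracle, and then check the combinatorial condition ``every $u \in F_l$ is a factor of every $w \in F_r$,'' which is decidable since both sets are finite and membership of $u$ as a factor of $w$ is a finite check. Return the first $r$ for which the condition holds; this is exactly $R_\alpha(l)$ by definition. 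Termination is guaranteed because $\alpha$ is uniformly recurrent, so $R_\alpha(l) \in \nat$, and the condition defining $R_\alpha(l)$ is upward-closed in $r$ (if every length-$l$ factor appears in every length-$r$ factor, the same holds for every length-$r'$ factor with $r' \ge r$, since any length-$r'$ factor contains a length-$r$ factor as a prefix), so the search halts at the least such $r$.

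One small subtlety worth spelling out is that the oracle is only assumed to decide membership in $\Lcal(\alpha)$; it does not a priori tell us when to stop, so correctness of termination rests entirely on the standing hypothesis that $\alpha$ is uniformly recurrent together with the monotonicity observation above. No step presents a real obstacle: the argument is exactly the ``brute enumeration'' indicated in the statement, and the only thing to be careful about is confirming the upward-closure of the defining condition so that the minimal $r$ is reached by a simple increasing search rather than requiring a more clever certificate of optimality.
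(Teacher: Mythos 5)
Your proof is correct and is exactly the ``brute enumeration'' the paper has in mind: query the oracle to list the finitely many factors of each candidate length, and return the least $r$ for which every length-$l$ factor occurs in every length-$r$ factor, with termination guaranteed by uniform recurrence. The upward-closure observation is a nice sanity check but not even needed, since an increasing search already returns the minimal satisfying $r$, which is $R_\alpha(l)$ by definition.
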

\begin{comment}
    \begin{proof}
Given $u$, $R_\alpha(u)$ is the smallest $R$ such that all words $v$ of length $R$ in $L_\alpha$ contain $u$ as a factor. It is well-defined courtesy almost-periodicity, and can be found by brute enumeration, consulting the $L_\alpha$-oracle.
\end{proof}
\end{comment}
Sem\"enov \cite{semenov1984logical} gave an algorithm\footnote{Sem\"enov's result applies to the more general family of \emph{effectively almost-periodic words} \cite{semenov-modern}.} for determining whether a given deterministic automaton $\Acal$ accepts a given uniformly recurrent word~$\alpha$, which is represented by (A) an oracle computing $\alpha(n)$ on input $n$ and (B) an oracle computing an upper bound $\overline{R}_\alpha(l)$ on $R_\alpha(l)$ given $l$.
\begin{theorem}[Semënov]
    Let $\mathcal{A}$ be an automaton and $\alpha$ be a uniformly recurrent word represented by the oracles (A-B). 
    We can effectively compute $M \in \nat$ such that a state $q$ of $\Acal$ appears infinitely often in $\Acal(\alpha)$ if and only if it appears in $\Acal(\alpha)[M,2M)$.
\end{theorem}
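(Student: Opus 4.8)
The plan is to reduce the claim to the computation of two numbers and to extract those from the transition monoid of $\mathcal{A}$ together with the recurrence function of $\alpha$. Since $\mathcal{A}$ is deterministic, its run $r = \mathcal{A}(\alpha)$ satisfies $r(k) = \tau(\alpha[0,k))(\qinit)$, where $\tau(w) \colon Q \to Q$ is the transition map of the finite word $w$ and $\qinit$ is the initial state; the maps $\tau(w)$ form the finite transition monoid $\mathcal{M}$, and $\tau(\alpha[0,k{+}1)) = \tau(\alpha(k)) \circ \tau(\alpha[0,k))$. Let $\mathrm{Inf}(r) \subseteq Q$ be the set of states occurring infinitely often in $r$. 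It suffices to compute (i) a \emph{stabilisation bound} $N_1$ with $r(k) \in \mathrm{Inf}(r)$ for all $k \ge N_1$, and (ii) a \emph{recurrence bound} $N_2$ such that every window $r[k,k{+}N_2)$ with $k \ge N_1$ already contains all of $\mathrm{Inf}(r)$. Then $M \coloneqq \max(N_1,N_2)$ works: $[M,2M)$ lies beyond $N_1$, so every state it visits is in $\mathrm{Inf}(r)$; and it has length $M \ge N_2$ and begins at $M \ge N_1$, so its prefix $r[M,M{+}N_2)$ already exhausts $\mathrm{Inf}(r)$.

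Both bounds rest on two ingredients. First, the oracles (A)--(B) let us decide membership in $\Lcal(\alpha)$ --- a word $w$ of length $\ell$ is a factor of $\alpha$ iff it occurs in $\alpha[0,\overline{R}_\alpha(\ell))$, since by definition every length-$\ell$ factor of $\alpha$ occurs in the length-$R_\alpha(\ell)$ prefix of $\alpha$ --- and hence (as in Lemma~\ref{lemma:language-to-return-time}) compute $R_\alpha$ and, for every $\ell$, the length-$\ell$ factors of $\alpha$ together with their transition maps. Second, by uniform recurrence every factor of $\alpha$ recurs within $R_\alpha(\ell)$ positions of any given position, so any bounded-length ``event'' that is realisable at all is realised within computably many steps. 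For $N_1$: the image $\mathrm{Im}(\tau(\alpha[0,k)))$ can only shrink with $k$, as $\mathrm{Im}(\tau(\alpha[0,k{+}1))) = \tau(\alpha(k))(\mathrm{Im}(\tau(\alpha[0,k))))$; it therefore reaches its minimal size after at most $|Q|$ drops, each of which --- while still possible --- is witnessed by a bounded-length factor of $\alpha$ and so occurs within computably many steps. Once this ``top-rank'' regime is entered, a finite-monoid argument via Green's relations, combined with uniform recurrence, shows that the walk $k \mapsto \tau(\alpha[0,k))$ is ``reversible'' and that every state visited from then on recurs infinitely often, yielding $N_1$. For $N_2$: inside the top-rank regime there is a single factor of $\alpha$ whose transition map drives $\tau(\alpha[0,k))$ through all of the finitely many values it ever revisits, hence through all states of $\mathrm{Inf}(r)$; being of computably bounded length, this factor reappears within computably many steps of every position $\ge N_1$.

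The main obstacle is $N_2$, equivalently the statement that past the stabilisation point each $q \in \mathrm{Inf}(r)$ recurs \emph{syndetically} in $r$ with an \emph{effective} gap bound. The coarse ``finite fibre'' picture of the run over $\alpha$ does not suffice; one must exploit the finer Green-relation structure of $\mathcal{M}$ --- in particular the groups acting within its $\mathcal{H}$-classes --- and crucially use that $\mathcal{A}$ is deterministic: since the run cannot be ``teleported'' onto a convenient reoccurrence of a recurring factor, one must argue that the relevant recurring factors already appear \emph{with} the run in the right state. By contrast, once monotonicity of the image and the availability of short witnessing factors (via uniform recurrence) are in hand, the stabilisation bound $N_1$ is routine.
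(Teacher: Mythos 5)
First, a point of reference: the paper does not prove this theorem at all — it is imported from Sem\"enov \cite{semenov1984logical}, with \cite[Chap.~3.1]{karimov-thesis} cited for the effective bounds on $M$. So your attempt can only be judged on its own merits, and on those merits it has genuine gaps. Your architecture is the right one (and matches the standard proof): reduce to a stabilisation bound $N_1$ and an effective syndetic-recurrence bound $N_2$, set $M = \max(N_1,N_2)$, and note that oracles (A--B) decide membership in $\Lcal(\alpha)$. But both computability claims on which this rests are asserted rather than proven, and they are precisely where the content of the theorem lives. For $N_1$, the step ``each drop, while still possible, is witnessed by a bounded-length factor of $\alpha$ and so occurs within computably many steps'' does not hold as stated: if the image at position $k$ is $I$ and a drop is coming, the witnessing word is $\alpha[k,j)$ for the first $j$ at which $\tau(\alpha[k,j))$ is non-injective on $I$, and nothing bounds $j-k$ a priori; conversely, a \emph{short} factor $w$ with $\tau(w)$ non-injective on $I$ recurs within $R_\alpha(|w|)$ of every position, but at those later occurrences the (bijectively transported) image is generally a different subset of $Q$ on which $\tau(w)$ may well be injective, so its recurrence does not force a drop. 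You need an argument that either bounds the position of the \emph{last} drop (e.g.\ via the minimal rank of $\tau(w)$ over $w \in \Lcal(\alpha)$ and the recurrence function evaluated at the length of a minimal-rank factor) or gives a sound stopping criterion for the simulation; neither is supplied.

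The larger gap is $N_2$ (and the tail of your $N_1$ argument, ``every state visited from then on recurs infinitely often''). You correctly identify this as the main obstacle and correctly note that determinism matters because a recurring factor need not reappear with the run in the right state — but then you only gesture at ``Green's relations'', ``reversibility'', and ``a single factor whose transition map drives $\tau(\alpha[0,k))$ through all values it ever revisits'' without establishing any of it. The statement to be proven here is exactly the Muchnik--Sem\"enov--Ushakov theorem that the paired sequence $k \mapsto (\alpha(k), r(k))$ is \emph{effectively} almost periodic whenever $\alpha$ is: once the transition maps act as permutations of the stabilised image $I_\infty$, one is looking at a group extension of a minimal subshift, every point of which is uniformly recurrent — true, but the proof of this (and, crucially, of its effective version, which must produce a computable gap bound from $R_\alpha$ and $|Q|$) is the entire theorem, not a routine finishing step. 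As written, your proposal reduces Sem\"enov's theorem to two lemmas and proves neither; the reduction itself ($M=\max(N_1,N_2)$, decidability of $\Lcal(\alpha)$, computability of $R_\alpha$ as in Lem.~\ref{lemma:language-to-return-time}) is correct but is the easy part.
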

In other words, to check whether $\Acal$ accepts $\alpha$, we simply need to run $\Acal$ on $\alpha$ for $2M$ steps, and observe the states that are visited.
See \cite[Chap.~3.1]{karimov-thesis} for effective bounds on $M$, from which we can deduce the following.
\begin{proposition}
\label{prop::semenov-and-prefix}
    Let $\mathcal{A}$ be an automaton and $\alpha$ be a uniformly recurrent word represented by the oracles (A-B).  
    We can compute $l, M$ such that for any uniformly recurrent $\beta$ with $\beta[0, M) = \alpha[0, M)$ and $R_\beta(n) = R_\alpha(n)$ for all $n \le l$, we have that $\Acal$ accepts $\alpha$ if and only if it accepts $\beta$. 
\end{proposition}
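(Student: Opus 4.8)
The plan is to derive this from Semënov's theorem together with its effective bounds \cite[Chap.~3.1]{karimov-thesis} and the determinism of $\Acal$. The one structural fact I extract from those bounds is that the threshold produced by Semënov's procedure is governed by only finitely many values of the recurrence function, at lengths bounded in terms of $\Acal$ alone: there are computable $g, f$ such that, writing $l := g(\Acal)$, every $M_0 \ge f\bigl(\Acal, R_\alpha(0), \ldots, R_\alpha(l)\bigr)$ satisfies ``a state $q$ of $\Acal$ occurs infinitely often in $\Acal(\alpha)$ iff it occurs in $\Acal(\alpha)[M_0, 2M_0)$''. (Enlarging $M_0$ is harmless: if $q$ recurs infinitely often it lies in every sufficiently long window, and otherwise it is absent from some tail, so the equivalence is preserved. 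Moreover, the required recurrence values are available from the oracles (A)--(B): a length-$n$ factor occurs in $\alpha$ iff it occurs in $\alpha[0, \overline{R}_\alpha(n))$, which yields an $\Lcal(\alpha)$-membership oracle and hence the exact $R_\alpha(n)$ by Lemma~\ref{lemma:language-to-return-time} --- or one may simply feed the upper bounds $\overline{R}_\alpha(n)$ into $f$.)

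Given $\Acal$ and $\alpha$, I would compute $l := g(\Acal)$, then $M_0 := f\bigl(\Acal, R_\alpha(0), \ldots, R_\alpha(l)\bigr)$, and output this $l$ together with $M := 2M_0$.

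For correctness, let $\beta$ be uniformly recurrent with $\beta[0, M) = \alpha[0, M)$ and $R_\beta(n) = R_\alpha(n)$ for all $n \le l$. The effective Semënov bound is a $\forall$-statement over uniformly recurrent words --- it concerns the threshold, not an algorithm, so it does not require $\beta$ to be computable --- and applied to $\beta$ it gives, with the \emph{same} $M_0$ (the defining data $\Acal, R_\beta(0), \ldots, R_\beta(l)$ coincide with those of $\alpha$), that $q$ occurs infinitely often in $\Acal(\beta)$ iff $q \in \Acal(\beta)[M_0, 2M_0)$; likewise for $\alpha$. Since $\Acal$ is deterministic (as Semënov's theorem requires), its run on a word is determined position-by-position by that word, so from $\alpha[0, 2M_0) = \beta[0, 2M_0)$ we get that the runs $\Acal(\alpha)$ and $\Acal(\beta)$ agree on positions $0, \ldots, 2M_0$; in particular $\Acal(\alpha)[M_0, 2M_0) = \Acal(\beta)[M_0, 2M_0)$. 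Hence $\Acal(\alpha)$ and $\Acal(\beta)$ have the same set of infinitely-recurring states, and acceptance of a deterministic parity (or Büchi) automaton is a function of that set; therefore $\Acal$ accepts $\alpha$ iff it accepts $\beta$.

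The main obstacle is exactly the structural fact of the first paragraph: one must read off from the proof of Semënov's theorem and the bounds in \cite[Chap.~3.1]{karimov-thesis} that the threshold depends on the recurrence data only through $R_\alpha(0), \ldots, R_\alpha(l)$ with $l$ computable from $\Acal$ --- equivalently, that Semënov's procedure consults the recurrence oracle only at a computably bounded set of lengths. Granting this, the remainder is routine bookkeeping about determinism and the shape of the acceptance condition.
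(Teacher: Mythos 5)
Your proposal is correct and follows exactly the route the paper intends: the paper gives no proof of this proposition beyond citing the effective bounds on Semënov's threshold, and your argument is precisely the deduction those bounds are meant to license — the threshold depends on the recurrence data only through $R_\alpha(0),\ldots,R_\alpha(l)$ for $l$ computable from $\Acal$, so the same window applies to $\beta$, and determinism forces the runs (hence the windows, hence the infinity sets) to coincide. Your handling of the side issues (monotonicity in $M_0$, recovering exact $R_\alpha(n)$ from the oracle $\overline{R}_\alpha$ via Lemma~\ref{lemma:language-to-return-time}) is sound and correctly flags the one fact that must be read off from the cited reference.
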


\subsection{Ostrowski numeration systems}
We recall properties of a numeration system which proves to be particularly  convenient  to   describe  Sturmian words in $S$-adic terms.
Let $\eta \in (0,1) \setminus \rat$.
The \emph{continued fraction expansion} of $\eta$ is the unique sequence $(a_n)_{n \ge 1}$ of positive integers such that
\[
\eta = \cfrac{1}{a_1 + 
    \cfrac{1}{a_2 + \cdots}
}
\]
We write $\eta = [0; a_1, a_2, \ldots]$.
The \emph{convergents} $(p_n/q_n)_{n \in \nat}$ of $\eta$ are obtained by truncating the expansion at the $n$-th level. 
The numerators and denominators satisfy the recurrences $p_0 = 0$, $q_0 = 1$, $p_1 = 1$, $q_1 = a_1$, and $(p_{n+2},q_{n+2}) = a_{n+2} \cdot (p_{n+1},q_{n+1}) + (p_n,q_n)$ for all $n \ge 0$.
The convergents are the \emph{locally best approximants} of $\eta$: for every $n \in \nat$, $p \in \intg$, and $0 < q < q_n$,
\[
|q_n \eta - p_n| < \min_{p \in \nat} |q \eta - p|
\]
which implies that $\big| \eta - \frac{p_n}{q_n} \big| < \min_{p \in \nat} \big| \eta - \frac p q \big|$.
The \emph{Ostrowski numeration system in base $\eta$} is based on %\footnote{The numeration system can also be defined in terms of $|\theta_n|$; in this paper we follow \cite[Sec.\ 2.4]{initial-powers} and work with the $\theta_n$-based system.} 
the sequence $\theta_n = q_n \eta - p_n$.
For any $\chi \in [-\eta, 1-\eta]$, there exists a sequence $(b_n)_{n \ge 1}$ over $\nat$ such that (i) $0 \le b_1 < a_1$, (ii) $0 \le b_n \le a_n$ for all $n \ge 2$, (iii) for all $n$, $b_n = 0$ if $b_{n+1}=a_{n+1}$, and
\[
\chi = \sum_{n=1}^\infty b_n \theta_{n-1}.
\]
We refer to $(b_n)_{n\ge 1}$ as an \emph{Ostrowski expansion of $\chi$ in base~$\eta$}.
Conversely, every $(b_n)_{n\in\nat}$ satisfying (i-iii) is an Ostrowski expansion of some $\chi$ in base $\eta$, i.e., the infinite sum converges to a value in $[-\eta, 1-\eta]$.\footnote{To check this, observe that $\theta_n$ alternates between positive and negative; $\theta_0 = \eta$, $\theta_1 = a_1 \eta - 1$; $\theta_{n+2} = a_{n+2}\theta_{n+1} + \theta_n$.}
If $\chi \notin \intg + \eta \intg$ or if $\chi \in \intg_{\geq 1}  + \eta \intg$, then $\chi$ has a unique Ostrowski expansion in base $\eta$.
Otherwise, $\chi$ can have two expansions in base $\eta$. For more on the subject, see e.g.\ \cite{BugeaudLaurent}.

\section{\texorpdfstring{$S$-adic words}{S-adic words}}
\label{sec::sadic-general}

We now establish basic definitions and facts about $S$-adicity; see e.g.\ \cite{Berthe-Delecroix} for more  on the subject.
Let $\Sigma$ be an alphabet and $S \subseteq S(\Sigma)$ be a possibly infinite set of non-erasing substitutions.
We refer to $\alpha \in \Sigma^* \cup \Sigma^\omega$ as \emph{$S$-directed} if there exists a sequence $(\sigma_n)_{n\in\nat}$ over $S$ and a sequence of infinite words $(\alpha^{(n)})_{n\in\nat}$ such that $\alpha^{(0)} = \alpha$ and $\sigma_n(\alpha^{(n+1)}) = \alpha^{(n)}$ for all $n \in \nat$; note that each word in the sequence is the image of the subsequent one.
We say that $(\sigma_n)_{n\in\nat}$ \emph{directs} $\alpha$.
A word $\alpha \in \Sigma^* \cup \Sigma^\omega$ is called \emph{$S$-generated} if there exists a sequence $(\sigma_n, a_n)_{n\in\nat}$ over $S(\Sigma) \times \Sigma$ such that 
%\footnote{In the most general definition, each $\sigma_n$ is allowed to operate on a different alphabet, i.e., $\sigma_n \colon \Sigma_{n+1}^* \to \Sigma_n^*$ and $a_n \in \Sigma_{n+1}$ for a sequence of alphabets $(\Sigma_n)_{n\in\nat}$.}
\begin{equation}
    \label{eq::s-adic1}
    \alpha = \lim_{n \to \infty} \sigma_0 \cdots \sigma_n(a_n).
\end{equation}
We refer to $(\sigma_n, a_n)_{n\in\nat}$ as an \emph{$S$-adic expansion} of $\alpha$.
For finite or infinite $\alpha$, whenever~\eqref{eq::s-adic1} holds we say that $(\sigma_n, a_n)_{n\in\nat}$ \emph{generates}~$\alpha$.
%The sequence $(\sigma_n)_{n\in\nat}$ generates~$\alpha$ if $(\sigma_n, a_n)_{n\in\nat}$ generates $\alpha$ for some $(a_n)_{n\in\nat}$.
In both the $S$-directed and the $S$-generated settings (to which we collectively refer as \emph{$S$-adic}), $(\sigma_n)_{n\in\nat}$ is called a \emph{directive sequence}.

\begin{lemma}
\label{generated-implies-directed}
    If $\alpha \in \Sigma^* \cup \Sigma^\omega$ is generated by $s$ over $S(\Sigma)$, then it is also directed by $s$. 
    Furthermore, every $s$ over $S(\Sigma)$ directs at least one non-empty word $\beta \in \Sigma^+ \cup \Sigma^\omega$.
\end{lemma}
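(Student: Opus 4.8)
The first assertion is the more substantive one, so I would tackle it first. Suppose $(\sigma_n, a_n)_{n\in\nat}$ generates $\alpha$, i.e.\ $\alpha = \lim_{n\to\infty}\sigma_0\cdots\sigma_n(a_n)$. The natural candidate for the directing sequence of words is $\alpha^{(n)} \coloneqq \lim_{m\to\infty}\sigma_n\cdots\sigma_m(a_m)$ (so $\alpha^{(0)} = \alpha$), provided these limits exist. The key observation is that $\sigma_n$ is continuous on $\Sigma^\infty$ (it is a morphism, hence $1$-Lipschitz in the prefix metric since it is non-erasing: if $u,v$ agree on a prefix of length $j$ then $\sigma_n(u),\sigma_n(v)$ agree on a prefix of length at least $j$). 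Therefore $\sigma_n(\alpha^{(n+1)}) = \sigma_n\bigl(\lim_m \sigma_{n+1}\cdots\sigma_m(a_m)\bigr) = \lim_m \sigma_n\sigma_{n+1}\cdots\sigma_m(a_m) = \alpha^{(n)}$, which is exactly the condition for $(\sigma_n)$ to direct $\alpha$. The point requiring care is the \emph{existence} of each limit $\alpha^{(n)}$: I would argue that since $\alpha^{(0)} = \lim_m \sigma_0\cdots\sigma_m(a_m)$ exists in $\Sigma^\infty$ (by hypothesis it equals $\alpha$, not $\bot$), and $\sigma_0\cdots\sigma_n$ is a non-erasing morphism whose image prefixes ``see through'' to the prefixes of the inner words, the sequence $(\sigma_n\cdots\sigma_m(a_m))_m$ must also converge — because convergence of $(\sigma_0\cdots\sigma_m(a_m))_m$ forces stabilisation of longer and longer prefixes of $\sigma_0\cdots\sigma_n(\sigma_{n+1}\cdots\sigma_m(a_m))$, and since $\sigma_0\cdots\sigma_n$ is injective on the relevant initial segments (non-erasing), this pulls back to stabilisation of prefixes of $\sigma_{n+1}\cdots\sigma_m(a_m)$. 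One subtlety: a priori $\alpha^{(n)}$ could be a finite word rather than infinite; but $\alpha^{(0)}$ being infinite (or the general statement allowing finite $\alpha$) means the argument should be phrased so that $\alpha^{(n)}\in\Sigma^+\cup\Sigma^\omega$ in all cases, which is fine since the definition of $S$-directed permits finite words too. Actually, re-reading the definition of $S$-directed in the excerpt, it insists the $\alpha^{(n)}$ be \emph{infinite} words; so if $\alpha$ is finite one must handle that case separately or note that a finite generated word is trivially directed by padding — I would flag this and align the argument with whichever convention the paper actually uses (the safest is: if $\alpha$ is infinite, all $\alpha^{(n)}$ are infinite because a non-erasing morphism of a finite word is finite, contradiction; if $\alpha$ is finite, argue directly).

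For the second assertion — every $s = (\sigma_n)_{n\in\nat}$ over $S(\Sigma)$ directs at least one non-empty word — I would exhibit one. Build a finite word by a compactness/König's-lemma argument: for each $n$ and each letter $a$, $\sigma_0\cdots\sigma_n(a)$ is a non-empty finite word over $\Sigma$ of length $\ge n+1$ (a non-erasing morphism strictly cannot shrink, and iterated non-erasing morphisms have length at least... well, at least $1$; to get unbounded length one needs to be slightly careful, but the first letter of $\sigma_0\cdots\sigma_n(a)$ is determined and we can track it). Concretely: consider the sequence of pairs and use that $\Sigma$ is finite. For each $n$, pick $b_n \in \Sigma$ arbitrarily (say the same letter $b$ for all $n$, if that works) and look at $w_n \coloneqq \sigma_0\cdots\sigma_{n}(b)$. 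These need not converge, but we can extract: since the set of first letters is finite, infinitely many $w_n$ share a first letter; among those, infinitely many share the first two letters (once they are long enough); and so on — a standard diagonal/König argument produces a letter sequence $(a_n)$ such that $(\sigma_0\cdots\sigma_n(a_n))_n$ converges in $\Sigma^\infty$ to some $\beta \neq \bot$, i.e.\ $s$ \emph{generates} $\beta$, and then by the first part $s$ directs $\beta$. The one thing to verify is that the extracted limit is non-empty: but every $\sigma_0\cdots\sigma_n(a_n)$ has length $\ge 1$, so the limit is a non-empty finite word or an infinite word, hence in $\Sigma^+\cup\Sigma^\omega$.

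\textbf{Main obstacle.} I expect the crux to be the existence of the limits $\alpha^{(n)}$ in the first part — specifically, justifying that convergence of the outermost sequence $\sigma_0\cdots\sigma_m(a_m)$ propagates inward to convergence of $\sigma_n\cdots\sigma_m(a_m)$. The clean way is: (i) $\sigma_0\cdots\sigma_n$ is a non-erasing morphism, hence there is an integer $k_n \ge 1$ with $|\sigma_0\cdots\sigma_n(u)| \ge k_n|u|$ for all $u$ (take $k_n$ to be the minimum image length, which is $\ge 1$; one can even get $k_n$ growing, but $\ge 1$ suffices); (ii) more importantly, if two words $u,v$ differ first at position $j$, then $\sigma_0\cdots\sigma_n(u)$ and $\sigma_0\cdots\sigma_n(v)$ differ at some position $\le |\sigma_0\cdots\sigma_n(u(0)\cdots u(j))| < \infty$, and conversely they agree on a prefix encompassing the images of $u(0)\cdots u(j-1)$; hence $u_m \coloneqq \sigma_{n+1}\cdots\sigma_m(a_m)$ is Cauchy iff $\sigma_0\cdots\sigma_n(u_m)$ is Cauchy in the sense needed, and the latter is a tail of the given convergent sequence (after reindexing $\sigma_0\cdots\sigma_m(a_m) = (\sigma_0\cdots\sigma_n)(u_m)$). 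Assembling this carefully, while keeping track of the finite-versus-infinite word dichotomy, is the only place real care is needed; everything else is bookkeeping with the definitions.
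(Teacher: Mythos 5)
There is a genuine gap in your first part. You define $\alpha^{(n)} = \lim_{m\to\infty}\sigma_n\cdots\sigma_m(a_m)$ and argue these limits exist because convergence of the outer sequence ``pulls back'' through $\sigma_0\cdots\sigma_n$, justified by the claim that a non-erasing morphism is ``injective on the relevant initial segments.'' Non-erasing does not imply injective, and the limits genuinely need not exist. Concretely, take $\Sigma=\{a,b\}$, $\sigma_0(a)=\sigma_0(b)=ab$, and for $n\ge 1$ let $\sigma_n(a)=ba$, $\sigma_n(b)=ab$, with $a_m=a$ for all $m$. Then $\sigma_0\cdots\sigma_m(a_m)=(ab)^{2^m}\to(ab)^\omega$, so the sequence generates the infinite word $(ab)^\omega$, yet the first letter of $\sigma_1\cdots\sigma_m(a)$ alternates between $a$ and $b$ with the parity of $m$, so $\lim_m\sigma_1\cdots\sigma_m(a_m)$ does not exist. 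The conclusion of the lemma still holds here, but not via your construction.

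The repair is to give up on full limits: the paper instead proves the stronger statement that any \emph{accumulation point} $\alpha$ of $U_0=\{\sigma_0\cdots\sigma_n(a_n):n\ge 0\}$ is directed by $s$, building $\alpha^{(m+1)}$ at each stage by extracting, via compactness of $\Sigma^\infty$, a convergent subsequence of the inner words $\sigma_{m+1}\cdots\sigma_{n_i}(a_{n_i})$ whose images under $\sigma_m$ converge to $\alpha^{(m)}$; continuity of $\sigma_m$ (which you did correctly identify) then gives $\alpha^{(m)}=\sigma_m(\alpha^{(m+1)})$. Your second part is essentially right in spirit — compactness yields an accumulation point $\beta$ of $\{\sigma_0\cdots\sigma_n(a_n)\}$ — but note that a diagonal extraction gives only a convergent \emph{subsequence}, i.e.\ an accumulation point, not a choice of letters making the whole sequence converge; this is precisely why the accumulation-point version of the first part is the statement you actually need to invoke.
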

\begin{proof}
We will first prove a slightly more general version of the first statement.
Let $(a_n)_{n \in \nat}$ be a sequence of letters and $U_m = \{\sigma_m \cdots \sigma_n(a_n) : n \ge m\}$ for all $m \in \nat$.
Suppose $\alpha$ is an accumulation point of $U_0$, which subsumes the case of $\alpha$ being generated by $(\sigma_n,a_n)_{n\in\nat}$.
We will inductively prove the existence of $(\alpha^{(m)})_{m \in \mathbb{N}}$ such that $\alpha^{(0)} = \alpha$, $\alpha^{(m)} = \sigma_m\left(\alpha^{(m+1)}\right)$, and $\alpha^{(m)}$ is an accumulation point of $U_m$ for all $m$.
The base case is immediate.

For the inductive step, suppose we have constructed $\alpha^{(0)},\ldots,\alpha^{(m)}$ with the properties above. 
Write $u_{m, n} = \sigma_m \cdots \sigma_n(a_n)$, and observe that by the induction hypothesis, $\alpha^{(m)}$ is the limit of some sequence $(\sigma_m(u_{m+1, n_i}))_{i \in \mathbb{N}}$. 
By compactness, the sequence $(u_{m+1, n_i})_{i \in \nat}$ itself has an infinite subsequence $(u_{m+1, k_{j}})_{j \in \mathbb{N}}$ that converges. 
We choose $\alpha^{(m+1)}$ to be the limit, which is an accumulation point of $U_{m+1}$. 
By the continuity of $\sigma_m \colon \Sigma^* \cup \Sigma^\omega \to \Sigma^* \cup \Sigma^\omega$,
\[
\lim_{j \rightarrow\infty}\sigma_m(u_{m+1, k_{j}}) 
= \sigma_m\
\bigg(
\lim_{j \rightarrow\infty}u_{m+1, k_{j}}
\bigg)
\]
which implies that $\alpha^{(m)} = \sigma_m(\alpha^{(m+1)})$.

To prove the second claim, choose an arbitrary sequence $(a_n)_{n\in\nat}$ of letters and let $\beta$ be an accumulation point of $\{\sigma_0\cdots\sigma_n(a_n) \colon n \in\nat\}$.
Apply the preceding argument.
\end{proof}
The converse of the lemma above, however is not true: take $s$ to be the sequence of identity morphisms.
We next study various special classes of $S$-adic words with which we will work.
\begin{definition}
    A sequence $(\sigma_n)_{n\in\nat}$ over $S(\Sigma)$ is weakly primitive if for every $n$ there exists $m \ge n$ such that $\sigma_n\cdots\sigma_m$ is positive, i.e., for every $b,c \in \Sigma$, $b$ appears in $\sigma_n\cdots\sigma_m(c)$.
\end{definition}
If $(\sigma_n)_{n\in\nat}$ is weakly primitive, then we can compute a sequence $(k_n)_{n\in\nat}$ of increasing integers with $k_0 = 0$ such that $\sigma_{k_n} \cdots \sigma_{k_{n+1}-1}$ is positive for all $k$.
Consequently, for any sequence of letters $(a_n)_{n\in\nat}$ we have that $\lim_{n \to \infty} |\sigma_0 \cdots \sigma_n(a_n)| = \infty$; this is known as being \emph{everywhere growing}.
A word directed by a weakly primitive sequence is uniformly recurrent \cite{durand2003corrigendum}.

The directive sequences with which we will work generate and direct a unique word due to left-properness.
\begin{lemma}
    \label{lem::left-proper}
    Let $(\sigma_n)_{n\in\nat} \in S(\Sigma)^\omega$ be weakly primitive with infinitely many left-proper terms.
    Then there exists unique $\alpha \in \Sigma^\omega$ such that for any $(a_n)_{n\in\nat} \in \Sigma^\omega$, $(\sigma_n,a_n)_{n\in\nat}$ generates $\alpha$.
\end{lemma}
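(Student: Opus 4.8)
The plan is to construct the candidate word $\alpha$ directly, as the common limit of a prefix-increasing family of finite words read off from the left-proper terms of the directive sequence, and then to check that every $S$-adic expansion $(\sigma_n,a_n)_{n\in\nat}$ built on this directive sequence converges to exactly this $\alpha$; uniqueness will then be a one-line consequence of the fact that a sequence of words has at most one limit.

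First I would isolate two elementary facts. From weak primitivity and the everywhere-growing property recorded just after the definition, one gets that $\ell(n) := \min_{b\in\Sigma}|\sigma_0\cdots\sigma_n(b)| \to \infty$ (a bounded subsequence would, by picking a length-minimising letter at each index, yield a sequence $(a_n)_{n\in\nat}$ violating everywhere-growing). From left-properness: if $\sigma_m$ is left-proper with common initial letter $c_m$, then for every \emph{non-empty} $v$ the word $\sigma_m(v)$ begins with $c_m$ (here non-erasingness is what guarantees the relevant inner words are non-empty), hence $\sigma_0\cdots\sigma_m(v)$ begins with the finite word $p_m := \sigma_0\cdots\sigma_{m-1}(c_m)$, with the empty composition read as the identity when $m=0$.

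Now let $T=\{m : \sigma_m \text{ is left-proper}\}$, which is infinite by hypothesis. For $m<m'$ in $T$, write $p_{m'} = \sigma_0\cdots\sigma_{m-1}\bigl(\sigma_m(\sigma_{m+1}\cdots\sigma_{m'-1}(c_{m'}))\bigr)$; the inner argument is non-empty, so $\sigma_m(\cdot)$ begins with $c_m$, and applying the morphism $\sigma_0\cdots\sigma_{m-1}$ shows $p_m$ is a prefix of $p_{m'}$. Since $|p_m|\ge \ell(m-1)\to\infty$ along $T$, the family $\{p_m : m\in T\}$ is prefix-increasing with unbounded lengths and therefore determines a unique $\alpha\in\Sigma^\omega$ having every $p_m$ as a prefix. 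To finish existence, fix any $(a_n)_{n\in\nat}\in\Sigma^\omega$ and any $j$; choose $m\in T$ with $|p_m|\ge j$. For every $n>m$ we have $\sigma_0\cdots\sigma_n(a_n)=\sigma_0\cdots\sigma_m(\sigma_{m+1}\cdots\sigma_n(a_n))$, which begins with $p_m$, so its length-$j$ prefix equals $p_m[0,j)=\alpha[0,j)$. As $j$ was arbitrary and $|\sigma_0\cdots\sigma_n(a_n)|\to\infty$ (which rules out the eventually-constant case of the definition of $\lim$), this is precisely the statement that $\lim_{n\to\infty}\sigma_0\cdots\sigma_n(a_n)=\alpha$, i.e.\ $(\sigma_n,a_n)_{n\in\nat}$ generates $\alpha$. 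For uniqueness, if $\alpha$ and $\alpha'$ both have the stated property, apply it to one fixed sequence $(a_n)_{n\in\nat}$: both are limits of the same sequence of words, hence equal.

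I do not expect a serious obstacle here: the whole argument is a careful exploitation of left-properness to obtain prefix-compatibility of the words $p_m$, together with everywhere-growing to force both $|p_m|\to\infty$ along $T$ and convergence to a genuinely infinite word. The only points needing attention are the bookkeeping of composition boundaries (the $m=0$ edge case, and ensuring the intermediate words to which $\sigma_m$ is applied are non-empty so that ``begins with $c_m$'' is legitimate), and being explicit that it is the infinitude of $T$ — not merely its non-emptiness — that makes $|p_m|$ exceed any prescribed $j$.
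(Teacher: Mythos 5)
Your proof is correct and follows essentially the same route as the paper: both construct $\alpha$ as the common limit of the prefix-increasing words obtained by pushing the common initial letter of each left-proper term through the preceding composition, and then observe that any $\sigma_0\cdots\sigma_n(a_n)$ with $n$ past a left-proper index must begin with that prefix. Your version just makes explicit the bookkeeping (non-emptiness of intermediate images, unboundedness of $|p_m|$ via everywhere-growing) that the paper leaves implicit.
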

\begin{proof}
	Let $(k_n)_{n\in\nat}$ be an increasing sequence over $\nat$ such that $k_0 = 0$ and $\sigma_{k_n}$ is left-proper for $n > 0$. 
	Further let $b_n \in \Sigma$ be such that all images of $\sigma_{k_n}$ begin with $b_n$. 
	Then $\alpha = \lim_{n\to\infty} \sigma_0 \cdots \sigma_{k_n-1}(a_n)$.
	This limit exists and is infinite because the terms of the sequence are strict prefixes of one another. (In fact, in the parlance of Sec.~\ref{sec::structure-theorems}, the expansion $(\sigma_{k_n}\cdots \sigma_{k_{n+1}-1}, a_{k_n+1})_{n\in\nat}$ is \emph{congenial}.)
	Now take $(\sigma_n, a_n)_{n\in\nat}$, where $(a_n)_{n\in\nat}$ is arbitrary.
	Then $\sigma_0 \cdots \sigma_n(a_n)$ must agree with $\alpha$ on the first $| \sigma_0 \cdots \sigma_{k_{m}-1}(a_m)|$ letters, where $k_m$ is maximal with the property that $k_m \le n$.
	Since $k_m$ becomes arbitrarily large as $n \to \infty$, we have that $\lim_{n\to \infty} \sigma_0 \cdots \sigma_n(a_n) = \alpha$.
\end{proof}

We next illustrate the concepts above through examples.

\subsection{Sturmian and Arnoux-Rauzy words}
\label{subsection:Sturmian}
Let $\eta \in (0, 1) \setminus \mathbb{Q}$. 
The \emph{characteristic} Sturmian word with slope $\eta$ is defined by
$$
\alpha_\eta(n) = \lfloor (n+2) \eta \rfloor - \lfloor (n+1) \eta\rfloor \in \{0,1\}
$$
for $n \in \mathbb{N}$. For instance, the Fibonacci word has $\eta = 1/\phi^2$, where $\phi$ is the golden ratio. Characteristic Sturmian words are $S$-directed for $S = \{\lambda_0, \lambda_1\}$. The substitution $\lambda_i$ maps $i$ to $i$, and the other letter $j$ to $ij$, i.e., it inserts $i$ to the left.

The word $\alpha_\eta$ is intimately connected to the continued fraction expansion of $\eta$.
Suppose $\eta = [0; 1+a_1, a_2, \ldots]$. We then have that $\alpha_\eta$ is the unique word directed by the sequence 
\[
\underbrace{\lambda_0,\ldots,\lambda_0}_{\textrm{$a_1$ times}},\underbrace{\lambda_1,\ldots,\lambda_1}_{\textrm{$a_2$ times}},\underbrace{\lambda_0,\ldots,\lambda_0}_{\textrm{$a_3$ times}},\underbrace{\lambda_1,\ldots,\lambda_1}_{\textrm{$a_4$ times}}, \ldots
\]
which we denote by $s_\eta$.
For example, $1/\phi^2$ has the continued fraction expansion $[0; 2, 1, 1, \ldots]$, and hence the Fibonacci word is directed by $(\lambda_0\lambda_1)^\omega$. 
Observe that $\lambda_0,\lambda_1$ are left-proper.
Moreover, for every $k,m > 0$, $\lambda_0^k\lambda_1^m$ is positive and hence $s_\eta$ is weakly primitive.
%For such sequences, the notions of direction and generation coincide.

A (general) Sturmian word $\alpha$ of slope $\eta \in (0, 1) \setminus \mathbb{Q}$ and intercept $\chi \in [-\eta, 1-\eta]$ is given by one of the following:
\begin{align}
    \alpha(n) &= \lfloor (n+2)\eta + \chi \rfloor - \lfloor (n+1)\eta + \chi \rfloor,  \mbox{ for all } n \\
    \alpha(n) &= \lceil (n+2)\eta + \chi \rceil - \lceil (n+1)\eta + \chi \rceil, \mbox{ for all } n.
\end{align}
 % with $\chi \in (-\eta, 1-\eta]$. 
Sturmian words are uniformly recurrent, and are equivalently characterised by their factor complexity $p(n) = n+1$, which is the lowest among non-periodic words.
A Sturmian word $\alpha$ with slope $\eta$ and intercept $\chi$ satisfies $\Lcal(\alpha) = \Lcal(\alpha_{\eta})$.
That is, the language of a Sturmian word only depends on its slope.

Sturmian words are not necessarily $S$-adic for $S$ defined above.
However, they are $S$-adic for $S = \{\lambda_0,\lambda_1,\rho_0,\rho_1\}$ where the substitution $\rho_i$ inserts $i$ to the right, i.e, maps $i$ to $i$ and $j$ to $ji$. 
Let $\alpha$ be a Sturmian word with slope $\eta = [0; a_1 + 1, a_2,\ldots]$ and intercept $\chi$.
Then there exists (by \cite[Prop.\ 2.7, also see remark after Thm.\ 2.10]{initial-powers}) an Ostrowski expansion $(b_n)_{n\in\nat}$ of $\chi$ in base $\eta$ such that $\alpha$ is directed by a sequence $(\tau)_{n\in\nat}$ where
\begin{equation}
\tau_n = \lambda_0^{b_{2n+1}}\rho_0^{a_{2n+1} - b_{2n+1}} \lambda_1^{b_{2n+2}}\rho_1^{a_{2n+2} - b_{2n+2}}.
\end{equation} 
Conversely, the rules of Ostrowski expansion guarantee that each $(\tau_n)_{n\in\nat}$ obtained from the expansions $(a_n)_{n\in\nat}$ and $(b_n)_{n\in\nat}$ as above is weakly primitive and directs a Sturmian word.
Observe that we can unpack each $\tau_n$ to obtain a \emph{bona fide} directive sequence over $S$, and every morphism in $S$ is left-proper.

In summary, Sturmian words can characterised as the set of all $\alpha$ generated by some weakly primitive $(\sigma_n)_{n\in\nat} \in \{\lambda_0,\lambda_1,\rho_0,\rho_1\}$ containing infinitely many left-proper terms.
Arnoux-Rauzy words generalise Sturmian words to larger alphabets.
Let $\Sigma = \{0,\ldots,d-1\}$.
For distinct $i,j \in \Sigma$, define $\lambda_i(j) = ij$ and $\rho_i(j) = ji$, and for $i=j \in \Sigma$, let $\lambda_i(j)=\lambda_j(i) = i$.
Observe that each $\lambda_i$ is left-proper.
Then a word $\alpha \in \Sigma^\omega$ is Arnoux-Rauzy if and only if it is generated by a weakly primitive $(\sigma_n)_{n\in\nat} \in \{\lambda_0,\ldots,\lambda_{d-1},\rho_0,\ldots,\rho_{d-1}\}$ containing infinitely many left-proper terms (see, e.g., \cite[Sec.~2.3, Thm.~4.12, Sec.~5]{episturmian-survey}).
Arnoux-Rauzy words have factor complexity $p(n) = (n-1)d+1$, but this is not a characterisation for $d > 2$.
Observe that Sturmian words are precisely the Arnoux-Rauzy words over a two-letter alphabet.
Other $S$-adic generalisations of Sturmian words include episturmian words \cite{episturmian-survey} and dendric shifts \cite{BFFLPR:2015,dendric-ternary}.

\section{\texorpdfstring{Structure theorems for $S$-adic words}{Structure theorems for S-adic words}}
\label{sec::structure-theorems}

Let $s$ be a directive sequence over $S(\Sigma)$ for an alphabet~$\Sigma$.
In this section we will show that any word directed by $s$ can be written as a product of \emph{congenial} words generated by~$s$.
%Intuitively, such words are generated by $(\sigma_n, a_n)_{n\in\nat}$ such that $\sigma_0\cdots \sigma_n(a_n)$ monotonically converges to the limit.
Congenial expansions are one of the main novelties of this paper, and form the combinatorial cornerstone of our analysis of the automaton acceptance problem for $S$-adic words.

\begin{definition}
\label{congenial-definition}
    Let $\Sigma$ be an alphabet.
    A sequence $((\sigma_n,a_n))_{n\in\nat}$ over $S(\Sigma)\times \Sigma$ is congenial if $\sigma_{n+1}(a_{n+1})$ begins with $a_n$ for all $n$. 
    A word $\alpha \in \Sigma^+ \cup \Sigma^\omega$ is $s$-congenial for a directive sequence~$s$ if  $\alpha = \lim_{n \to \infty} \sigma_0  \cdots \sigma_n(a_n)$ for a congenial sequence $((\sigma_n, a_n))_{n\in \mathbb{N}}$ that augments~$s$.
\end{definition}
The most desirable property of a congenial sequence $(\sigma_n,a_n)_{n\in\nat}$ is that $\lim_{n\to \infty} \sigma_0\cdots \sigma_n(a_n)$ is guaranteed to exist, and has every $\sigma_0\cdots \sigma_n(a_n)$ as a prefix.
The following lemma captures this property, and is proven via a straightforward induction.

\begin{lemma}
\label{lem::congenial-limit}
    Let $((\sigma_n, a_n))_{n\in\nat}$ be congenial, and for $n \ge 1$, $v_n \in \Sigma^*$ be such that $\sigma_{n}(a_{n}) =a_{n-1} v_{n}$.
    For all $n$, $\sigma_0\cdots\sigma_{n}(a_n) = \sigma_0(a_0) \cdot \sigma_0(v_1) \cdot \sigma_0\sigma_1(v_2) \cdots (\sigma_0\cdots \sigma_{n-1}(v_n))$
\end{lemma}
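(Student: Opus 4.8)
The plan is to prove the identity by induction on $n$, unwinding one layer of the composition at a time and using the congeniality hypothesis $\sigma_{n}(a_{n}) = a_{n-1} v_{n}$.

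For the base case $n = 0$, the right-hand side is simply $\sigma_0(a_0)$, which matches the left-hand side $\sigma_0(a_0)$; there are no $v_i$ terms. (For $n = 1$, we have $\sigma_0\sigma_1(a_1) = \sigma_0(\sigma_1(a_1)) = \sigma_0(a_0 v_1) = \sigma_0(a_0)\sigma_0(v_1)$, using that $\sigma_0$ is a morphism, which illustrates the mechanism.)

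For the inductive step, suppose the claimed identity holds for $n$, i.e.\ $\sigma_0\cdots\sigma_n(a_n) = \sigma_0(a_0)\cdot\sigma_0(v_1)\cdot\sigma_0\sigma_1(v_2)\cdots(\sigma_0\cdots\sigma_{n-1}(v_n))$. I would then compute $\sigma_0\cdots\sigma_{n+1}(a_{n+1})$ by first applying the outermost block differently: write $\sigma_0\cdots\sigma_{n+1}(a_{n+1}) = (\sigma_0\cdots\sigma_n)\big(\sigma_{n+1}(a_{n+1})\big)$. By congeniality, $\sigma_{n+1}(a_{n+1}) = a_n v_{n+1}$, so this equals $(\sigma_0\cdots\sigma_n)(a_n v_{n+1}) = (\sigma_0\cdots\sigma_n)(a_n)\cdot(\sigma_0\cdots\sigma_n)(v_{n+1})$, since $\sigma_0\cdots\sigma_n$ is a composition of morphisms and hence a morphism. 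Now apply the induction hypothesis to the first factor $(\sigma_0\cdots\sigma_n)(a_n)$, and the second factor is exactly the new term $(\sigma_0\cdots\sigma_{n-1}\sigma_n(v_{n+1}))$ appended at the end, which completes the step.

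This is entirely routine, so there is no real obstacle; the only point requiring a word of care is ensuring that the telescoping is done from the \emph{inside out} (peeling off $\sigma_{n+1}$ applied to $a_{n+1}$) rather than from the outside, and noting that $v_n$ may be empty (when $\sigma_n(a_n) = a_{n-1}$), in which case the corresponding factor $\sigma_0\cdots\sigma_{n-1}(v_n)$ is the empty word and the identity still holds. One should also observe that the lemma does not assert convergence of the limit; that consequence — every $\sigma_0\cdots\sigma_n(a_n)$ is a prefix of $\sigma_0\cdots\sigma_{n+1}(a_{n+1})$, and hence the limit exists in $\Sigma^+\cup\Sigma^\omega$ — follows immediately from the displayed formula, since passing from $n$ to $n+1$ only appends the word $\sigma_0\cdots\sigma_n(v_{n+1})$ on the right.
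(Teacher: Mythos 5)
Your induction is correct and is exactly the "straightforward induction" the paper invokes (the paper gives no further detail): peel off $\sigma_{n+1}(a_{n+1}) = a_n v_{n+1}$ via congeniality, use that $\sigma_0\cdots\sigma_n$ is a morphism, and apply the induction hypothesis. Your closing remarks on empty $v_n$ and on the prefix/convergence consequence match the paper's intended use of the lemma.
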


%We will show in Sec.~\ref{sec::aut-acceptance} that, for any $\omega$-regular $L$ and finite $S$, the set of all congenial $S$-adic expansions that define a word in $L$ is $\omega$-regular.
%To apply this result to $S$-adic words we need to argue that it suffices to consider congenial expansions.
For a directive sequence~$s$, denote by $\mathsf{congenials}_s$ the set of all $s$-congenial $\alpha \in \Sigma^\omega$.
We next show that this set is finite.

\begin{lemma}
\label{congenial-existence-finite}
Let $\Sigma$ be an alphabet and $s = (\sigma_n)_{n \in \mathbb{N}}$ be a directive sequence over $S(\Sigma)$.
There exist at least one and at most $|\Sigma|$ congenial expansions of the form $(\sigma_n,a_n)_{n\in\nat}$, and hence $1 \le |\mathsf{congenials}_s| \le |\Sigma|$.
\end{lemma}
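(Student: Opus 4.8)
The claim has two parts: existence of at least one congenial expansion augmenting $s$, and the bound of at most $|\Sigma|$ such expansions (hence the same bounds on $|\mathsf{congenials}_s|$, once we know distinct congenial expansions over a fixed $s$ may still collapse to the same limit, so $|\mathsf{congenials}_s| \le |\Sigma|$ follows a fortiori).

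For the upper bound, I would argue that a congenial expansion $(\sigma_n, a_n)_{n\in\nat}$ augmenting $s$ is completely determined by its leading letter $a_0$. Indeed, suppose $(\sigma_n, a_n)_{n\in\nat}$ and $(\sigma_n, a_n')_{n\in\nat}$ are both congenial with $a_0 = a_0'$. We show $a_n = a_n'$ for all $n$ by induction: the congeniality condition says $\sigma_{n+1}(a_{n+1})$ begins with $a_n$, i.e., $a_n$ is the first letter of $\sigma_{n+1}(a_{n+1})$. So if $a_{n+1} \ne a_{n+1}'$ then $\sigma_{n+1}(a_{n+1})$ and $\sigma_{n+1}(a_{n+1}')$ would have to share the same first letter $a_n = a_n'$ — which is \emph{possible}, so this direct approach does not immediately pin down $a_{n+1}$. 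Instead I would argue forward: once $a_0$ is fixed, knowing $a_n$ does not determine $a_{n+1}$, but there are at most $|\Sigma|$ choices of $a_0$, and I claim the limit $\lim_n \sigma_0\cdots\sigma_n(a_n)$ depends only on $a_0$. This is because, by Lemma \ref{lem::congenial-limit}, $\sigma_0(a_0)$ is already a prefix of the limit, and more generally the limit is the unique infinite (or the eventually-constant finite) word having every $\sigma_0\cdots\sigma_n(a_n)$ as a prefix; since $\sigma_0(a_0)$ is a prefix of $\sigma_0\sigma_1(a_1) = \sigma_0(a_0)\sigma_0(v_1)$, etc., the first letter of the limit is the first letter of $\sigma_0(a_0)$, and inductively the limit is pinned down by the nested-prefix structure starting from $a_0$. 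Hence distinct congenial expansions can differ in the tail $(a_n)_{n\ge 1}$ but must produce the same limit word, so $|\mathsf{congenials}_s| \le |\Sigma|$. Counting congenial \emph{expansions} up to the limit they produce likewise gives at most $|\Sigma|$ distinct limits; to get ``at most $|\Sigma|$ expansions'' literally one should read the statement as counting congenial limit words, or one restricts to a canonical choice — I would phrase the lemma's conclusion around $|\mathsf{congenials}_s|$, which is what is actually used.

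For the existence part — which is the main obstacle — I would build a congenial expansion by a compactness/König's-lemma argument on finite approximations. For each $N$, consider finite sequences $(a_0, a_1, \ldots, a_N)$ with $\sigma_{n+1}(a_{n+1})$ beginning with $a_n$ for $0 \le n < N$. Such sequences form a tree under the prefix (here, suffix-extension) order; it is finitely branching since $\Sigma$ is finite. To apply König's lemma I need the tree to be infinite, i.e., for every $N$ there is a valid length-$N$ sequence. This I would get by going ``top down'': pick any letter $c$, set $a_N := c$, and for $n = N-1, N-2, \ldots, 0$ let $a_n$ be the first letter of $\sigma_{n+1}(a_{n+1})$ — which is well-defined and a single letter since $\sigma_{n+1}$ is non-erasing, so $\sigma_{n+1}(a_{n+1}) \in \Sigma^+$. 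This produces a valid finite congenial chain of any length, so the tree is infinite; König's lemma yields an infinite branch $(a_n)_{n\in\nat}$, and $(\sigma_n, a_n)_{n\in\nat}$ is a congenial expansion augmenting $s$. By Lemma \ref{lem::congenial-limit} its limit exists and lies in $\Sigma^+ \cup \Sigma^\omega$; non-emptiness is immediate since $\sigma_0(a_0) \ne \varepsilon$. This gives $|\mathsf{congenials}_s| \ge 1$.

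The genuinely delicate point, and the one I would be most careful about, is the well-definedness of the limit as an element of $\Sigma^+ \cup \Sigma^\omega$ rather than a mere accumulation point: this is exactly where congeniality (via Lemma \ref{lem::congenial-limit}, giving the nested-prefix structure $\sigma_0\cdots\sigma_n(a_n) \preceq \sigma_0\cdots\sigma_{n+1}(a_{n+1})$) does the work, ruling out the $\bot$ case of the limit definition. Everything else is routine: the branching bound is $|\Sigma|$, the top-down construction shows the tree is infinite, and König's lemma closes the existence argument.
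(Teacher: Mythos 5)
Your existence argument is sound: the top-down construction (set $a_N$ arbitrarily and pull back by letting $a_n$ be the first letter of $\sigma_{n+1}(a_{n+1})$) shows that the finitely-branching tree of finite congenial chains has nodes at every level, and K\"onig's lemma yields an infinite branch. This is a different but essentially equivalent route to the paper's, which instead invokes Lemma~\ref{generated-implies-directed} (itself a compactness argument) to obtain words $\alpha^{(n)}$ with $\sigma_n(\alpha^{(n+1)}) = \alpha^{(n)}$ and takes their first letters.

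The upper bound, however, has a genuine gap. Your key claim --- that the limit $\lim_{n}\sigma_0\cdots\sigma_n(a_n)$ of a congenial expansion depends only on $a_0$ --- is false. Lemma~\ref{lem::congenial-limit} pins down $\sigma_0(a_0)$ as a prefix of the limit, but each further extension $\sigma_0\cdots\sigma_{n-1}(v_n)$ depends on the later letter $a_n$, not on $a_0$. Concretely, over $\Sigma=\{a,b,c\}$ take $\sigma_0=\mathsf{id}$, $\sigma_1(b)=ab$, $\sigma_1(c)=ac$, and $\sigma_n(b)=bb$, $\sigma_n(c)=cc$ for $n\ge 2$: the congenial expansions with $(a_n)=(a,b,b,\ldots)$ and $(a_n)=(a,c,c,\ldots)$ share $a_0=a$ but generate $(ab)^\omega$ and $(ac)^\omega$ respectively, so your bound on $|\mathsf{congenials}_s|$ does not follow. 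The observation that closes the gap --- and which you already use in the existence half --- is that congeniality determines each letter \emph{backwards}: $a_n$ is the first letter of $\sigma_{n+1}(a_{n+1})$. Hence two congenial expansions that agree at some position agree at all earlier positions, so two distinct expansions must eventually differ at \emph{every} position; a pigeonhole at any single sufficiently late position then bounds the number of expansions by $|\Sigma|$. This is exactly the paper's argument, and it also shows that your hedge about reinterpreting the statement is unnecessary: the literal bound of at most $|\Sigma|$ congenial expansions augmenting $s$ is true, and the bound on $|\mathsf{congenials}_s|$ follows from it.
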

\begin{proof}
By Lem.~\ref{generated-implies-directed}, there exists $(\alpha^{(n)})_{n\in\nat}$ over $\Sigma^\omega$ such that $\sigma_0 \cdots \sigma_n(\alpha^{(n+1)}) =  \alpha^{(0)}$ for all $n$.
Let $a_n$ be the first letter of $\alpha^{(n)}$.
We have that $(\sigma_n, a_{n+1})$ is congenial, and hence $|\mathsf{congenials}_s| \ge 1$.
%\footnote{This part can also be directly shown via König's lemma.}

Now suppose there exist $m \ge |\Sigma|+1$ congenial sequences $(\sigma_n, a^{(i)}_n)_{n\in\nat}$.
By a pigeonhole argument, there must exist $i \ne j$ such that $a^{(i)}_n = a^{(j)}_n$ for infinitely many $n$.
From congeniality it follows that $a^{(i)}_n = a^{(j)}_n$ for all $n$. 
\end{proof}

Congenial words constitute the building blocks of directed words. The ``if'' part of the following lemma follows by definition; the ``only if'' part holds because a directed word can naturally be factorised into a congenial prefix and a directed suffix (if the former is finite). 

\begin{lemma}
\label{lem::congenial-concatenate}
Let $\Sigma$ be an alphabet and $s =(\sigma_n)_{n \in \nat}$ be a directive sequence over $S(\Sigma)$.
A word $\alpha \in \Sigma^+ \cup \Sigma^\omega$ is directed by $s$ if and only if it can be expressed as a (possibly infinite) concatenation $u_0u_1\cdots$ of $s$-congenial words.
\end{lemma}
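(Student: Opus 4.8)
The plan is to prove the two implications separately, with the reverse direction being essentially bookkeeping and the forward direction requiring the real work. For the ``if'' direction, suppose $\alpha = u_0 u_1 \cdots$ where each $u_i$ is $s$-congenial. By definition each $u_i$ is generated by a congenial expansion augmenting $s$, hence by Lem.~\ref{generated-implies-directed} directed by $s$; let $(\alpha^{(n)}_i)_{n\in\nat}$ witness this, so $\sigma_n(\alpha^{(n+1)}_i) = \alpha^{(n)}_i$ and $\alpha^{(0)}_i = u_i$. I would then set $\alpha^{(n)} \coloneqq \alpha^{(n)}_0 \alpha^{(n)}_1 \cdots$ (a concatenation that is well-defined as a finite or infinite word, since each factor is non-empty). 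Because each $\sigma_n$ is a morphism it distributes over concatenation, giving $\sigma_n(\alpha^{(n+1)}) = \sigma_n(\alpha^{(n+1)}_0)\sigma_n(\alpha^{(n+1)}_1)\cdots = \alpha^{(n)}_0 \alpha^{(n)}_1 \cdots = \alpha^{(n)}$, and $\alpha^{(0)} = \alpha$, so $s$ directs $\alpha$. One subtlety to address: if infinitely many $u_i$ are non-trivial the concatenation $\alpha$ is genuinely infinite, and if only finitely many appear the last one must itself be infinite for $\alpha\in\Sigma^\omega$; either way the formal concatenation of the $\alpha^{(n)}_i$ makes sense and the argument goes through.

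For the ``only if'' direction, suppose $s$ directs $\alpha$, witnessed by $(\alpha^{(n)})_{n\in\nat}$ with $\alpha^{(0)}=\alpha$ and $\sigma_n(\alpha^{(n+1)})=\alpha^{(n)}$. The idea is to peel off a congenial prefix and recurse. Let $a_n$ be the first letter of $\alpha^{(n)}$; since $\sigma_n(\alpha^{(n+1)})$ begins with $\sigma_n(a_{n+1})$ and equals $\alpha^{(n)}$, which begins with $a_n$, the expansion $((\sigma_n,a_n))_{n\in\nat}$ is congenial, and by Lem.~\ref{lem::congenial-limit} the word $u_0 \coloneqq \lim_n \sigma_0\cdots\sigma_n(a_n)$ is $s$-congenial with every $\sigma_0\cdots\sigma_n(a_n)$ as a prefix; in particular $u_0$ is a prefix of $\alpha$ (each $\sigma_0\cdots\sigma_n(a_n)$ is a prefix of $\sigma_0\cdots\sigma_n(\alpha^{(n+1)})=\alpha$). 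If $u_0 = \alpha$ we are done with a single factor. Otherwise $u_0$ is finite, $\alpha = u_0\beta$ for some non-empty $\beta$, and I want to show $\beta$ is directed by $s$ so that I can iterate.

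The main obstacle is precisely this last step: showing the suffix $\beta$ is again $s$-directed, and then arguing that iterating the peeling-off process exhausts all of $\alpha$. For the first part, the natural approach is to track, at each level $n$, the decomposition $\alpha^{(n)} = p_n \cdot \beta^{(n)}$ where $p_n = \sigma_0\cdots\sigma_{n-1}$-preimage data; more concretely, let $w_n$ be the prefix of $\alpha^{(n)}$ such that $\sigma_0\cdots\sigma_{n-1}(w_n)$ is the appropriate truncation, and let $\beta^{(n)} = \alpha^{(n)}[|w_n|,\infty)$. Since $\sigma_n(a_{n+1})$ is a prefix of $\alpha^{(n)}$ by congeniality, $\sigma_n$ maps the suffix of $\alpha^{(n+1)}$ after $a_{n+1}$ into the suffix of $\alpha^{(n)}$ after that prefix, and with care one checks $\sigma_n(\beta^{(n+1)}) = \beta^{(n)}$ and $\beta^{(0)}=\beta$; this requires verifying that the ``prefix so far'' and ``suffix remaining'' split is respected by each $\sigma_n$, using that $\sigma_0\cdots\sigma_n(a_n)$ grows and stays a prefix of $\alpha$. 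Once $\beta$ is shown $s$-directed, repeat to extract $u_1$ from $\beta$, and so on. Termination/exhaustion: if the process stops after finitely many steps it is because some $u_k$ is infinite, giving a finite concatenation; if it runs forever, each extracted $u_i$ is non-empty (length $\ge 1$), so the partial concatenations $u_0\cdots u_k$ are strictly increasing prefixes of $\alpha$ whose lengths tend to infinity, hence $u_0u_1\cdots = \alpha$. I would flag the suffix-is-directed claim as the one deserving a careful written-out argument, possibly isolating it as a small sub-claim, since it is where the interaction between the morphisms and the prefix/suffix split must be handled precisely.
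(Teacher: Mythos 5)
Your proposal is correct and follows essentially the same route as the paper's proof: the ``if'' direction lifts the level-$n$ witnesses of each congenial factor and concatenates them, and the ``only if'' direction peels off the congenial prefix determined by the first letters of the $\alpha^{(n)}$ and recurses on the suffix, which is exactly what the paper does via the sequence $(v_n)$ of finite preimage prefixes with $\sigma_n(v_{n+1})=v_n$. One cosmetic correction: under Definition~\ref{congenial-definition} the augmentation you build should be $(\sigma_n,a_{n+1})_{n\in\nat}$ rather than $(\sigma_n,a_n)_{n\in\nat}$, since the relation you derive is that $\sigma_n(a_{n+1})$ begins with $a_n$.
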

\begin{proof}
	Suppose $\alpha = u_0u_1\cdots$, where $u_i \in \mathsf{congenials}_s$ for all~$i$.
	Let $(\sigma_n, a^{(i)}_n)_{n\in\nat}$ be a congenial sequence generating $u_i$, and $u_i^{(n)} = \lim_{k \to \infty} \sigma_n \cdots \sigma_k(a^{(i)}_k)$ for all $i$.
	We have that $u^{(i)}_n = \sigma_n(u^{(i)}_{n+1})$ for all $n$.
	It remains to define $\alpha^{(n)} = u_0^{(n)}u_1^{(n)}\cdots$.
	Then $\alpha^{(0)} = \alpha$ and $\alpha^{(i)}_n = \sigma_n(\alpha^{(i)}_{n+1})$ for all $n$.
	
	Now suppose $\alpha$ is $s$-directed, and let $(\alpha^{(n)})_{n\in\nat}$ be the witnessing sequence of words with $\alpha^{(0)} = \alpha$.
	Write $a_n$ for the first letter of $\alpha^{(n)}$.
	We construct the desired factorisation inductively. 
	Let $v$ be the word defined by the congenial sequence $(\sigma_n, a_{n+1})_{n \in \nat}$.
	By the choice of $(a_n)_{n\in\nat}$, $v$ is a prefix of $\alpha$.
	If $v = \alpha$, then we are done.
	Otherwise, $v$ must be finite.
	Let $(v_n)_{n\in\nat}$ be the unique sequence of finite words such that $v_0 = v$, $v_n$ is a prefix of $\alpha^{(n)}$ for all $n$, and $\sigma_{n}(v_{n+1}) = v_{n}$ for all $n$.
	Write $\alpha^{(n)} = v_n \gamma^{(n)}$ for all $n$.
	Because $\sigma_n(\alpha^{(n+1)}) = \alpha^{(n)}$ for all $n$ and $\sigma_n(v_{n+1}) = v_n$, we have that $\sigma_n(\gamma^{(n+1)}) = \gamma^{(n)}$.
	That is, $\gamma = \gamma^{(0)}$ is a suffix of $\alpha$ directed by~$s$.
	Set $u_0 = v$, and repeat the process on $\gamma$.
\end{proof}

If $s$ is weakly primitive, then have the following strengthening of Lem~\ref{lem::congenial-concatenate}.
\begin{lemma}
    \label{lem::everywhere-growing-can-congenialise}
    Suppose $s = (\sigma_n)_{n\in\nat}$ is weakly primitive and directs $\alpha \in \Sigma^\omega$.
    Then $\alpha$ has a congenial expansion $(\sigma_n, a_n)_{n\in\nat}$.
\end{lemma}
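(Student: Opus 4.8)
The plan is to reuse the construction from the proof of Lemma~\ref{lem::congenial-concatenate} and show that, under weak primitivity, the very first congenial factor already exhausts $\alpha$. Since $s$ directs $\alpha$, fix a witnessing sequence $(\alpha^{(n)})_{n\in\nat}$ of infinite words with $\alpha^{(0)}=\alpha$ and $\sigma_n(\alpha^{(n+1)})=\alpha^{(n)}$ for every $n$, and let $a_n$ be the first letter of $\alpha^{(n+1)}$. First I would verify that $(\sigma_n,a_n)_{n\in\nat}$ is congenial: since $\sigma_{n+1}(\alpha^{(n+2)})=\alpha^{(n+1)}$ and $\sigma_{n+1}$ is non-erasing, $\sigma_{n+1}(a_{n+1})$ is a non-empty prefix of $\alpha^{(n+1)}$, and hence begins with $a_n$, as required.

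Next, set $w=\lim_{n\to\infty}\sigma_0\cdots\sigma_n(a_n)$. By Lemma~\ref{lem::congenial-limit} this limit exists and each $\sigma_0\cdots\sigma_n(a_n)$ is one of its prefixes. On the other hand $\sigma_0\cdots\sigma_n(a_n)$ is a prefix of $\sigma_0\cdots\sigma_n(\alpha^{(n+1)})=\alpha$, so $w[0,j)=\alpha[0,j)$ for every $j$, i.e.\ $w$ is a prefix of $\alpha$.

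The key point is that weak primitivity of $s$ makes $s$ everywhere growing (as recorded just after the definition of weak primitivity), so $|\sigma_0\cdots\sigma_n(a_n)|\to\infty$ and therefore $w\in\Sigma^\omega$. An infinite word that is a prefix of $\alpha$ must equal $\alpha$, so $w=\alpha$, and $(\sigma_n,a_n)_{n\in\nat}$ is the sought congenial expansion of $\alpha$.

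I do not anticipate a genuine obstacle here: the argument is precisely the base step of the induction in the proof of Lemma~\ref{lem::congenial-concatenate} combined with the everywhere-growing consequence of weak primitivity. The only point requiring mild care is the indexing — making sure the congenial expansion augments $s=(\sigma_n)_{n\in\nat}$ itself rather than a shift of it — which is exactly why $a_n$ is taken to be the leading letter of $\alpha^{(n+1)}$ (so that $\sigma_n(a_n)$ begins with $a_{n-1}$) and not of $\alpha^{(n)}$.
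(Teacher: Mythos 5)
Your proof is correct and follows essentially the same route as the paper: take $a_n$ to be the first letter of $\alpha^{(n+1)}$ in a witnessing sequence for directedness, observe that the resulting expansion is congenial with each $\sigma_0\cdots\sigma_n(a_n)$ a prefix of $\alpha$, and use the everywhere-growing consequence of weak primitivity to conclude the limit is all of $\alpha$. The extra care you take with the indexing and with verifying congeniality explicitly is exactly what the paper leaves implicit.
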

\begin{proof}
    Let $(\alpha^{(n)})_{n\in\nat}$ be such that $\alpha^{(0)} = \alpha$ and $\sigma_n(\alpha^{(n+1)}) = \alpha^{(n)}$, and $a_n$ be the first letter of $\alpha^{(n+1)}$.
    Then $(\sigma_n, a_n)_{n\in\nat}$ is congenial, $\sigma_0\cdots \sigma_n(a_n)$ is a prefix of $\alpha$ for all $n$, and $\lim_{n\to \infty} \sigma_0\cdots \sigma_n =\infty$ by the growth assumption.
    Therefore, $(\sigma_n,a_n)_{n\in\nat}$ generates~$\alpha$.
\end{proof}

Combining lemmas~\ref{generated-implies-directed} and \ref{lem::everywhere-growing-can-congenialise} we obtain the following.
\begin{lemma}
    \label{lem:eg-sadic=sdirected}
    Let $s$ be a weakly primitive directive sequence.
    A word $\alpha$ is directed by $s$ if and only if it is congenially generated by $s$.
\end{lemma}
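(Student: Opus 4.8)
The plan is to prove \Cref{lem:eg-sadic=sdirected} by simply chaining the two already-established lemmas in both directions, so the work is mostly bookkeeping about what ``congenially generated'' unpacks to. Recall that by \Cref{congenial-definition}, $\alpha$ is congenially generated by $s$ precisely when there is a congenial expansion $(\sigma_n,a_n)_{n\in\nat}$ augmenting $s$ with $\alpha = \lim_{n\to\infty}\sigma_0\cdots\sigma_n(a_n)$; in particular such an $\alpha$ is $S$-generated in the sense of \eqref{eq::s-adic1}.

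For the ``only if'' direction, suppose $\alpha$ is directed by $s$. Since $s$ is weakly primitive, \Cref{lem::everywhere-growing-can-congenialise} immediately furnishes a congenial expansion $(\sigma_n,a_n)_{n\in\nat}$ of $\alpha$, and its proof in fact shows that this expansion \emph{generates} $\alpha$ (the growth assumption forces $|\sigma_0\cdots\sigma_n(a_n)|\to\infty$, and each such word is a prefix of $\alpha$ by congeniality, so the limit is $\alpha$ itself). Hence $\alpha$ is congenially generated by $s$. For the ``if'' direction, suppose $\alpha$ is congenially generated by $s$; then in particular $\alpha$ is generated by the augmented sequence $(\sigma_n,a_n)_{n\in\nat}$, so by \Cref{generated-implies-directed} it is directed by $s$. (Here weak primitivity is not even needed; it is only the forward implication that uses it.)

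The only point requiring a little care — and the closest thing to an obstacle — is making sure the logical bridge between ``$s$ generates $\alpha$ via some letter sequence'' and ``$s$ directs $\alpha$'' is applied at the right granularity: \Cref{generated-implies-directed} is stated for $S(\Sigma)$-sequences together with an arbitrary letter sequence, and a congenial expansion is exactly such a pair, so the hypotheses match directly. I would therefore spell out explicitly that a congenial expansion of $\alpha$ augmenting $s$ is, by definition, a generating expansion whose underlying directive sequence is $s$, and then invoke \Cref{generated-implies-directed} and \Cref{lem::everywhere-growing-can-congenialise} as black boxes. No calculation is involved beyond this unpacking of definitions.
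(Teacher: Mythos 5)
Your proposal is correct and matches the paper exactly: the paper derives Lemma~\ref{lem:eg-sadic=sdirected} by simply combining Lemma~\ref{generated-implies-directed} (for the ``if'' direction) with Lemma~\ref{lem::everywhere-growing-can-congenialise} (for the ``only if'' direction), which is precisely your argument. Your added observation that weak primitivity is only needed for the forward implication is accurate and consistent with how those two lemmas are stated.
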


\section{Equivalence of substitutions modulo a semigroup}
\label{sec:S-mod-A}
The motivation behind the semigroup-based approach to the language-membership problem is that even though there are infinitely many substitutions in $S(\Sigma)$, from the perspective of a finite $\omega$-semigroup $M_L$ recognising an $\omega$-regular language $L \in \Sigma^\omega$, they can be divided into finitely many equivalence classes.

Let $L \subseteq \Sigma^\omega$ be an $\omega$-regular language, and let it be recognised by a morphism $h_L$ into a finite $\omega$-semigroup $M_L = (M_{L,f}, M_{L,\omega})$, e.g., the morphism into the syntactic $\omega$-semigroup as defined in \cite[Sec.~11]{perrin1995semigroups}. Observe that an $\omega$-semigroup morphism from $\Sigma^\infty$ into $M$ is completely determined by the images of each letter in $M_{L,f}$. Since $M_{L,f}$ is finite, there are only finitely many possible morphisms. We denote the set of these morphisms by $\mathsf{morphisms}_L$. We make a small technical adaptation, and interpret these as monoid morphisms, i.e., we adjoin a fresh neutral element $1_{M_L}$ to $M_L$, and assign $h(\varepsilon) = 1_{M_L}$ for each $h$.

We define an equivalence relation on the set of non-erasing substitutions $\sigma \colon \Sigma^* \to \Sigma^*$.  Let $\mathsf{segments}_\sigma$ be the function that takes a letter $a \in \Sigma$ and $h \in \mathsf{morphisms}_L$, and returns a finite sequence of pairs from $\Sigma \times M_L$, determined as follows.
Write
\[
\sigma(a) = b_1 v_1 \cdots b_d v_d
\]
where $b_1, \dots, b_d$ are distinct letters and $v_i \in \{b_1, \dots, b_i\}^*$ for all $1\le i \le d$. 
I.e., we consider the factorisation of $\sigma(a)$ into segments based on the first occurrence of each letter.
Then
\[
\mathsf{segments}_\sigma(a, h) = \langle (b_1, h(v_1)), \dots, (b_d, h(v_d))\rangle.
\]
Note that there are only finitely many possibilities for $\mathsf{segments}_\sigma$.
For $\sigma,\mu \in S(\Sigma)$, define
\[
\sigma \equiv_L \mu \Leftrightarrow \mathsf{segments}_\sigma = \mathsf{segments}_\mu.
\]
We denote the class of $\sigma$ by $[\sigma]_L$, and the finite set of the equivalence classes by $\Xi_L$.
We next show how to effectively provide representatives for the equivalence classes.

\begin{lemma}
\label{Xi-effective}
    Given an $\omega$-regular language $L$, we can compute morphisms $\sigma_1,\ldots,\sigma_m$ such that $[\sigma_i]_L \ne [\sigma_j]_L$ for all $i \ne j$ and 
    $
    \Xi_L = \{[\sigma_i]_L \colon 1\le i \le m\}.
    $
\end{lemma}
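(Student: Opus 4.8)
The plan is to show that every $\equiv_L$-class contains a representative of bounded ``size'', and then simply enumerate all small substitutions. Recall (see \cite[Sec.~8--9]{perrin1995semigroups}) that from $L$ we can compute a finite $\omega$-semigroup $M_L$ together with a recognising morphism, and that $\mathsf{morphisms}_L$ is then finite and computable, since each $h \in \mathsf{morphisms}_L$ is the unique $\omega$-semigroup morphism extending some function $\Sigma \to M_{L,f}$. Put $M' = M_{L,f}\cup\{1_{M_L}\}$ and let $N = (M')^{\mathsf{morphisms}_L}$ be the finite monoid of tuples indexed by $\mathsf{morphisms}_L$ under pointwise multiplication; the map $\Phi\colon \Sigma^*\to N$, $\Phi(w) = (h(w))_{h\in\mathsf{morphisms}_L}$, is a monoid morphism, and $|N| = (|M_{L,f}|+1)^{|\mathsf{morphisms}_L|}$.

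First I would prove the following claim: every $\sigma\in S(\Sigma)$ satisfies $\sigma\equiv_L\mu$ for some $\mu\in S(\Sigma)$ with $|\mu(a)|\le C := |\Sigma|\,(|N|+1)$ for all $a\in\Sigma$. Fix $a$ and write $\sigma(a) = b_1v_1\cdots b_dv_d$ as in the definition of $\mathsf{segments}_\sigma$; here $d\le|\Sigma|$, the $b_i$ are distinct, and $v_i\in\{b_1,\dots,b_i\}^*$. For each $i$ with $|v_i|>|N|$, write $v_i = c_1\cdots c_\ell$; among the $\ell+1 > |N|$ prefix images $\Phi(c_1\cdots c_t)$, $t = 0,\dots,\ell$, two coincide, say for $s<t$, so deleting the block $c_{s+1}\cdots c_t$ yields a strictly shorter word over the same letters with the same $\Phi$-image. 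Iterating, we obtain $v_i'\in\{b_1,\dots,b_i\}^*$ with $|v_i'|\le|N|$ and $\Phi(v_i') = \Phi(v_i)$, hence $h(v_i') = h(v_i)$ for every $h$. Set $\mu(a) = b_1v_1'\cdots b_dv_d'$. Then $\mu(a)$ begins with $b_1$, so $\mu$ is non-erasing; moreover $b_1v_1'\cdots b_dv_d'$ is exactly the first-occurrence factorisation of $\mu(a)$ (the leading letters are the same distinct $b_i$, and each $v_i'$ ranges over $\{b_1,\dots,b_i\}$), so $\mathsf{segments}_\mu(a,h) = \langle(b_i,h(v_i'))\rangle_i = \langle(b_i,h(v_i))\rangle_i = \mathsf{segments}_\sigma(a,h)$ for all $h$. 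As $a$ was arbitrary, $\mu\equiv_L\sigma$, and $|\mu(a)|\le d(1+|N|)\le C$.

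Given the claim, the algorithm is immediate: enumerate the finite set of all substitutions $\mu\colon\Sigma\to\Sigma^+$ with $|\mu(a)|\le C$ for every $a$; for each such $\mu$ compute $\mathsf{segments}_\mu$ (finitely many evaluations of the morphisms $h$ on finite words); group these $\mu$ by equality of $\mathsf{segments}$ (which is exactly $\equiv_L$); and output one representative $\sigma_i$ from each group. By the claim this list of representatives meets every class, so $\Xi_L = \{[\sigma_i]_L : 1\le i\le m\}$, and by construction $[\sigma_i]_L\ne[\sigma_j]_L$ for $i\ne j$.

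The only non-routine point is the claim, and within it the one thing to check carefully is that shortening the $v_i$ preserves the first-occurrence factorisation — i.e.\ that we always delete a contiguous block from within a single segment $v_i$ (never touching the distinguished letters $b_i$), so that the leading letters and the alphabets $\{b_1,\dots,b_i\}$ of the segments are unchanged and $\mathsf{segments}$ is genuinely invariant. Everything else is a pigeonhole count and a brute-force enumeration.
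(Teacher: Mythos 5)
Your proof is correct, but it takes a genuinely different route from the paper's. The paper enumerates all syntactic candidates for the function $\mathsf{segments}_\sigma$ (there are finitely many, since $\mathsf{morphisms}_L$ and $M_{L,f}$ are finite) and, for each candidate and each letter $a$, tests realisability by building the regular languages $L_{a,h} = b_1L_1\cdots b_kL_k$ (where $L_i$ constrains $w_i$ to lie in $\{b_1,\dots,b_i\}^*$ with $h(w_i)=x_i$), intersecting them over all $h$, and checking non-emptiness; a witness word $w_a$ extracted from a non-empty intersection serves as $\sigma(a)$. You instead prove a bounded-representative lemma: by pigeonhole on the prefix images in the product monoid $N=(M')^{\mathsf{morphisms}_L}$, each segment $v_i$ can be shortened to length at most $|N|$ without changing any $h(v_i)$ or disturbing the first-occurrence factorisation, so every class contains a substitution with $|\mu(a)|\le|\Sigma|(|N|+1)$, and brute-force enumeration finishes the job. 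Your pumping step is sound (deleting a block between two positions with equal prefix image preserves the $\Phi$-value and keeps $v_i'$ over the same letter set, and the explicit $b_i$ remain the first occurrences), and you correctly flag the one delicate point. What each approach buys: the paper's realisability test directly pairs each achievable value of $\mathsf{segments}$ with a witness and never constructs substitutions that fail to be representatives, at the cost of invoking closure and emptiness machinery for regular languages; your argument is more elementary — pure pigeonhole plus enumeration — and additionally yields an explicit size bound on representatives, at the cost of a larger (though still finite) search space and a post-hoc grouping step.
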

\begin{proof}
	We simply iterate over each syntactic possibility for $\mathsf{segments}_\sigma$, and check if it is realised by a non-erasing substitution. In order to do so, for each letter $a$, we will find a word $w_a$ such that assigning $\sigma(a) = w_a$ is consistent with $\mathsf{segments}_\sigma(a, h)$ for all $h$.
	For $a \in \Sigma$ and $h \in \mathsf{morphisms}_L$ under consideration, let the purported 
	$\mathsf{segments}_\sigma(a, h) = \langle (b_1, x_1),\ldots, (b_k,x_k)\rangle$ with $k \ge 1$ such that $b_i \ne b_j$ for all $i\neq j$ and $x_i \in M_{\Acal}$, $b_i \in \Sigma$ for all $i$.
	We can compute regular languages $L_1,\ldots,L_k \subseteq \Sigma^*$ such that for all $i$ and $w \in\Sigma^*$, $w \in L_i$ if and only if $h(w_i) = x_i$ and $w_i \in \{1,\ldots,b_i\}^*$. Denote $L_{a, h} = b_1 L_1 \cdots b_k L_k$.
	
	We can effectively check whether $L_a = \bigcap_{h \in \mathsf{morphisms}_L} L_{a, h}$ is non-empty, and if yes, effectively compute $w_a \in L_a$. Such a word can be computed as an image for every letter (if and) only if the purported $\mathsf{segments}_\sigma$ is indeed realisable by assigning each $\sigma(a)$ to the corresponding $w_a$.
\end{proof}

For technical convenience, we define the following auxiliary functions that can be derived from $\mathsf{segments}_\sigma$; the first four of them are independent of $L$. 
\begin{itemize}
    \item[(a)] $\mathsf{expanding}_\sigma$ records for each letter $a$ whether $|\sigma(a)| > 1$.
    It evaluates to false if and only if $\mathsf{segments}_\sigma(a, h) = \langle (b, 1_{M_L}) \rangle$ for all $h$.
    \item[(b)] $\mathsf{introduces}_\sigma$ maps each letter to a finite sequence of pairs of letters with Boolean flags: if $\sigma(a) = b_1v_1 \cdots b_dv_d$ when factorised as in the definition of $\mathsf{segments}_\sigma$, then $\mathsf{introduces}_\sigma(a)$ is 
    $\langle (b_1, f_1), \ldots, (b_d, f_d) \rangle$
    where $f_i = 1$ if and only if $v_i \ne \varepsilon$.
    \item[(c)] $\mathsf{head}_\sigma$ maps each letter $a$ to the first letter in $\sigma(a)$.
    \item[(d)] $\mathsf{tail}_\sigma$ takes as input a letter $a$ and a morphism $h$. 
    Write $\sigma(a) = \mathsf{head}_\sigma(a) \cdot v$. 
    We define $\mathsf{tail}_\sigma(a, h) = h(v)$.
    \item[(e)] $\mathsf{compose}_\sigma$ takes $h \in \mathsf{morphisms}_L \to \mathsf{morphisms}_L$ and returns $h \circ \sigma$.
\end{itemize}

We next argue that composition of morphisms can be defined on equivalence classes.
Let $\sigma, \mu \colon \Sigma^*\to\Sigma^*$ be non-erasing.
We show how to determine  $\mathsf{segments}_{\sigma \circ \mu}(a, h)$ using only $\mathsf{segments}_\mu$ and $\mathsf{segments}_\sigma$.
Suppose $\mu(a) = b_1v_1\cdots b_m v_m$, where $b_i \in \Sigma$ and $v_i \in \{b_1,\ldots,b_i\}^*$ for all $i$.
Then
\[
\sigma(\mu(a)) = \sigma(b_1)\sigma(v_1)\cdots \sigma(b_m)\sigma(v_m)
\]
and each letter of $\sigma(v_i)$ will have already appeared in one of $\sigma(b_1),\ldots,\sigma(b_i)$.
Write $\mathsf{segments}_\sigma(b_i,h) = \langle (c_{i,1}, h(w_{i,1})),\ldots,(c_{i, k_i}, h(w_{i, k_i})) \rangle$.
Then $\sigma(\mu(a))$ is equal to 
\[
c_{1,1}w_{1,1}\cdots c_{1,k_1}w_{1,k_1} \sigma(v_1) \cdots c_{m,1}w_{m,1}\cdots c_{m,k_m}w_{m,k_m} \sigma(v_m).
\]
Write $t_i = w_{i,k_i} \, \sigma(v_i)$.
Observe that each letter of $w_{i,j}$ appears as a factor $c_{e,l}$ before $w_{i,j}$ in the factorisation above;
the same applies to every~$t_i$.
To compute $\mathsf{segments}_{\sigma \circ \mu}(a, h)$ we begin with the finite sequence
\begin{multline*}
    (c_{1,1}, h(w_{1,1})),\ldots,(c_{1,k_1-1}, h(w_{1,k_1-1})), (c_{1, k_1}, h(t_1)) \cdots \\
    (c_{m,1}, h(w_{m,1})),\ldots,(c_{m,k_m-1}, h(w_{m,k_m-1})), (c_{m, k_m}, h(t_m)).
\end{multline*}
Note that the above can be effectively computed from $\mathsf{segments}_\sigma$ and $\mathsf{segments}_\mu$ as $h(t_i) = h(w_{i,k_i}) \, h(\sigma(v_i))$
and the two factors can be gleaned from, respectively, $\mathsf{segments}_\sigma(b_i,h)$ and $\mathsf{segments}_\mu(a,\mathsf{compose}_\sigma(h))$.
Rename the indices in the sequence above to obtain $\langle (c_1,h(w_1), \ldots, (c_M, h(w_m)\rangle$ where $c_i \in \Sigma$ and $w_i \in \Sigma^*$ for all $i$.
Recall that $w_i \in \{c_1,\ldots,c_i\}^*$ for all $i$.
But it is possible that $c_i = c_j$ for some $i,j$.
To eliminate these, we repeat the following process for as long as possible.
Find the smallest $j$ such that $c_i = c_j$ for some $i < j$.
Replace the two consecutive terms $(c_{j-1}, h(w_{j-1}), (c_j, h(w_j))$ with $(c_{j-1}, h(w_{j-1}) \, h(c_j) \, h(w_j))$.
In the end we are left with $\mathsf{segments}_{\sigma \circ \mu}(a, h)$.

Observe that $\mathsf{segments}_\sigma$ only depends on $\xi \coloneqq [\sigma]_L$; we can thus index the auxillary functions (a-e) above by the equivalence class $\xi$.
To summarise, we have the following.
\begin{lemma}
    The set $\Xi_L$ is a finite monoid with the binary operation $[\sigma]_L \cdot [\mu]_L = [\sigma \circ \mu]_{L}$ and the identity element $[\mathsf{id}]_L$, where $\mathsf{id}(w) = w$ for all $w \in \Sigma^*$.
\end{lemma}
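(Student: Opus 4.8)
The plan is to verify, in order: (i) that the proposed operation $[\sigma]_L\cdot[\mu]_L\coloneqq[\sigma\circ\mu]_L$ is well-defined on $\Xi_L$; (ii) that $\Xi_L$ is finite; and (iii) that it satisfies the monoid axioms with neutral element $[\mathsf{id}]_L$. The main point is (i); the rest is bookkeeping that transfers from $(S(\Sigma),\circ)$.

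For (i), I would observe that the bulk of the argument has in effect already been carried out in the discussion preceding the statement: there we exhibited an explicit procedure that, given $\mathsf{segments}_\sigma$ and $\mathsf{segments}_\mu$ alone (together with the $L$-independent auxiliary map $\mathsf{compose}_\sigma$, itself a function of $\mathsf{segments}_\sigma$), produces $\mathsf{segments}_{\sigma\circ\mu}(a,h)$ for every letter $a$ and every $h\in\mathsf{morphisms}_L$. Consequently, if $\sigma\equiv_L\sigma'$ and $\mu\equiv_L\mu'$, then $\mathsf{segments}_\sigma=\mathsf{segments}_{\sigma'}$ and $\mathsf{segments}_\mu=\mathsf{segments}_{\mu'}$, so running that procedure on the two inputs yields $\mathsf{segments}_{\sigma\circ\mu}=\mathsf{segments}_{\sigma'\circ\mu'}$, i.e.\ $\sigma\circ\mu\equiv_L\sigma'\circ\mu'$. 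Hence $[\sigma]_L\cdot[\mu]_L$ depends only on $[\sigma]_L$ and $[\mu]_L$; and since the composition of two non-erasing morphisms is again a non-erasing morphism, $[\sigma\circ\mu]_L\in\Xi_L$, so the operation is total on $\Xi_L$. The one place to be careful is the iterative ``merge repeated letters'' step of that computation: it terminates and is deterministic, because each merge strictly decreases the length of the sequence and always acts on the least offending index, so the output is genuinely a function of the two input segment-data and of nothing else.

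For (ii), as already remarked in the definition of $\equiv_L$, a value of $\mathsf{segments}_\sigma$ assigns to each of the finitely many pairs $(a,h)\in\Sigma\times\mathsf{morphisms}_L$ a finite sequence over $\Sigma\times M_L$ whose length is at most $|\Sigma|$ (one pair per distinct letter occurring in $\sigma(a)$); since $\Sigma$, $\mathsf{morphisms}_L$, and $M_L$ are all finite, there are only finitely many such functions, and $\Xi_L$, being the set of realised values, is finite.

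For (iii), the axioms descend from composition of substitutions. Associativity: for non-erasing $\sigma,\mu,\nu$ one has $(\sigma\circ\mu)\circ\nu=\sigma\circ(\mu\circ\nu)$ as morphisms $\Sigma^*\to\Sigma^*$, so, applying $[\cdot]_L$ and invoking well-definedness from (i), $([\sigma]_L\cdot[\mu]_L)\cdot[\nu]_L=[\sigma\circ\mu\circ\nu]_L=[\sigma]_L\cdot([\mu]_L\cdot[\nu]_L)$. Neutral element: $\mathsf{id}$ is non-erasing and $\mathsf{id}\circ\sigma=\sigma=\sigma\circ\mathsf{id}$ for every $\sigma\in S(\Sigma)$, whence $[\mathsf{id}]_L\cdot[\sigma]_L=[\sigma]_L=[\sigma]_L\cdot[\mathsf{id}]_L$. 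This completes the verification; the only genuinely non-routine ingredient was the well-definedness of the product, which is precisely the content of the segment computation established just above.
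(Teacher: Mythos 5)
Your proposal is correct and follows essentially the same route as the paper: well-definedness of the product is delegated to the preceding computation of $\mathsf{segments}_{\sigma\circ\mu}$ from $\mathsf{segments}_\sigma$ and $\mathsf{segments}_\mu$, after which associativity and the neutrality of $[\mathsf{id}]_L$ are inherited from the monoid of substitutions under composition. Your extra remarks on finiteness and on the determinism of the merging step are sound but not needed beyond what the paper already records.
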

\begin{proof}
The set of substitutions is a monoid with composition being the binary operation. The map from substitutions to their equivalence classes respects the binary operation, by construction. Using this fact, it is straightforward to check that the binary operation on equivalence classes is associative, and that $[\mathsf{id}]_L$ is indeed the identity element.
\end{proof}

Let $(\sigma_n)_{n\in I}$ be a sequence over $S(\Sigma)$, $\xi_n = [\sigma_n]_L$ for all $n$, and $(a_n)_{n\in I}$ be a sequence of letters from $\Sigma$, where~$I$ can be finite or infinite.
We define $\mathsf{trace}_L((\sigma_n)_{n\in I}) = (\xi_n)_{n\in I}$ and $\mathsf{trace}_L((\sigma_n, a_n)_{n\in I}) = (\xi_n, a_n)_{n\in I}$.
We extend the definition of $[\cdot]_L$ to finite sequences in the natural way: if $I$ is finite, then $[(\sigma_n)_{n\in I}]_L$ is the  product $\prod_{n\in I} \xi_n \in {\Xi_L}$.
We have that for any infinite word~$\alpha$ and substitution~$\sigma$, whether $\sigma(\alpha) \in L$ can be determined from $[\sigma]_L$.

\begin{lemma}
\label{lem:sigma-of-alpha}
Let $L \subseteq \Sigma^\omega$ be an $\omega$-regular language and $\alpha \in \Sigma^\omega$. 
There exists $\Phi \subseteq \Xi_L$ such that for any $\sigma \colon \Sigma^* \to \Sigma^*$, $\sigma(\alpha) \in L$ if and only if $[\sigma]_{\Acal} \in \Phi$. 
Furthermore, $\Phi$ can be effectively computed if we can compute ${h}(\alpha)$ for all $h \in \mathsf{morphisms}_L$.
\end{lemma}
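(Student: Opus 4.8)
The plan is to reduce the membership question $\sigma(\alpha) \in L$ to a computation purely at the level of $\omega$-semigroups, using the fact that $\mathsf{segments}_\sigma$ — which is exactly what determines $[\sigma]_L$ — records enough algebraic information about $\sigma(b)$ for each letter $b$. First I would fix the recognising morphism $h_L \colon \Sigma^\infty \to M_L$ and note that $\sigma(\alpha) \in L$ if and only if $h_L(\sigma(\alpha)) \in H$ for the accepting set $H \subseteq M_L$. So it suffices to show that $h_L(\sigma(\alpha)) \in M_{L,\omega}$ can be computed from $[\sigma]_L$ together with the values $h(\alpha)$ for $h \in \mathsf{morphisms}_L$; then $\Phi$ is simply the set of classes $\xi = [\sigma]_L$ for which this computed value lands in $H$.

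The key step is the following identity. Write $\alpha = \alpha(0)\alpha(1)\alpha(2)\cdots$. Then $\sigma(\alpha) = \sigma(\alpha(0))\,\sigma(\alpha(1))\cdots$, so by the morphism property of $h_L$ on $\omega$-semigroups, $h_L(\sigma(\alpha)) = h_L(\sigma(\alpha(0)))\cdot h_L(\sigma(\alpha(1)))\cdots$, the infinite product in $M_L$ of the sequence $(h_L(\sigma(\alpha(n))))_{n\in\nat}$ over $M_{L,f}$. Now each term $h_L(\sigma(b))$ for $b \in \Sigma$ is recoverable from $\mathsf{segments}_\sigma(b, h_L)$: if $\sigma(b) = b_1 v_1 \cdots b_d v_d$ in the canonical segment factorisation, then $h_L(\sigma(b)) = h_L(b_1)\,h_L(v_1)\cdots h_L(b_d)\,h_L(v_d)$, and $\mathsf{segments}_\sigma(b,h_L) = \langle(b_1,h_L(v_1)),\ldots,(b_d,h_L(v_d))\rangle$ supplies each $h_L(v_i)$ while the letters $b_i$ are listed explicitly (and $h_L(b_i)$ is known). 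Hence the finite map $b \mapsto h_L(\sigma(b))$ on $M_{L,f}$ depends only on $[\sigma]_L = \xi$; call it $\hat\sigma \colon \Sigma \to M_{L,f}$. Then $h_L(\sigma(\alpha))$ is the infinite product of $(\hat\sigma(\alpha(n)))_{n\in\nat}$, which equals $\hat h(\alpha)$ where $\hat h$ is the $\omega$-semigroup morphism $\Sigma^\infty \to M_L$ extending the letter map $b \mapsto \hat\sigma(b)$. But $\hat h$ is an element of $\mathsf{morphisms}_L$ (every letter-to-$M_{L,f}$ map extends uniquely), so $h_L(\sigma(\alpha)) = \hat h(\alpha)$ is among the values we are assumed able to compute. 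This gives both the existence of $\Phi$ and its effectiveness: enumerate the finitely many classes $\xi \in \Xi_L$ (via Lem.~\ref{Xi-effective}), for each extract the letter map $\hat\sigma$ from $\mathsf{segments}$, identify the corresponding $\hat h \in \mathsf{morphisms}_L$, look up $\hat h(\alpha)$, and include $\xi$ in $\Phi$ iff $\hat h(\alpha) \in H$.

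I expect the only delicate point to be bookkeeping about which $\omega$-semigroup morphism $\hat h$ arises: one must check that the composite "apply $\sigma$ to each letter, then apply $h_L$" genuinely coincides with the morphism $\hat h$ determined by the letter images $\hat\sigma(b)$, i.e.\ that $h_L \circ \sigma = \hat h$ as $\omega$-semigroup morphisms — this is immediate since both are $\omega$-semigroup morphisms agreeing on letters, and $\Sigma^\infty$ is freely generated (cf.\ the remark that a morphism out of $\Sigma^\infty$ is determined by the images of letters). One should also handle the degenerate case where $\alpha \in \Sigma^+$ is ruled out by hypothesis ($\alpha \in \Sigma^\omega$), so the infinite product is genuinely in $M_{L,\omega}$ and no finite-word subtlety intervenes. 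Everything else is routine.
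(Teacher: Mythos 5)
Your proposal is correct and follows essentially the same route as the paper: both reduce $\sigma(\alpha)\in L$ to evaluating $h_L\circ\sigma$ on $\alpha$, observe that this composite is the morphism $\mathsf{compose}_\sigma(h_L)\in\mathsf{morphisms}_L$ (your $\hat h$) and depends only on $[\sigma]_L$ via $\mathsf{segments}_\sigma$, and then define $\Phi$ as the classes whose associated morphism sends $\alpha$ into the accepting set. Your extra bookkeeping about recovering $b\mapsto h_L(\sigma(b))$ from the segments is exactly what the paper packages into the auxiliary function $\mathsf{compose}_\xi$, so nothing is missing.
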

\begin{proof}
Recall \cite[Thm.~7.6]{perrin1995semigroups} that we have a set $G \subseteq M_{L, \omega}$ such that for all $\alpha \in \Sigma^\omega$, $\alpha \in L$ if and only if $h_L(\alpha) \in G$. We now view the substitution $\sigma$ as an $\omega$-semigroup morphism:
\begin{align*}
    h_L(\sigma(\alpha)) = h_L \circ \sigma(\alpha) = \mathsf{compose}_{\sigma}(h_L)(\alpha).
\end{align*}
We thus define $\Phi = \{\xi \in \Xi_L: \mathsf{compose}_\xi(h_L)(\alpha) \in G\}$. The effectiveness claim follows from the fact that $\mathsf{compose}_\xi(h_L) \in \mathsf{morphisms}_L$ for all $\xi$.
\end{proof}

\section{Morphic words in a given language}
\label{section-sigma-omega}

We shall now use the machinery developed in Sec.\ \ref{sec:S-mod-A} to characterise the set of morphic words in a given $\omega$-regular language $L \subseteq \Sigma^\omega$, thus generalising the main result of \cite{carton-thomas}.
A word $\alpha \in \Sigma^\omega$ is \emph{substitutive} if it is a fixed point of a non-trivial substitution, i.e., if there exists $\sigma\in S(\Sigma)$ not the identity such that $\sigma(\alpha) = \alpha$. 
A word $\alpha\in \Sigma^\omega$ is \emph{morphic} if it is of the form $\pi(\alpha)$ for $\pi \in S(\Sigma)$ and $\alpha$ a substitutive word.
For $u \in \Sigma^*$ and substitutions $\sigma, \pi$,  define $\sigma^\omega(u) = \lim_{n\rightarrow \infty} \sigma^n(u)$ and $\pi \circ \sigma^\omega(u) = \pi(\sigma^\omega(u))$.
%We say that $\alpha \in \Sigma^\omega$ is \emph{constructibly morphic} if $\alpha = \pi \circ \sigma^\omega(u)$ for substitutions $\sigma,\pi$ and $u \in \Sigma^*$.
For such $\alpha$, given an automaton $\mathcal{A}$, we will show that the respective equivalence classes $\xi, \zeta \in \Xi_L$ of substitutions $\sigma, \pi$ determine:
\begin{itemize}
    \item Whether the substitution $\sigma$ has a fixed point in $L$ (Thm. \ref{H-fixed-point}).
    \item A regular language $\Lcal(\xi, \zeta) \subseteq \Sigma^{+}$ such that $\pi \circ \sigma^\omega(u)$ is in $L$ if and only if $u \in \Lcal(\xi, \zeta)$ (Thm.\ \ref{morphic-regular-lang}).
\end{itemize}
Before we proceed to state our key properties, we remark that we work with $\Sigma^\infty \cup \{\bot\}$. For the sake of brevity, we (syntactically) allow infinite words to be concatenated, i.e., for $\alpha \in \Sigma^\omega$ and $\beta \in \Sigma^\infty$, the syntactic concatenation $\alpha \beta$ equals $\alpha$. Furthermore, we syntactically allow for $\bot$ to be concatenated as a word would: for any word $u$, $\bot u = \bot$, the concatenation $u \bot$ is $\bot$ for finite $u$, and is $u$ for infinite $u$.
\begin{lemma}
\label{lemma:sigma-omega}
Let $\Sigma$ be an alphabet and $\sigma$ be a substitution. The following properties hold.
    \begin{description}
    \item[Saturation] For $u \in \Sigma^*$, we have that $\sigma^\omega(u)= \sigma^\omega(\sigma(u))$.
    \item[Distributivity] For $u, v \in \Sigma^*$, $\sigma^\omega(uv) = \sigma^\omega(u) \cdot \sigma^\omega(v)$.  
    \item[Left-expansion] For $a \in \Sigma, u \in \Sigma^+, v \in \Sigma^*$, \\ if $\sigma^\omega(a) = u \cdot \sigma^\omega(a) \cdot \sigma^\omega(v)$, then $\sigma^\omega(a) = u^\omega$.
    \item[Right-expansion] For $a \in \Sigma, u \in \Sigma^*$, if $\sigma(a) = au$, then \\$\sigma^\omega(a) = a \cdot u \cdot \sigma(u) \cdots \sigma^n(u) \cdots$.
    \item[Cycle of Contradiction] Let $a_0, \dots, a_{p-1}$, $p>1$ be distinct letters such that $\mathsf{head}_{\sigma^p}(a_0) = a_0$, and for any $r \in \{1, \dots, p-1\}$, $\mathsf{head}_{\sigma^r}(a_0) = a_r$. We have that\\ $\sigma^\omega(a_0) = \cdots = \sigma^\omega(a_{p-1}) = \bot$.
    \item[Terminal Letters] If, for a letter $a$, $\sigma^\omega(a) = u \in \Sigma^+$, then for all $n \ge |\Sigma|$, $\sigma^n(a) = u$.
\end{description}
\end{lemma}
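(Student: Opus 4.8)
The plan is to prove each of the six properties in turn, reducing the later ones to the earlier ones as much as possible. For \textbf{Saturation}, note that $\sigma^{\omega}(u) = \lim_n \sigma^n(u)$ and $\sigma^{\omega}(\sigma(u)) = \lim_n \sigma^n(\sigma(u)) = \lim_n \sigma^{n+1}(u)$; these two limits are tails of the same sequence, so they agree (including the case $\bot$). \textbf{Distributivity} follows since $\sigma^n(uv) = \sigma^n(u)\sigma^n(v)$ for all $n$, and concatenation is continuous in the chosen topology; one must be slightly careful with the syntactic conventions for $\bot$ and for concatenation onto an infinite word, but these are set up exactly so that passing to the limit is valid regardless of whether $\sigma^{\omega}(u)$ is finite, infinite, or $\bot$. \textbf{Right-expansion} is the standard fact: if $\sigma(a)=au$ then $\sigma^{n+1}(a) = \sigma^n(a)\,\sigma^n(u)$ by induction, so the words $\sigma^n(a)$ are prefixes of one another and the limit is the stated infinite concatenation $a\cdot u\cdot \sigma(u)\cdot\sigma^2(u)\cdots$ (this limit may still be a finite word if all the $\sigma^n(u)$ are eventually empty, but $\sigma$ is non-erasing so $\sigma^n(u)=\varepsilon$ only if $u=\varepsilon$; in any case the displayed identity holds verbatim).

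For \textbf{Left-expansion}, suppose $\sigma^{\omega}(a) = u\cdot\sigma^{\omega}(a)\cdot\sigma^{\omega}(v)$ with $u\in\Sigma^{+}$. Since the right-hand side has $u$ as a proper prefix and then again $\sigma^{\omega}(a)$, iterating the identity shows $u^k$ is a prefix of $\sigma^{\omega}(a)$ for every $k$, so $\sigma^{\omega}(a)$ is infinite and begins with $u^{\omega}$; as an infinite word with $u^{\omega}$ as a prefix it equals $u^{\omega}$. (One should record why $\sigma^{\omega}(a)\neq\bot$ here: the iterated prefixes $u^k$ force convergence.) For \textbf{Terminal Letters}, assume $\sigma^{\omega}(a) = u\in\Sigma^{+}$, i.e.\ the sequence $\sigma^n(a)$ is eventually constantly equal to $u$. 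I would argue that the sequence of lengths $|\sigma^n(a)|$ is non-decreasing once it... no: instead, consider the letters occurring in $\sigma^n(a)$. If $\sigma^{\omega}(a)$ is finite, then $\sigma$ restricted to the finitely many letters appearing in the orbit acts, up to length, in a bounded way; the cleanest route is: let $m$ be least with $\sigma^m(a)=\sigma^{m+1}(a)$ (such $m$ exists since the sequence is eventually constant and, being eventually constant, $\sigma^m(a)=\sigma^{m+1}(a)$ implies $\sigma^k(a)=\sigma^m(a)$ for all $k\ge m$ by applying $\sigma$). It remains to bound $m$ by $|\Sigma|$. For this, look at the first letters: set $a_i = \mathsf{head}_{\sigma^i}(a)$; if $a_i = a_j$ for some $i<j\le |\Sigma|$ then by the \textbf{Cycle of Contradiction} case (with $p=j-i$, after replacing $a$ by $a_i$ and noting $\sigma^{\omega}(a_i)$ is also finite via Saturation) we would get $\sigma^{\omega}(a_i)=\bot$, a contradiction — unless $p=1$, i.e.\ $a_i$ is a fixed letter of $\mathsf{head}_\sigma$; handle that boundary case directly, since then $\sigma^{\omega}(a_i)$ being finite forces $\sigma(a_i)=a_i$ and one can propagate back. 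Hence the $a_0,\dots,a_{|\Sigma|-1}$ are distinct, which pins down the stabilisation time; then a second, analogous pigeonhole on the growth of $|\sigma^n(a)|$ (each step either strictly increases the length or has already stabilised) gives $\sigma^{n}(a)=u$ for $n\ge|\Sigma|$.

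The main obstacle is the \textbf{Cycle of Contradiction}. Here $a_0,\dots,a_{p-1}$ are distinct, $\mathsf{head}_{\sigma^r}(a_0)=a_r$ for $1\le r\le p-1$, and $\mathsf{head}_{\sigma^p}(a_0)=a_0$. I would first observe that the head condition means $\sigma(a_r) = a_{r+1}v_r$ for some $v_r\in\Sigma^{*}$ (indices mod $p$), and that since $p>1$ we have $a_{r+1}\neq a_r$, so $v_r\neq\varepsilon$; in fact, because $a_0,\dots,a_{p-1}$ are distinct, iterating shows that the word $\sigma^p(a_0)$ begins with $a_0$ but is strictly longer than $1$ — precisely, $\sigma^{p}(a_0) = a_0 w$ with $w\neq\varepsilon$ (collecting the non-empty pieces $v_r$ under further substitution). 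Now apply \textbf{Right-expansion} to the letter $a_0$ and the substitution $\tau:=\sigma^p$: we get $\tau^{\omega}(a_0) = a_0\,w\,\tau(w)\,\tau^2(w)\cdots$ with $w\neq\varepsilon$, so $\tau^{\omega}(a_0)$ is a genuine infinite word. On the other hand, $\sigma^{\omega}(a_0) = \lim_n \sigma^n(a_0)$, and one checks $\tau^{\omega}(a_0) = \sigma^{\omega}(a_0)$ whenever the latter is not $\bot$ (subsequence argument as in Saturation). The contradiction is obtained by tracking the second letter, or more robustly the position where the "$a_1$ prefix at scale $\sigma$" competes with the "$a_0$ prefix at scale $\sigma^p$": the words $\sigma^n(a_0)$ for $n$ in the residue class of $1$ mod $p$ start with $a_1$, while those for $n\equiv 0$ start with $a_0$, and $a_0\neq a_1$; hence $\lim_n \sigma^n(a_0)$ does not converge in $\Sigma^{\infty}$, i.e.\ $\sigma^{\omega}(a_0)=\bot$. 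The symmetry of the hypotheses (cyclically permuting the $a_i$) then gives $\sigma^{\omega}(a_1)=\dots=\sigma^{\omega}(a_{p-1})=\bot$ as well, using Saturation to shift the starting letter. The delicate point to get right is exactly this clash-of-prefixes argument — making precise that two infinitely-often-occurring prefixes beginning with distinct letters preclude convergence — together with the bookkeeping that the pieces $v_r$ are non-empty so that nothing collapses.
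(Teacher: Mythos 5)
Four of the six properties --- Saturation, Distributivity, Left-expansion, Right-expansion --- are handled exactly as in the paper, and your final argument for the Cycle of Contradiction (the first letter of $\sigma^n(a_0)$ cycles through the distinct letters $a_0,\dots,a_{p-1}$ with period $p>1$, so the sequence cannot converge and the limit is $\bot$) is precisely the paper's one-line proof. However, the detour you take before reaching it contains a false step: from $\sigma(a_r)=a_{r+1}v_r$ with $a_{r+1}\neq a_r$ you conclude $v_r\neq\varepsilon$, but nothing forces this --- $\sigma$ could simply send $a_r$ to the single letter $a_{r+1}$ (e.g.\ a cyclic permutation of the alphabet), in which case $\sigma^p(a_0)=a_0$ and your appeal to Right-expansion on $\tau=\sigma^p$ yields a finite word, not an infinite one. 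Fortunately that entire detour is dispensable; delete it and keep only the clash-of-first-letters argument.

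The genuine gap is in \textbf{Terminal Letters}. Your plan is a pigeonhole on the head letters $a_i=\mathsf{head}_{\sigma^i}(a)$ followed by ``a second, analogous pigeonhole on the growth of $|\sigma^n(a)|$ (each step either strictly increases the length or has already stabilised).'' That parenthetical dichotomy is false: take $\Sigma=\{a,b,c\}$ with $\sigma(a)=b$, $\sigma(b)=c$, $\sigma(c)=c$; then $|\sigma^n(a)|=1$ for all $n$, yet $\sigma(a)\neq\sigma^2(a)$, so the length neither increases nor has the word stabilised. More fundamentally, controlling the head letters controls only position $0$ of $\sigma^n(a)$, whereas the claim requires all positions to stabilise within $|\Sigma|$ steps, and no amount of bookkeeping on first letters or on total length delivers that. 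The paper's proof works instead with the sets $A_j$ of \emph{all} letters $b$ such that $\sigma^j(b)=\sigma^{j+1}(b)$: one shows $A_{j+1}$ is $A_j$ together with the letters whose image under $\sigma$ uses only letters of $A_j$ (the converse inclusion uses non-erasingness to match up the factors of $\sigma^{j}(\sigma(b))=\sigma^{j+1}(\sigma(b))$ letter by letter), and the increasing chain $A_0\subseteq A_1\subseteq\cdots$ saturates within $|\Sigma|$ steps. You would need to replace your two pigeonholes by an argument of this kind (or at least reduce to it by noting, via Distributivity, that every letter occurring in the orbit of $a$ also has finite $\sigma^\omega$).
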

\begin{proof}
	
	\textbf{Saturation.} The sequences $(\sigma^n(a))_n$ and $(\sigma^{n+1}(a))_{n+1}$ have the same limit.
	
	\textbf{Distributivity.} We have that for all $n$, $\sigma^n(uv) = \sigma^n(u)\sigma^n(v)$. If $\sigma^\omega(u) = \mu \in \Sigma^*$, then for all large $n$, $\sigma^n(uv) = \mu \cdot \sigma^n(v)$. Taking the limit, we get $\sigma^\omega(uv) = \mu \cdot \sigma^\omega(v)$. If $\sigma^\omega(u) \in \Sigma^\omega$, then for every position $j$, there exists $N$ such that for all $n \ge N$, $|\sigma^n(u)| > j$. In other words, every position $j$ is eventually part of $\sigma^n(u)$. Thus, $\sigma^\omega(uv) = \sigma^\omega(u)$. However, if $\beta \in \Sigma^\omega$, then $\beta \mu = \beta$ for all $\mu \in \Sigma^\infty$. If $\sigma^\omega(u) = \bot$, there is some position $j$ such that the letter $\sigma^n(u)(j)$ fluctuates with $n$. This means that the limit of $\sigma^n(u)\sigma^n(v)$ must also be mapped to $\bot$. 
	
	\textbf{Left-expansion.} Follows by repeatedly unrolling the equality.
	
	\textbf{Right-expansion.} Follows by repeatedly applying $\sigma$. 
	
	\textbf{Cycle of Contradiction.} The limit must be $\bot$ as the first letter of $\sigma^n(a_r)$ keeps alternating between $a_0, \dots, a_{p-1}$.
	
	\textbf{Terminal Letters.} We find the set $A$ of letters $a$ such that $\sigma^\omega(a) = u \in \Sigma^+$ by saturation. The key idea is that if $\sigma^\omega(a)$ converges within $n$ iterations, then $\sigma^\omega(\sigma(a))$ must converge within $n-1$ iterations, i.e., for every letter $b$ in $\sigma(a)$, $\sigma^\omega(b)$ must converge within $n-1$ iterations. 
	
	We start with the set $A_0$ of letters $a_0$ such that $\sigma(a_0) = a_0$. We construct $A_{n+1}$ as the union of $A_n$ with the set of letters $a_{n+1}$ such that $\sigma(a_{n+1})$ only contains letters from $A_n$. This construction will saturate within $|\Sigma|$ steps. The invariant is that $A_j$ is the set of $a$ such that $\sigma^\omega(a)$ converges within $j$ iterations. We conclude that since for $n \ge |\Sigma|$, $A_n = A_{|\Sigma|}$, if $\sigma^\omega(a)$ converges in $n$ steps then it must have already converged within $|\Sigma|$ steps.
\end{proof}

The following result, along with distributivity, implies that whether $\sigma^\omega(u)$ for a finite word $u$ is an infinite word accepted by $\mathcal{A}$ is determined by the equivalence class $[\sigma]_L$;
this is the main technical novelty of this section.

\begin{theorem}
    \label{H-converges}
    Let $L \subseteq \Sigma^\omega$ be an $\omega$-regular language. For any $h \in \mathsf{morphisms}_L$, substitutions $\sigma, \tau$ with $\sigma\equiv_L\tau$, and letter $a$, we have that ${h} \circ \sigma^\omega(a) = {h} \circ \tau^\omega(a)$. Moreover, ${h} \circ \sigma^\omega(a)$ can be computed given only the equivalence class $\xi \in \Xi_L$ of $\sigma, \tau$ along with the values of $h, a$.
\end{theorem}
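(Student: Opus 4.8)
The plan is to analyse the orbit $a, \mathsf{head}_\sigma(a), \mathsf{head}_{\sigma^2}(a), \ldots$ of heads under iteration, which depends only on $\xi = [\sigma]_L$ since $\mathsf{head}_\sigma$ is derivable from $\mathsf{segments}_\sigma$. Since $\Sigma$ is finite, this sequence is eventually periodic: there are $n_0 < n_0 + p$ with $\mathsf{head}_{\sigma^{n_0}}(a) = \mathsf{head}_{\sigma^{n_0+p}}(a)$, and we take the first such repetition. By the saturation property (Lem.~\ref{lemma:sigma-omega}) we may replace $a$ by $a' \coloneqq \mathsf{head}_{\sigma^{n_0}}(a)$, since $\sigma^\omega(a) = \sigma^\omega(\sigma^{n_0}(a))$ and $\sigma^{n_0}(a)$ begins with $a'$, so by distributivity $\sigma^\omega(a) = \sigma^\omega(a') \cdot (\text{tail})$; applying $h$ and the fact that an infinite word absorbs any suffix, $h \circ \sigma^\omega(a) = h \circ \sigma^\omega(a') \cdot h(\text{tail})$ — but here I must be careful, because $\sigma^\omega(a')$ could be finite, in which case the tail contributes. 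I would handle this by noting the tail of $\sigma^{n_0}(a)$ after $a'$ is a fixed finite word whose $h$-image is computable from $\mathsf{segments}$ (via iterated use of $\mathsf{tail}$ and $\mathsf{compose}$), and splitting into the case $\sigma^\omega(a')$ infinite versus finite.

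Next I would dispatch the periodic part. If $p > 1$, the "Cycle of Contradiction" case of Lem.~\ref{lemma:sigma-omega} applies with the distinct letters $a' = a_0, a_1, \ldots, a_{p-1}$ given by $a_r = \mathsf{head}_{\sigma^r}(a')$, yielding $\sigma^\omega(a') = \bot$, hence $h \circ \sigma^\omega(a') = h(\bot)$, a value depending only on $\xi$. If $p = 1$, then $\mathsf{head}_\sigma(a') = a'$, i.e.\ $\sigma(a') = a' u$ for some $u \in \Sigma^*$ with $h(u) = \mathsf{tail}_\sigma(a', h)$ computable from $\xi$. By the "Right-expansion" property, $\sigma^\omega(a') = a' \cdot u \cdot \sigma(u) \cdot \sigma^2(u) \cdots$. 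Two subcases: either $u = \varepsilon$ (detectable via $\mathsf{expanding}_\sigma$), in which case $\sigma^\omega(a') = a'$ is the finite one-letter word and $h \circ \sigma^\omega(a') = h(a')$; or $u \neq \varepsilon$, in which case $\sigma^\omega(a')$ is infinite and equals $a' \prod_{k \ge 0} \sigma^k(u)$.

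The remaining task — and the step I expect to be the main obstacle — is to compute $h$ applied to the infinite product $a' \cdot \prod_{k\ge 0} \sigma^k(u)$ using only $\xi$. The key observation is that $h(\sigma^k(u)) = \mathsf{compose}_\sigma^k(h)(u)$, and since $\mathsf{morphisms}_L$ is finite, the sequence of morphisms $h_k \coloneqq \mathsf{compose}_\sigma^k(h)$ is eventually periodic with a period and preperiod bounded by $|\mathsf{morphisms}_L|$, all computable from $\xi$ alone. Thus $h \circ \sigma^\omega(a') = h(a') \cdot h_0(u) \cdot h_1(u) \cdots$ is the $M_L$-product of a finite prefix followed by a periodic sequence over $M_{L,f}$, i.e.\ an element of the form $m_{\mathrm{pre}} \cdot (m_{\mathrm{per}})^\omega$. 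Each $h_i(u)$ is itself computable from $\xi$ via the $\mathsf{segments}$ machinery (iterating $\mathsf{tail}$ and $\mathsf{compose}$ along $u$, or more directly since $h_i \in \mathsf{morphisms}_L$ and $u$ is a fixed word extracted from $\mathsf{segments}_\sigma(a')$), and the infinite product of an eventually periodic sequence in a finite $\omega$-semigroup is computed via the $\omega$-power operation (Wilke, Lem.~\ref{lem::infinite-product-in-monoids-via-automata}). Collecting the three cases ($\bot$; finite one-letter; infinite eventually-periodic product) and folding back the finite tail from the $\mathsf{head}_{\sigma^{n_0}}$ reduction gives a value of $h \circ \sigma^\omega(a)$ depending only on $\xi, h, a$; in particular $\sigma \equiv_L \tau$ forces equality. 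I would close by remarking that every quantity invoked — the preperiod/period of the head orbit, the word $u$, the morphism orbit $(h_k)$, and the final $\omega$-semigroup arithmetic — is effectively computable, establishing the "moreover" clause.
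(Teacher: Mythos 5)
Your toolbox is the right one (Saturation, Distributivity, Right-expansion, Cycle of Contradiction, Terminal Letters, eventual periodicity of $h \mapsto h\circ\sigma$ on the finite set $\mathsf{morphisms}_L$, and the $\omega$-power for the infinite product), and your treatment of the periodic part of the head orbit --- $\bot$ when $p>1$, Right-expansion when $p=1$ --- matches the paper. The gap is in the reduction step. You write $h\circ\sigma^\omega(a)=h\circ\sigma^\omega(a')\cdot h(\text{tail})$ where $\sigma^{n_0}(a)=a'w$; by Distributivity the correct identity is $h\circ\sigma^\omega(a)=h(\sigma^\omega(a'))\cdot h(\sigma^\omega(w))$, and when $\sigma^\omega(a')$ is finite the second factor is $h$ of the \emph{$\sigma^\omega$-image} of $w$, not of $w$ itself. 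Iterating $\mathsf{tail}$ and $\mathsf{compose}$ only yields $h(\sigma^k(w))$ for finite $k$; it cannot produce $h(\sigma^\omega(w))$ when the latter is infinite. Concretely, take $\sigma(a)=bc$, $\sigma(b)=b$, $\sigma(c)=ca$: the head orbit of $a$ stabilises at $b$ with $\sigma^\omega(b)=b$ finite, yet $\sigma^\omega(a)=b\cdot\sigma^\omega(c)$ is infinite because of the letter $c$ in the tail, a letter that never appears in the head orbit of $a$ and so is never examined by your plan.

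Repairing this forces a recursion: each letter occurring in the tail needs its own evaluation of $h\circ\sigma^\omega$, those letters' images introduce further letters, and you must handle re-encountering a letter whose evaluation is in progress (which is where Left-expansion, giving $m^\omega$, and the Cycle of Contradiction genuinely enter, beyond the top-level head orbit). This memoised depth-first evaluation over the $\mathsf{segments}$ factorisation $\sigma(b)=c_1v_1\cdots c_kv_k$ --- together with the observation that $h\circ\sigma^\omega(v_i)$ only matters when all preceding $h\circ\sigma^\omega(c_j)$ lie in $M_{L,f}$, in which case Terminal Letters lets you replace it by $(h\circ\sigma^{|\Sigma|})(v_i)$, readable off $\mathsf{segments}_\xi$ --- is precisely the content of the paper's proof; your single-orbit analysis covers only its top-level call.
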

\begin{proof}
	By Lem.\ \ref{lemma:sigma-omega}, we have a dynamic programming algorithm to compute ${h} \circ \sigma^\omega(a)$ in a ``depth-first'' manner. We shall show that this algorithm only uses $[\sigma]_L = \xi$.
	
	Indeed, assume that $\sigma(b) = c_1 v_1 \cdots c_k v_k$ (where the factorisation is based on the first occurrence of each letter; we get $c_1, \ldots, c_k$ from $\mathsf{introduces}_\xi$). By Saturation, we have that
        \[
        {h} \circ \sigma^\omega(b) =  {h} \circ \sigma^\omega(c_1) \cdot 
	{h} \circ \sigma^\omega(v_1) \cdots
	 {h} \circ \sigma^\omega(c_k) \cdot
	 {h} \circ \sigma^\omega(v_k).
        \]
	We observe that the terms $ {h}\circ \sigma^\omega(v_i)$ will affect the result only if all $ {h}\circ \sigma^\omega(c_j)$ for $j \le i$ are elements of $M_{L,f}$. In this case, $\sigma^\omega(v_i)$ will also be a finite word, and by the Terminal Letters property, be $\sigma^{|\Sigma|}(v_i)$. We denote $h \circ \sigma^{|\Sigma|}$ by $g$; thus in this case, we have $ {h}\circ \sigma^\omega(v_i) = g(v_i)$. We can rewrite
	\begin{equation}
		 {h} \circ \sigma^\omega(b) =  {h} \circ \sigma^\omega(c_1) \cdot g(v_1) \cdots  {h} \circ \sigma^\omega(c_k) \cdot g(v_k).
		\label{eq:evaluate-b}
	\end{equation}
    The elements $g(v_1), \ldots, g(v_k)$ can be obtained from $\mathsf{segments}_\xi(b, g)$. 
    If $c_1 = b$, we are in the simpler case of Right-expansion where $\sigma(b) = bu$, and have that 
	\[
            {h}\circ \sigma^\omega(b) = h(b) \cdot h(u) \cdot h\circ\sigma(u) \cdot h\circ\sigma^2(u)\cdots.
    \]
	The first two terms are easily obtained through $\mathsf{tail}_\xi$, and $h \circ \sigma^n(u) = \mathsf{tail}_\xi(b, \mathsf{compose}_{\xi^n}(h))$. Since the sequence of monoid elements $\xi^n$ is effectively ultimately periodic, so is the sequence of factors in the above infinite product, allowing us to compute it.
	
	If $c_1 \ne b$, we first check that there is no $r$ such that $\sigma^r(c_1) = b$: this can be done with access to $\mathsf{head}_{\xi^r}$ for $r \le |\Sigma|$. If this check fails, we have a Cycle of Contradiction, and have that $ {h} \circ \sigma^\omega(b) = \bot$.
	
	We now evaluate expansion \eqref{eq:evaluate-b}. Write $m_0 = 1_{M_L}$, and 
	\[
            m_i =  {h} \circ \sigma^\omega(c_1) \cdot g(v_1) \cdots  {h} \circ \sigma^\omega(c_k) \cdot g(v_i).
    \]
	Clearly, $m_{i+1} = m_i \cdot  {h}\circ \sigma^\omega(c_{i+1})\cdot g(v_{i+1})$, and
        \[
            {h}\circ \sigma^\omega(b) = m_i \cdot  {h} \circ \sigma^\omega(c_{i+1}) \cdot g(v_{i+1}) \cdots  {h} \circ \sigma^\omega(c_k) \cdot g(v_k).
        \]
	In particular when $m_i \notin M_{L,f}$ for some $i$, then $ {h}\circ \sigma^\omega(b) = m_i$. For each $i$, if $c_i \ne b$, we evaluate $ {h}\circ \sigma^\omega(c_i)$ (we make a recursive call if it has not been evaluated before, otherwise we look up the memoized value). Otherwise, $c_i = b$ and we are in the case of Left-expansion, and get that $ {h}\circ \sigma^\omega(b) = m_{i-1}^\omega$. In any case, we will eventually compute $m_k =  {h} \circ \sigma^\omega(b)$.
	
    The result follows by applying the above depth-first routine to compute $ {h} \circ \sigma^\omega(a)$.
\end{proof}

As a corollary, we obtain that whether $\pi(\sigma^\omega(u))$ belongs to an $\omega$-regular language $L$ is completely determined by the equivalence classes $[\sigma]_L$ and $[\pi]_L$: indeed, acceptance only depends on ${h}_L\circ \pi\circ \sigma^\omega(u)$, which is the same as $h = {\mathsf{compose}_{[\pi]_L} (h_L)}$ applied to $\sigma^\omega(u)$.

\begin{corollary}
\label{morphic-regular-lang-helper}
Let $L \subseteq \Sigma^\omega$ be an $\omega$-regular language. For any letter $a$ and $h \in \mathsf{morphisms}_L$ we can compute ${h} \circ \pi \circ \sigma^\omega(a)$ given only $h, a, [\sigma]_L, [\pi]_L$.
\end{corollary}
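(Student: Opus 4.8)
\textbf{Proof plan for Corollary~\ref{morphic-regular-lang-helper}.}
The plan is to reduce the statement to Theorem~\ref{H-converges} by absorbing $\pi$ into the morphism argument. First I would recall that $\pi$, being a non-erasing substitution, induces an $\omega$-semigroup morphism $\Sigma^\infty \to \Sigma^\infty$, and that for any $h \in \mathsf{morphisms}_L$ the composition $h \circ \pi$ is again an $\omega$-semigroup morphism from $\Sigma^\infty$ into $M_L$; since it is determined by the images of the letters in $M_{L,f}$, it lies in $\mathsf{morphisms}_L$. Precisely, $h \circ \pi = \mathsf{compose}_{[\pi]_L}(h)$, and this is computable from $h$ and the equivalence class $[\pi]_L$ alone by the definition of $\mathsf{compose}$ in Sec.~\ref{sec:S-mod-A}. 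Thus the key identity is
\[
h \circ \pi \circ \sigma^\omega(a) = \big(\mathsf{compose}_{[\pi]_L}(h)\big) \circ \sigma^\omega(a).
\]

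Having rewritten the target expression in this way, I would set $h' \coloneqq \mathsf{compose}_{[\pi]_L}(h) \in \mathsf{morphisms}_L$ and invoke Theorem~\ref{H-converges} with the morphism $h'$, the substitution $\sigma$ (represented by its class $\xi = [\sigma]_L$), and the letter $a$. That theorem asserts both that $h' \circ \sigma^\omega(a)$ depends only on $\xi$ (together with $h'$ and $a$), and that it is effectively computable from these data. Combining, $h \circ \pi \circ \sigma^\omega(a)$ is computable given $h'$, $\xi$, $a$; and since $h'$ is itself computable from $h$ and $[\pi]_L$, it is computable from $h$, $a$, $[\sigma]_L$, $[\pi]_L$, as claimed. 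I should note the one subtlety that $\sigma^\omega(a)$ may equal $\bot$; in that case the syntactic conventions fixed at the start of Sec.~\ref{section-sigma-omega} (namely $u\bot = \bot$ for finite $u$, and $u\bot = u$ for infinite $u$) mean $\pi(\bot)$ is again $\bot$ when $\pi$ is applied, since $\pi$ is non-erasing and thus cannot make the fluctuating position disappear—this case is already handled inside the computation of $h' \circ \sigma^\omega(a)$ by Theorem~\ref{H-converges}, which returns the appropriate value of $M_L \cup \{\text{marker for }\bot\}$, so nothing extra is needed.

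The proof is essentially a one-line application once the composition identity is in place, so there is no real obstacle; the only thing to be careful about is that $\mathsf{compose}$ as defined acts on morphisms by precomposition with $\sigma$, so I must make sure I am using it in the direction $h \mapsto h \circ \pi$, matching the convention in Lem.~\ref{lem:sigma-of-alpha} where $h_L(\pi(\alpha)) = \mathsf{compose}_\pi(h_L)(\alpha)$. Once that bookkeeping is correct, the corollary follows immediately from Theorem~\ref{H-converges}.
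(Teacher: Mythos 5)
Your proposal is correct and matches the paper's argument exactly: the paper likewise rewrites $h \circ \pi \circ \sigma^\omega(a)$ as $\mathsf{compose}_{[\pi]_L}(h)$ applied to $\sigma^\omega(a)$ and then invokes Theorem~\ref{H-converges} with that morphism. Your extra remarks on the direction of $\mathsf{compose}$ and the $\bot$ case are fine but not needed beyond what the theorem already handles.
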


We thus arrive at the following.
\begin{theorem}
\label{morphic-regular-lang}
Let $L \subseteq \Sigma^\omega$ be an $\omega$-regular language and $\sigma, \pi$ be substitutions with respective equivalence classes $\xi, \zeta \in \Xi_L$. We can compute a regular language $\Lcal(\xi, \zeta) \subseteq \Sigma^+$ such that $\pi \circ \sigma^\omega(u) \in L$ if and only if $u \in \Lcal(\xi, \zeta)$.
\end{theorem}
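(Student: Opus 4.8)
The plan is to reduce the statement to Corollary~\ref{morphic-regular-lang-helper} together with the Distributivity property of Lemma~\ref{lemma:sigma-omega}. Fix the morphism $h_L \colon \Sigma^\infty \to M_L$ recognising $L$ and the accepting set $G \subseteq M_{L,\omega}$ with $\beta \in L \iff h_L(\beta) \in G$. We want to decide, for $u = a_1 \cdots a_\ell \in \Sigma^+$, whether $\pi \circ \sigma^\omega(u) \in L$, using only $\xi = [\sigma]_L$, $\zeta = [\pi]_L$, and $u$.

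First I would apply Distributivity (and the fact that $\pi$, viewed as an $\omega$-semigroup morphism, distributes over concatenation) to write
\[
h_L\bigl(\pi \circ \sigma^\omega(u)\bigr) = h_L\bigl(\pi(\sigma^\omega(a_1)) \cdots \pi(\sigma^\omega(a_\ell))\bigr) = \prod_{i=1}^{\ell} h_L\bigl(\pi(\sigma^\omega(a_i))\bigr),
\]
where the product is taken in $M_L$ (a mixed/finite product on the finite prefix of terms lying in $M_{L,f}$, capped by the first term landing in $M_{L,\omega}$, using the $\omega$-semigroup structure exactly as in the proof of Theorem~\ref{H-converges}). By Corollary~\ref{morphic-regular-lang-helper}, each factor $h_L(\pi(\sigma^\omega(a_i))) = h_L \circ \pi \circ \sigma^\omega(a_i)$ depends only on $a_i$, $\xi$, $\zeta$ (taking $h = h_L$ there), and is effectively computable. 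Hence there is an effectively computable function $\psi \colon \Sigma \to M_L$, $\psi(a) = h_L \circ \pi \circ \sigma^\omega(a)$, such that $\pi \circ \sigma^\omega(u) \in L$ iff the $M_L$-product $\psi(a_1) \cdots \psi(a_\ell)$ lies in $G$.

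Next I would build the automaton/regular language. Since $\psi$ is a letter-to-$M_L$ map and the membership test is ``the induced product of $\psi$-images is in $G$'', the set $\Lcal(\xi,\zeta) = \{\, u \in \Sigma^+ : \text{the $M_L$-product of } \psi(u) \in G \,\}$ is recognised by a finite automaton: take states $M_L' = M_L \cup \{1_{M_L}\}$ (the adjoined unit, as in Lemma~\ref{lem::infinite-product-in-monoids-via-automata}), start at $1_{M_L}$, on reading letter $a$ multiply the current state by $\psi(a)$ on the right (collapsing the $M_{L,\omega}$ part once reached — once in $M_{L,\omega}$, further finite multiplications on the right change nothing in the acceptance test by the $\omega$-semigroup axioms), and accept exactly the states in $G \cap$ (reachable set). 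One has to be slightly careful about the mixed-product semantics for short words: if a prefix product already lies in $M_{L,\omega}$, the remaining letters are appended as a mixed product, which is again absorbed; and a single-letter word $u = a$ is accepted iff $\psi(a) \in G$, which the construction handles since $\psi(a) \in M_{L,\omega}$ whenever $\sigma^\omega(a)$ is an infinite word (and if $\sigma^\omega(a)$ is finite or $\bot$ the product is handled accordingly). This gives the desired regular $\Lcal(\xi,\zeta) \subseteq \Sigma^+$, and it is effectively computable because $\psi$, $M_L$, and $G$ all are.

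The main obstacle is bookkeeping around the two-sorted $\omega$-semigroup: being precise about when the product of the $\psi(a_i)$ is a finite ($M_{L,f}$) product versus a mixed product into $M_{L,\omega}$, and ensuring that once a factor $\psi(a_i) \in M_{L,\omega}$ the later factors are genuinely irrelevant (this is exactly the associativity/mixed-product axioms, mirroring how Theorem~\ref{H-converges} truncates at the first $m_i \notin M_{L,f}$). A secondary subtlety is that $\pi \circ \sigma^\omega(u)$ may fail to be an infinite word at all (it could be finite or $\bot$); in those cases it is certainly not in $L \subseteq \Sigma^\omega$, and one must check the construction assigns such $u$ to the non-accepting part — this follows because then some $\psi(a_i)$ lies in $M_{L,f}$ for all $i$ (or the syntactic $\bot$ forces a product outside $G$), so the word is correctly rejected. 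Once these points are pinned down, the automaton construction and the effectiveness claim are routine.
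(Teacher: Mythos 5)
Your proposal is correct and follows essentially the same route as the paper: both reduce to Corollary~\ref{morphic-regular-lang-helper} via Distributivity and then build a deterministic finite-word automaton whose states are (the unit-adjoined) elements of $M_{L,f}\cup M_{L,\omega}$, with transition $(m,a)\mapsto m\cdot(h_L\circ\pi\circ\sigma^\omega)(a)$ and accepting set $G$. Your extra bookkeeping about mixed products, absorption in $M_{L,\omega}$, and the finite/$\bot$ cases only makes explicit what the paper leaves implicit.
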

\begin{proof}
We construct a deterministic finite-word automaton recognising $\Lcal(\xi, \zeta)$. The set of states are the elements of $M_{L,f} \cup M_{L,\omega}$, the initial state is $1_{M_L}$, the set of accepting states is~$G$ (the semigroup elements whose preimages comprise $L$), and the transition function maps $(m, a)$ to $m \cdot ({h_L}\circ \pi \circ \sigma^\omega)(a)$. 
The latter is effective by Cor.\ \ref{morphic-regular-lang-helper}.
\end{proof}

Finally, whether $\sigma$ has a fixed point in $L$ is effectively determined by $[\sigma]_L$.

\begin{theorem}
    \label{H-fixed-point}
    Let $L \subseteq \Sigma^\omega$ be an $\omega$-regular language and $\sigma$ be a substitution. Given $[\sigma]_L$ and $h \in \mathsf{morphisms}_L$, we can compute the set $\{{h}(\alpha) \colon \sigma(\alpha) = \alpha\} \subseteq {M_L}$.
\end{theorem}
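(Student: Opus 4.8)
The plan is to first obtain an exact combinatorial description of the set $\{\alpha \in \Sigma^\omega : \sigma(\alpha) = \alpha\}$ of fixed points of $\sigma$ using only data determined by $[\sigma]_L$, and then to push this description forward through $h$, invoking Theorem~\ref{H-converges} for the ``morphic part'' and Lemma~\ref{lem::infinite-product-in-monoids-via-automata} for the remaining part. Since every fixed point is infinite, the target set lies in $M_{L,\omega}$.

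\emph{Step 1: characterising the fixed points.} Let $F = \{a \in \Sigma : \sigma(a) = a\}$ and $G = \{a \in \Sigma : \mathsf{head}_\sigma(a) = a \text{ and } |\sigma(a)| > 1\}$. These two sets are disjoint and can be read off $\mathsf{segments}_\sigma$ (via $\mathsf{head}_\xi$ and $\mathsf{expanding}_\xi$), hence are determined by $\xi = [\sigma]_L$. I would prove that $\sigma(\alpha) = \alpha$ holds if and only if either $\alpha \in F^\omega$, or $\alpha = w\,\sigma^\omega(c)$ for some $w \in F^*$ and $c \in G$. The ``if'' direction is easy: words over $F$ are fixed letter by letter, and for $w \in F^*$, $c \in G$ we have $\sigma(w\,\sigma^\omega(c)) = \sigma(w)\,\sigma(\sigma^\omega(c)) = w\,\sigma^\omega(c)$ by Saturation, while $\sigma^\omega(c)$ is a genuine infinite word by Right-expansion (Lem.~\ref{lemma:sigma-omega}). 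For the ``only if'' direction, if $\sigma(\alpha) = \alpha$ then $\alpha$ begins with $\sigma(\alpha(0))$, so $\alpha(0) = \mathsf{head}_\sigma(\alpha(0)) \in F \cup G$; if $\alpha(0) \in G$, writing $\sigma(\alpha(0)) = \alpha(0)u$ with $u \neq \varepsilon$ forces $\alpha[1,\infty) = u\,\sigma(\alpha[1,\infty)) = u\,\sigma(u)\,\sigma^2(u)\cdots$, i.e.\ $\alpha = \sigma^\omega(\alpha(0))$ by Right-expansion; and if $\alpha(0) \in F$ then $\alpha[1,\infty)$ is again a fixed point, so we recurse, concluding that either $\alpha \in F^\omega$ or $\alpha = \alpha[0,k)\,\sigma^\omega(\alpha(k))$ with $\alpha[0,k) \in F^*$ and $\alpha(k) \in G$, where $k$ is the first index at which $\alpha(k) \notin F$.

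\emph{Step 2: computing the image under $h$.} From Step~1,
\[
\{h(\alpha) : \sigma(\alpha) = \alpha\} \;=\; h(F^\omega) \,\cup\, \bigcup_{c \in G} \{\, m \cdot y_c : m \in h(F^*)\,\},
\]
where $y_c := h(\sigma^\omega(c)) \in M_{L,\omega}$ and ``$\cdot$'' denotes the mixed product. Each ingredient is effective: $h(F^*)$ equals $\{1_{M_L}\}$ together with the subsemigroup of $M_{L,f}$ generated by $\{h(a) : a \in F\}$, a finite closure computation; each $y_c$ is computable from $\xi$, $h$, and $c$ by Theorem~\ref{H-converges} (and is never $\bot$, since $c \in G$); and $h(F^\omega)$ is precisely the set of infinite products $\pi(m_0, m_1, \ldots)$ with every $m_i \in \{h(a) : a \in F\}$, so by Lemma~\ref{lem::infinite-product-in-monoids-via-automata} we may, for each candidate $x \in M_{L,\omega}$, take the automaton $\Acal_x$, restrict its input alphabet to $\{h(a) : a \in F\}$, and test non-emptiness, which decides whether $x \in h(F^\omega)$ (alternatively, Ramsey's theorem gives $h(F^\omega) = \{\, p \cdot e^\omega : p \in h(F^*),\ e \in h(F^+),\ e = e^2\,\}$). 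Assembling these finitely many elements yields the required set.

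The main obstacle is Step~1: one must check that the recursion through letters of $F$ genuinely exhausts all fixed points and, crucially, that as soon as a letter $c \in G$ is reached the remaining tail is \emph{forced} to equal $\sigma^\omega(c)$ rather than merely some word beginning with $c$ --- this is exactly the content of Right-expansion. Once the characterisation is pinned down, Step~2 is routine bookkeeping inside the finite monoid $M_L$, relying entirely on results already established, principally Theorem~\ref{H-converges}.
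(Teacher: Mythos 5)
Your proof is correct and follows essentially the same route as the paper's: both characterise the fixed points as concatenations of the words $\sigma^\omega(a)$ over letters $a$ with $\sigma(a)=au$ (your $F\cup G$ is exactly the paper's set $A$), and both then compute the image under $h$ by closing $\{h\circ\sigma^\omega(a)\colon a\in A\}$ under the $\omega$-semigroup operations, relying on Thm.~\ref{H-converges}; your argument merely spells out the ``concatenation'' observation that the paper leaves implicit. The only cosmetic divergence is that the paper's sub-$\omega$-semigroup also retains the finite-sort elements $h(F^+)$ (accounting for finite fixed points, consistent with the statement's ``$\subseteq M_L$''), whereas you restrict attention to infinite fixed points and hence to $M_{L,\omega}$.
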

\begin{proof}
We observe that any fixed point of $\sigma$ must be a concatenation of words $\sigma^\omega(a)$, where $\sigma(a) = au$ for $u \in \Sigma^*$.
Write $A$ for the set of all such letters $a$, which can be extracted from $\mathsf{introduces}_\sigma$.
The required set is then the sub-$\omega$-semigroup of ${M_L}$ generated by $\{{h} \circ \sigma^\omega(a) \colon a \in A\}$, whose elements in turn can be computed using Thm.\ \ref{H-converges}.
\end{proof}

\section{\texorpdfstring{$S$-adic words in a given language}{S-adic words accepted by A}}
\label{sec::aut-acceptance}

We now present our main results, i.e.,  solutions to the language membership problem for $S$-adic words. 
In this section, let $L \subseteq \Sigma^\omega$ be an $\omega$-regular language recognised by a morphism $h_L$ into a finite $\omega$-semigroup $M_L$, and let $H \subseteq M_{L, \omega}$ be the elements whose preimages constitute $L$. Recall $\Xi_L$ the set of equivalence classes of substitutions modulo $M_L$, as defined in Sec.~\ref{sec:S-mod-A}.
We will first prove that for congenial $\hat{s}$ over $S(\Sigma) \times \Sigma$, defining a word accepted by $\Acal$ is a property of $\mathsf{trace}_L(\hat{s})$, and the set of all such traces is $\omega$-regular.

\begin{theorem}[Main Result for Generated Words]
    \label{cor::congenial-automaton-version} 
Let $L \subseteq \Sigma^\omega$ be an $\omega$-regular language. 
   We can compute an automaton $\Bcal$ over $\Xi_L \times \Sigma$ such that for all infinite sequences $\hat s$ over $S(\Sigma) \times \Sigma$, $\Bcal$~accepts $\mathsf{trace}_L(\hat s)$ if and only if $\hat s$ is congenial and generates a word in $L$.
\end{theorem}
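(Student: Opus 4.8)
The strategy is to build $\Bcal$ so that it simultaneously (i) verifies that the input sequence $(\xi_n, a_n)_{n\in\nat}$ over $\Xi_L \times \Sigma$ is the trace of a congenial sequence, and (ii) tracks enough information to decide whether the word $\alpha = \lim_n \sigma_0\cdots\sigma_n(a_n)$ lies in $L$. Condition (i) is a safety condition and is easy: congeniality of $(\sigma_n,a_n)_{n\in\nat}$ means $\sigma_{n+1}(a_{n+1})$ begins with $a_n$, i.e., $\mathsf{head}_{\xi_{n+1}}(a_{n+1}) = a_n$, which depends only on $\xi_{n+1}$ and on the letters; so a deterministic automaton reading the pair stream can reject as soon as this fails. (We must also be slightly careful about the very first coordinate; since $\sigma_0\cdots\sigma_n(a_n)$ for $n=0$ contributes $\sigma_0(a_0)$, there is no constraint on $a_0$ from a "previous" letter.)

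**The word recognised by a congenial sequence, in $M_L$-terms.** By Lem.~\ref{lem::congenial-limit}, writing $\sigma_n(a_n) = a_{n-1} v_n$ for $n\ge 1$, we have $\alpha = \sigma_0(a_0)\cdot \sigma_0(v_1)\cdot \sigma_0\sigma_1(v_2)\cdots$, a genuine (prefix-monotone) factorisation into finite words. Now fix a morphism $h_L$ recognising $L$ via $H\subseteq M_{L,\omega}$. Then $h_L(\alpha)$ is the infinite product in $M_L$ of the sequence $h_L(\sigma_0(a_0)),\ h_L(\sigma_0(v_1)),\ h_L(\sigma_0\sigma_1(v_2)),\ \dots$. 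The $n$-th factor is $h_L\circ(\sigma_0\cdots\sigma_{n-1})(v_n) = \big(\mathsf{compose}_{\xi_0}\circ\cdots\circ\mathsf{compose}_{\xi_{n-1}}(h_L)\big)(v_n)$, and $h_L(v_n)$ itself — together with the letter $a_{n-1}$ being the head — is exactly the data packaged in $\mathsf{segments}_{\xi_n}(a_n,\cdot)$ (more precisely $\mathsf{tail}_{\xi_n}(a_n, g)$ gives $g(v_n)$ for any morphism $g$). Crucially, $\mathsf{compose}_\xi$ maps $\mathsf{morphisms}_L$ to itself and there are only finitely many morphisms, so the state $g_n := \mathsf{compose}_{\xi_0}\circ\cdots\circ\mathsf{compose}_{\xi_{n-1}}(h_L)$ ranges over the finite set $\mathsf{morphisms}_L$ and is updated deterministically by $g_{n+1} = \mathsf{compose}_{\xi_n}(g_n)$. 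Hence the automaton, while reading $(\xi_n,a_n)$, can maintain $g_n$ and emit the $n$-th semigroup factor $m_n := \mathsf{tail}_{\xi_n}(a_n, g_{n-1})$ for $n\ge 1$ (with $m_0 := h_L(\sigma_0(a_0)) = h_L(a_0)\cdot\mathsf{tail}_{\xi_0}(a_0,h_L)$).

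**Computing the infinite product and closing the construction.** We now need $\Bcal$ to accept iff the infinite product $m_0 m_1 m_2 \cdots$ in $M_L$ lies in $H$. This is precisely where Lem.~\ref{lem::infinite-product-in-monoids-via-automata} is used: for each $x \in M_{L,\omega}$ there is an automaton $\Acal_x$ over $M_{L,f}'$ accepting exactly those sequences whose infinite product equals $x$; take the (finite) union over $x\in H$ to get an automaton $\Acal_H$ over $M_{L,f}'$ recognising "infinite product in $H$". The desired $\Bcal$ is then the product of three components, all reading the alphabet $\Xi_L\times\Sigma$ in lockstep: a deterministic safety automaton checking congeniality of the trace (head condition) and rejecting otherwise; a deterministic component carrying the state $g_n\in\mathsf{morphisms}_L$ and thereby computing the emitted factor $m_n\in M_{L,f}'$ at each step purely from $(\xi_n,a_n)$ and its current state; and a copy of $\Acal_H$ fed the stream $(m_n)_{n\in\nat}$ produced by the second component. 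Taking the synchronous product (Büchi condition inherited from $\Acal_H$, intersected with the safety condition) yields $\Bcal$, and by Lem.~\ref{lem::congenial-limit} together with the fact that $h_L$ is an $\omega$-semigroup morphism, $\Bcal$ accepts $\mathsf{trace}_L(\hat s)$ exactly when $\hat s$ is congenial and $h_L(\alpha)\in H$, i.e., $\alpha\in L$. One should also note that congeniality guarantees the limit $\alpha$ genuinely exists as an infinite (or at least non-degenerate) object so that the factorisation is valid — but if $\alpha$ happens to be finite (all but finitely many $v_n$ empty), the infinite product in $M_{L,\omega}$ is still well-defined via the $1_M$-padding convention of Lem.~\ref{lem::infinite-product-in-monoids-via-automata} and will not lie in $H\subseteq M_{L,\omega}$ unless one is careful; I would handle this by having $\Acal_H$ only accept when infinitely many non-identity factors appear, consistent with the requirement that $\alpha$ be an infinite word in $L\subseteq\Sigma^\omega$.

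**Main obstacle.** The conceptual heart — and the step I expect to require the most care — is verifying that the $n$-th factor of $h_L(\alpha)$ really is a function of $(\xi_n, a_n)$ and the running state $g_{n-1}$ alone, i.e., that $\mathsf{tail}_{\xi_n}(a_n, g_{n-1})$ correctly equals $h_L((\sigma_0\cdots\sigma_{n-1})(v_n))$; this hinges on $v_n \in \{c : c \text{ occurs in } \sigma_n(a_n)\}^*$ being a suffix after the head, on $h_L\circ(\sigma_0\cdots\sigma_{n-1}) = g_{n-1}$ being a well-defined element of the finite set $\mathsf{morphisms}_L$, and on the compatibility of $\mathsf{tail}$ with $\equiv_L$ established in Sec.~\ref{sec:S-mod-A}. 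Everything else is a routine synchronous product of finite automata.
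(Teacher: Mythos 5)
Your proof is correct and follows essentially the same route as the paper: check congeniality as a safety condition via $\mathsf{head}$, use Lem.~\ref{lem::congenial-limit} to factorise the generated word into finite blocks, maintain the composed morphism as finite state so each block's $h_L$-image is computable from $(\xi_n,a_n)$ via $\mathsf{tail}$, and feed the resulting stream to the infinite-product automata of Lem.~\ref{lem::infinite-product-in-monoids-via-automata} (the paper packages this per element $x$ as Lem.~\ref{lem::congenial-monoid-version} and takes the union over $x\in H$, exactly your $\Acal_H$). The only nit is your formula for $m_0$: since congeniality imposes no constraint on the first letter of $\sigma_0(a_0)$, it should be $h_L(\mathsf{head}_{\xi_0}(a_0))\cdot\mathsf{tail}_{\xi_0}(a_0,h_L)$ rather than $h_L(a_0)\cdot\mathsf{tail}_{\xi_0}(a_0,h_L)$, which is still a function of $(\xi_0,a_0)$ and so does not affect the construction.
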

\begin{proof}
   The automaton $\Bcal$ accepts $\hat  s \in (\Xi_L \times \Sigma)^\omega$ if and only if $\hat s$ is accepted by $\Bcal_x$ (constructed in Lem.~\ref{lem::congenial-monoid-version} below) for some $x \in H$. Intuitively, $\Bcal_x$, upon reading $\hat s$, checks that the expansion is congenial, and uses the property to map consecutive factors $u_n$ of the generated word to $h_L(u_n) \in M_{L,f} \cup \{1_{M_L}\}$. It then simulates the run of $\Acal_x$ (from Lem.~\ref{lem::infinite-product-in-monoids-via-automata}) on this stream of images.
\end{proof}

\begin{lemma}
    \label{lem::congenial-monoid-version}
    Let $x \in M_{L,f} \cup M_{L, \omega}$.
    We can construct an automaton $\Bcal_x$ over $\Xi_L \times \Sigma$ such that for all infinite sequences $\hat s$ over $S(\Sigma) \times \Sigma$, $\Bcal_x$ accepts $\mathsf{trace}_L(\hat s)$ if and only if $\hat s$ is congenial and ${h_L}(\alpha) = x$, where $\alpha$ is the word generated by $\hat s$.
\end{lemma}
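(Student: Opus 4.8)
The plan is to construct $\Bcal_x$ as a product of three ingredients, processing the input $\mathsf{trace}_L(\hat{s}) = (\xi_n, a_n)_{n \in \nat}$ incrementally. First, I would check congeniality: at step $n$ the automaton needs to verify that $\sigma_{n+1}(a_{n+1})$ begins with $a_n$, i.e.\ that $\mathsf{head}_{\xi_{n+1}}(a_{n+1}) = a_n$. Since $\mathsf{head}_\xi$ depends only on $\xi$, this is a local check that $\Bcal_x$ can perform by keeping the previous letter $a_n$ in its state and reading $(\xi_{n+1}, a_{n+1})$; if the check ever fails, $\Bcal_x$ moves to a rejecting sink.

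Second, assuming congeniality, I would use Lem.~\ref{lem::congenial-limit} to extract the stream of ``increments''. Writing $\sigma_{n}(a_{n}) = a_{n-1} v_{n}$, Lem.~\ref{lem::congenial-limit} tells us that $\alpha = \sigma_0(a_0) \cdot \sigma_0(v_1) \cdot \sigma_0\sigma_1(v_2) \cdots$, so $\alpha$ is the (possibly finite) concatenation of the blocks $w_0 = \sigma_0(a_0)$ and $w_n = \sigma_0 \cdots \sigma_{n-1}(v_n)$ for $n \ge 1$. The key point is that $h_L(w_n)$ can be computed from bounded information: $h_L(w_n) = \big(\mathsf{compose}_{\xi_0 \cdots \xi_{n-1}}(h_L)\big)(v_n)$, and $(\mathsf{compose}_{\xi_0 \cdots \xi_{n-1}}(h_L))$ evolves by the rule $h \mapsto \mathsf{compose}_{\xi_n}(h)$, so it can be tracked as a state component ranging over the finite set $\mathsf{morphisms}_L$. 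The value $v_n$ itself is not available, but $\mathsf{tail}_{\xi_n}(a_n, h) = h(v_n)$ for any $h$, so from $\xi_n$, $a_n$, and the current composed morphism we recover exactly $h_L(w_n) \in M_{L,f}$ (or $1_{M_L}$ if $v_n = \varepsilon$, i.e.\ $w_n$ is empty). For the first block, $h_L(w_0) = h_L(\sigma_0(a_0)) = \mathsf{compose}_{\xi_0}(h_L)(a_0)$, again computable from the trace. Thus $\Bcal_x$ produces, letter by letter, a stream $(m_n)_{n\in\nat}$ over $M_{L,f} \cup \{1_{M_L}\}$ with the property that the infinite product of the $m_n$ (discarding $1_{M_L}$ terms, as in Lem.~\ref{lem::infinite-product-in-monoids-via-automata}) equals $h_L(\alpha) = x$ precisely when $\alpha$ is the generated word and has this image.

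Third, I would feed this stream into the automaton $\Acal_x$ from Lem.~\ref{lem::infinite-product-in-monoids-via-automata} over the alphabet $M_f' = M_{L,f} \cup \{1_{M_L}\}$, which accepts exactly the sequences whose infinite product is $x$. Then $\Bcal_x$ is the product automaton whose state records (i) the previous letter $a_n$ (for the congeniality check), (ii) the current composed morphism $\mathsf{compose}_{\xi_0\cdots\xi_{n-1}}(h_L) \in \mathsf{morphisms}_L$, and (iii) the state of $\Acal_x$; at each step it reads $(\xi_{n+1}, a_{n+1})$, validates congeniality, computes $m_{n+1}$ as above, updates the $\Acal_x$-component on input $m_{n+1}$, and updates the composed morphism. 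The acceptance condition is inherited from $\Acal_x$ (with congeniality failure leading to a rejecting sink). This is a Büchi (or parity) automaton since all three components are finite. One subtlety to handle carefully is the degenerate case where the generated ``word'' is finite — i.e.\ all but finitely many $v_n$ are empty — in which case $\alpha \in \Sigma^+$ and $h_L(\alpha) \in M_{L,f}$; Lem.~\ref{lem::infinite-product-in-monoids-via-automata} already accommodates $x \in M_f$, so $\Acal_x$ handles this, and we only need $\Bcal_x$ to forward the finitely many nonempty $m_n$ correctly and then a tail of $1_M$ symbols.

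The main obstacle I anticipate is the bookkeeping around what information is genuinely available from the \emph{trace} rather than from the substitutions themselves: we never see $\sigma_n$ or $v_n$, only $\xi_n$ and $a_n$, so every quantity the automaton computes must be justified as a function of $\mathsf{segments}_\xi$ and its derived maps $\mathsf{head}_\xi$, $\mathsf{tail}_\xi$, $\mathsf{compose}_\xi$, $\mathsf{introduces}_\xi$. The crucial identity making this work is $\mathsf{tail}_{\xi_n}(a_n, \mathsf{compose}_{\xi_0\cdots\xi_{n-1}}(h_L)) = (\mathsf{compose}_{\xi_0\cdots\xi_{n-1}}(h_L))(v_n) = h_L(\sigma_0\cdots\sigma_{n-1}(v_n)) = h_L(w_n)$, which I would verify carefully, together with the observation (from the composition analysis in Sec.~\ref{sec:S-mod-A}) that $\mathsf{compose}$ is compatible with the monoid structure of $\Xi_L$ so that tracking the composed morphism step by step is sound. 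Everything else — the product construction, the congeniality sink, invoking Lem.~\ref{lem::infinite-product-in-monoids-via-automata} — is routine.
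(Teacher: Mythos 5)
Your proposal is correct and follows essentially the same route as the paper's proof: a congeniality check via $\mathsf{head}_\xi$, incremental recovery of $h_L(u_n)$ through $\mathsf{tail}_{\xi_n}(a_n, \mathsf{compose}_{\xi_0\cdots\xi_{n-1}}(h_L))$ using Lem.~\ref{lem::congenial-limit}, and simulation of $\Acal_x$ from Lem.~\ref{lem::infinite-product-in-monoids-via-automata} on the resulting stream. The only cosmetic difference is that you track the composed morphism in $\mathsf{morphisms}_L$ while the paper tracks the product $\xi_0\cdots\xi_{n-1}$ in $\Xi_L$; both are finite and interchangeable.
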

\begin{proof}
	Recall that a sequence $s = (\sigma_n,a_n)_{n\in\nat}$ over $  \Xi_L \times \Sigma$ is congenial if and only if $\mathsf{head}_{\xi_n}(a_n) = a_{n-1}$ for all $n \ge 1$, where $\xi_n = [\sigma_n]_L$.
	This property depends only on $\mathsf{trace}_L(s)$.
	The automaton $\Bcal_x$, first and foremost, checks the condition above for all $n\ge 1$, and permanently transitions into a rejecting state if it observes violating $\xi_n, a_n, a_{n-1}$.
	
	Now suppose $s = (\sigma_n,a_n)_{n\in\nat}$ is congenial, and define $\xi_n$ as above.
	As shown in Lem.~\ref{lem::congenial-limit}, 
	\begin{equation*}
		\sigma_0\cdots\sigma_{n}(a_n) = \sigma_0(a_0) \cdot \sigma_0(v_1) \cdot \sigma_0\sigma_1(v_2) \cdots (\sigma_0\cdots \sigma_{n-1}(v_n))
	\end{equation*}
	for all $n$, where $v_n$ satisfies $\sigma_n(a_n) = a_{n-1}v_n$.
	Let $u_0 = \sigma_0(a_0)$ and $u_n = \sigma_0\cdots \sigma_{n-1}(v_n)$ for $n \ge 1$.
	By the properties of infinite products in $M_L$ (see Sec.~\ref{sec::aut-semigroups}) we have that ${h_L}(\alpha) = x$ if and only if $\prod_{n=0}^\infty h_L(u_n) = x$.
	The automaton $\Bcal_x$ simulates the run of the automaton $\Acal_x$ of Lem.~\ref{lem::infinite-product-in-monoids-via-automata} on the sequence $(h_L(u_n))_{n\in\nat}$.
	It remains to show how the automaton keeps track of $h_L(u_n)$ as it reads $(\xi_n, a_n)_{n\in\nat}$.
	We have that
	\begin{align*}
		h_L(u_0) 
		&= (\mathsf{compose}_{\xi_0}(h_L)) (a_0), \\
		h_L(u_{n}) &=  \mathsf{tail}_{\xi_{n}}(a_n, \mathsf{compose}_{\xi_0\cdots\xi_{n-1}}(h_L))
	\end{align*}
	for $n \ge 1$.
	The automaton $\Bcal_x$ keeps track of one piece of information $\xi \in \Xi_L$, in addition to the state required for simulating a run of $\Acal_x$ on $(h_L(u_n))_{n\in\nat}$.
	%The initial value of $z$ is the identity element of $\Xi_L$, 
	Before reading $(\xi_{n}, a_n)$ the value of $\xi$ is $\xi_0\cdots \xi_{n-1}$, where the empty product (corresponding to the initial value of $\xi$) is the identity element of $\Xi_L$.
	Upon reading $(\xi_n,a_n)$, the automaton $\Bcal_x$ first computes $h_L(u_n)$, in which the value of $\xi$ is used, then feeds the computed value to $\Acal_x$, and finally updates $\xi$ to $\xi \cdot \xi_n$.
	Finally, $\Bcal_x$ accepts $(\sigma_n,a_n)_{n\in\nat}$ if and only if $\Acal_x$ accepts $(h_L(u_n))_{n\in\nat}$.
\end{proof}

When $S$ is finite, for every individual $\sigma \in S(\Sigma)$ we can compute the equivalence class~$[\sigma]_L$, which yields the following. 
\begin{corollary}
    \label{cor::congenial-main-finite}
    Let $S \subseteq S(\Sigma)$ be finite.
    We can compute an automaton $\Bcal$ over $S \times \Sigma$ such that $\Bcal$ accepts $\hat s$ if and only if $\hat s$ is congenial and generates a word in $L$.
\end{corollary}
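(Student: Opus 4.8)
The plan is to obtain $\Bcal$ by relabelling the input alphabet of the automaton produced by Theorem~\ref{cor::congenial-automaton-version}. First I would observe that, since $S$ is finite and $\mathsf{segments}_\sigma$ is computable directly from $\sigma$ (for each letter $a$ one computes the word $\sigma(a)$, factors it into segments as in the definition of $\mathsf{segments}_\sigma$, and applies each $h \in \mathsf{morphisms}_L$ to the segment words), we can compute the equivalence class $[\sigma]_L \in \Xi_L$ for every $\sigma \in S$. This yields an effective letter-to-letter map $\phi \colon S \times \Sigma \to \Xi_L \times \Sigma$ given by $\phi(\sigma, a) = ([\sigma]_L, a)$, and for any sequence $\hat s = (\sigma_n, a_n)_{n\in\nat}$ over $S \times \Sigma$ the pointwise image $\phi(\hat s)$ is exactly $\mathsf{trace}_L(\hat s)$.

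Next I would take the automaton $\Bcal'$ over $\Xi_L \times \Sigma$ guaranteed by Theorem~\ref{cor::congenial-automaton-version} and let $\Bcal$ have the same state set, initial states, and acceptance condition, but with transition relation $\{(q, (\sigma,a), q') : (q, \phi(\sigma,a), q') \text{ is a transition of } \Bcal'\}$. Since $S \times \Sigma$ is finite and $\phi$ is computable, this is an effective construction. A run of $\Bcal$ on $\hat s$ is then precisely a run of $\Bcal'$ on $\phi(\hat s)$, so $\Bcal$ accepts $\hat s$ if and only if $\Bcal'$ accepts $\mathsf{trace}_L(\hat s)$, which by Theorem~\ref{cor::congenial-automaton-version} happens if and only if $\hat s$ is congenial and generates a word in $L$.

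The only point worth spelling out — and it is not a real obstacle — is that the two notions of congeniality match up under $\phi$: a sequence $(\sigma_n, a_n)_{n\in\nat}$ over $S \times \Sigma$ is congenial iff $\mathsf{head}_{\sigma_{n+1}}(a_{n+1}) = a_n$ for all $n$, and since $\mathsf{head}$ depends only on the equivalence class this coincides with congeniality of $\mathsf{trace}_L(\hat s)$ over $\Xi_L \times \Sigma$; likewise, by the results of Sec.~\ref{sec:S-mod-A}, whether the word generated by $\hat s$ lies in $L$ is determined by $\mathsf{trace}_L(\hat s)$. Hence no information is lost in passing from $\hat s$ to its trace, and the desired equivalence transfers verbatim from Theorem~\ref{cor::congenial-automaton-version}.
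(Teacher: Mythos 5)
Your proposal is correct and matches the paper's (very brief) argument: the paper likewise observes that for finite $S$ each class $[\sigma]_L$ is computable and then invokes Theorem~\ref{cor::congenial-automaton-version}, which is exactly your relabelling via the map $\phi(\sigma,a)=([\sigma]_L,a)$. You simply spell out the details the paper leaves implicit.
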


We next consider $S$-directed words, which are products of congenial words by Lem.~\ref{lem::congenial-concatenate}.

\begin{theorem}[Main Result for Directed Words]
\label{desubstitutible-dream}
Let $L \subseteq \Sigma^\omega$ be an $\omega$-regular language.
We can construct an automaton $\mathcal{B}$ over~$\Xi_L$ such that for all $s \in S(\Sigma)^\omega$, $\Bcal$ accepts $\mathsf{trace}_L(s)$ if and only if $s$ directs some $\alpha \in L$.
\end{theorem}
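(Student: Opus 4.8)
The plan is to reduce directed words to congenial ones using Lem.~\ref{lem::congenial-concatenate}: a word $\alpha$ is directed by $s$ if and only if it is a (possibly infinite) concatenation $u_0 u_1 \cdots$ of $s$-congenial words. Whether such a concatenation lies in $L$ depends only on the $M_L$-images $h_L(u_i)$, via the infinite (or finite) product. So the automaton $\Bcal$ over $\Xi_L$ must, upon reading $\mathsf{trace}_L(s) = (\xi_n)_{n\in\nat}$, guess how the directed word factorises into congenial blocks, compute the $M_L$-image of each block, and verify that the resulting product lies in $H$. The key point is that ``the set of possible $M_L$-images of an $s$-congenial word'' is itself determined by $\mathsf{trace}_L(s)$ and can be tracked by a finite automaton.

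First I would make precise the ``block-image'' information. For a congenial expansion $(\sigma_n, a_n)_{n\ge m}$ starting at index $m$, Lem.~\ref{lem::congenial-limit} expresses the generated word (shifted to start at $m$) as $\sigma_m(a_m)\cdot \sigma_m(v_{m+1}) \cdot \sigma_m\sigma_{m+1}(v_{m+2})\cdots$ where $\sigma_n(a_n) = a_{n-1} v_n$; exactly as in the proof of Lem.~\ref{lem::congenial-monoid-version}, the $M_L$-image of this word is $\mathsf{compose}_{\xi_m}(h_L)(a_m) \cdot \prod_{n> m} \mathsf{tail}_{\xi_n}(a_n, \mathsf{compose}_{\xi_m\cdots\xi_{n-1}}(h_L))$, which depends only on $(\xi_n)_{n\ge m}$ and the chosen letter sequence $(a_n)$. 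Crucially, when $\alpha$ is directed but a block $u_i$ is \emph{finite}, the congenial expansion of that block must stabilise ($\sigma_n(a_n) = a_{n-1}$ eventually), and by the ``Terminal Letters''-type reasoning of Sec.~\ref{section-sigma-omega} the letter sequence is eventually forced; but for an \emph{infinite} block the expansion may continue forever. In either case, the automaton does not see the letters $a_n$ directly — they are existentially quantified — so $\Bcal$ must nondeterministically guess the block structure together with the letters $a_n$ and the choice of whether each block is finite or infinite.

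The construction of $\Bcal$ then proceeds as a Büchi automaton that, reading $(\xi_n)_{n\in\nat}$, maintains: (i) the running product $\zeta = \xi_{m}\cdots\xi_{n-1}$ within the current block (so that $\mathsf{compose}_\zeta(h_L)$ is available, and the congeniality check $\mathsf{head}_{\xi_n}(a_n) = a_{n-1}$ can be performed), (ii) the current letter $a_{n-1}$ and a guessed next letter $a_n$, (iii) a guessed mode (``current block is still open, will be finite/infinite'' versus ``ready to close''), and (iv) the state of the automaton $\Acal_x$ of Lem.~\ref{lem::infinite-product-in-monoids-via-automata} simulated on the stream of $h_L(u_i)$ for completed blocks. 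When $\Bcal$ decides to close a finite block it verifies the expansion has stabilised appropriately, emits $h_L(u_i)$ to $\Acal_x$, resets $\zeta$ and starts a new block; to close the final (infinite) block it transitions $\Acal_x$ into its ``last factor'' handling. Acceptance requires, for some $x\in H$, that $\Acal_x$ accepts, and that all congeniality and stabilisation guesses were correct (the latter checked on-the-fly). Since there are finitely many $x \in H$, we take the union of the resulting automata over all $x$.

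\textbf{Main obstacle.} The delicate point is handling the case where the directed word is itself a \emph{single infinite congenial block} versus a finite or infinite concatenation of shorter blocks, and, within that, correctly detecting via a finite automaton when a congenial sub-expansion generates a \emph{finite} word so that the block can legitimately be closed. Closing a block prematurely (when the congenial expansion has not yet stabilised) would compute a wrong $h_L(u_i)$; never closing a block that must be finite would fail to produce the right factorisation. Resolving this requires arguing — along the lines of the ``Terminal Letters'' property in Lem.~\ref{lemma:sigma-omega} and the finiteness of $\mathsf{congenials}_s$ (Lem.~\ref{congenial-existence-finite}) — that whether the tail $(\sigma_n, a_n)_{n\ge m}$ generates a finite word, and what that finite word's $M_L$-image is, is an $\omega$-regular property of $(\xi_n)_{n\ge m}$ with the $a_n$ existentially quantified, so it can be folded into the (nondeterministic) finite-state control of $\Bcal$. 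Everything else is bookkeeping that reuses Lem.~\ref{lem::congenial-monoid-version} and Lem.~\ref{lem::infinite-product-in-monoids-via-automata} essentially verbatim.
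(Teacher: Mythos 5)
Your reduction via Lem.~\ref{lem::congenial-concatenate} is the right starting point, but the automaton you then build rests on a misreading of that lemma's factorisation, and this is a genuine gap. The congenial factors $u_0,u_1,\ldots$ of a directed word are \emph{all} $s$-congenial for the \emph{full} directive sequence $s=(\sigma_n)_{n\ge 0}$: each $u_i$ is generated by a congenial sequence $(\sigma_n,a^{(i)}_n)_{n\in\nat}$ that starts at index $0$ and consumes the entire sequence; only the letter choices differ between factors. (In the proof of Lem.~\ref{lem::congenial-concatenate}, after peeling off a finite congenial prefix $v$, the remaining suffix $\gamma$ is again directed by the same full $s$, and the next factor is again $s$-congenial from index $0$.) The factors do \emph{not} correspond to consecutive intervals $[m_i,m_{i+1})$ of the directive sequence. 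Consequently an automaton that reads $(\xi_n)_{n\in\nat}$ once, maintains a running product $\xi_m\cdots\xi_{n-1}$ ``within the current block'', closes a block at some position $n$, emits $h_L(u_i)$ to $\Acal_x$, and resets --- your items (i)--(iv) --- cannot compute the images $h_L(u_i)$: every single factor needs the whole infinite trace, and there may be infinitely many factors. Your declared ``main obstacle'' (detecting stabilisation of a finite block) is therefore not the real obstacle; the sequential block-closing architecture itself does not match the combinatorics.

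The paper avoids this entirely by exploiting Lem.~\ref{congenial-existence-finite}: $\mathsf{congenials}_s$ has at most $|\Sigma|$ elements, so its image $X_s=h_L(\mathsf{congenials}_s)\subseteq M_{L,f}\cup M_{L,\omega}$ is a small set, and the $h_L$-images of words directed by $s$ are exactly the elements of the sub-$\omega$-semigroup generated by $X_s$. One precomputes, offline on the finite $\omega$-semigroup, the family $\Xcal$ of subsets $X$ whose generated sub-$\omega$-semigroup meets the accepting set $H$; the automaton then only has to certify, for some guessed $X\in\Xcal$, that $X\subseteq X_s$. It does so by running in parallel, each on the same full trace, one copy of $\Ccal_x$ (the projection of $\Bcal_x$ from Lem.~\ref{lem::congenial-monoid-version} onto $\Xi_L$, with the letters existentially quantified) for each $x\in X$, and accepting via the Boolean combination $\bigvee_{X\in\Xcal}\bigwedge_{x\in X}\bigl(\mathsf{trace}_L(s)\in\Lcal(\Ccal_x)\bigr)$. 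This replaces your on-line block-closing and finite-word detection with an off-line semigroup computation plus a Boolean combination of the already-established congenial-word automata; that restructuring is what your argument needs.
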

\begin{proof}
Denote the image of $\mathsf{congenials}_s$ under $h_L$ by $X_s \subset M_{L, f} \cup M_{L, \omega}$. 
Recall from Lem.~\ref{lem::congenial-concatenate} that directed words are obtained by concatenating congenial words. To prove Thm.~\ref{desubstitutible-dream}, we need to recognise the set of all $\mathsf{trace}_s$ for which the sub-$\omega$-semigroup generated by $X_s$ intersects the accepting set~$H$. 
We can precompute a set $\Xcal$ of sets $X$ that generate sub-$\omega$-semigroups intersecting~$H$. 
Our automaton $\Bcal$ needs to check that $X_s$ contains at least one such set $X$. We denote by $\Ccal_x$ the projection of $\Bcal_x$ from Thm.~\ref{lem::congenial-monoid-version} to $\Xi_L$, and observe:
$
s \in \Lcal(\Bcal)$ if and only if $\bigvee_{X \in \Xcal}\bigwedge_{x \in X} x \in \Lcal(\Ccal_x)$.
\end{proof}
\begin{corollary}
    \label{cor::generated-main-finite}
    Let $S \subseteq S(\Sigma)$ be finite.
    We can compute an automaton $\Bcal$ over $S$ such that $\Bcal$ accepts $s$ if and only if $s$ directs a word in $L$. 
\end{corollary}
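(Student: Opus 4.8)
The plan is to reduce directly to Theorem~\ref{desubstitutible-dream}, exactly as Corollary~\ref{cor::congenial-main-finite} reduces to Theorem~\ref{cor::congenial-automaton-version}: the only new ingredient is that, for a \emph{concrete} substitution, its equivalence class in $\Xi_L$ is computable. Given $\sigma \in S(\Sigma)$, for each letter $a \in \Sigma$ we compute $\sigma(a)$ and its canonical segment factorisation $\sigma(a) = b_1 v_1 \cdots b_d v_d$ (first occurrences of distinct letters), and for each of the finitely many $h \in \mathsf{morphisms}_L$ we evaluate $\mathsf{segments}_\sigma(a,h) = \langle (b_1,h(v_1)),\ldots,(b_d,h(v_d))\rangle$. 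This determines $\mathsf{segments}_\sigma$, hence $[\sigma]_L$, which we then match against the representatives produced by Lemma~\ref{Xi-effective}. Since $S$ is finite, we simply tabulate the map $\phi\colon S \to \Xi_L$, $\phi(\sigma) = [\sigma]_L$.

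Next I would invoke Theorem~\ref{desubstitutible-dream} to obtain an automaton $\Bcal'$ over the alphabet $\Xi_L$ that accepts $\mathsf{trace}_L(s)$ if and only if $s$ directs some $\alpha \in L$. The final step is to pull $\Bcal'$ back along $\phi$: take $\Bcal$ to have the same states, initial states and acceptance condition as $\Bcal'$, and replace each transition $(q,\xi,q')$ of $\Bcal'$ by all transitions $(q,\sigma,q')$ with $\sigma \in S$ and $\phi(\sigma) = \xi$. Then the runs of $\Bcal$ on $s = (\sigma_n)_{n\in\nat}$ correspond exactly to the runs of $\Bcal'$ on $\mathsf{trace}_L(s) = (\phi(\sigma_n))_{n\in\nat}$, so $\Bcal$ accepts $s$ iff $\Bcal'$ accepts $\mathsf{trace}_L(s)$ iff $s$ directs a word in $L$. (Since $\phi$ is a function, this relabelling preserves determinism, so it works uniformly for the automaton models used in the paper.)

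There is no real obstacle here: the statement is a routine finite-alphabet specialisation of Theorem~\ref{desubstitutible-dream}, and all the substantive work is done there. The only point worth stating is the effectiveness of $\sigma \mapsto [\sigma]_L$, which is immediate from the definition of $\mathsf{segments}_\sigma$ and is already used in the passage preceding Corollary~\ref{cor::congenial-main-finite}.
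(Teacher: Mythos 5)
Your proposal is correct and matches the paper's argument: the paper proves this corollary "in the same way as Cor.~\ref{cor::congenial-main-finite}", namely by noting that for finite $S$ each $[\sigma]_L$ is computable and then pulling the automaton of Thm.~\ref{desubstitutible-dream} back along $\sigma \mapsto [\sigma]_L$. Your added detail on computing $\mathsf{segments}_\sigma$ and relabelling transitions is exactly the intended (routine) construction.
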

The corollary above is proven in the same way as Cor.~\ref{cor::congenial-main-finite}.
We can apply the former to Arnoux-Rauzy words (which subsume Sturmian words) over the alphabet $\Sigma = \{0,\ldots,d-1\}$. 
Let $S = \{\lambda_0,\ldots,\lambda_{d-1},\rho_0,\ldots,\rho_{d-1}\}$.
Recall that a word $\alpha$ is Arnoux-Rauzy if and only if it is directed by a (i) weakly primitive $s \in S^\omega$ which (ii) contains infinitely many left-proper terms; by Lem.~\ref{lem::left-proper}, such directive sequences direct a unique word, which is infinite.
By inspection, $s \in S^\omega$ is weakly primitive if and only if for every $0 \le i < d$, either $\rho_i$ or $\lambda_i$ occurs infinitely often.
Hence the requirements (i-ii) can be checked by an automaton, and from Cor.~\ref{cor::generated-main-finite} we obtain the following.

\begin{theorem}[Main Result for Arnoux-Rauzy words]
    \label{thm::main-wp}
    Let $\Sigma = \{0,\ldots,d-1\}$, $L \subseteq \Sigma^\omega$ be an $\omega$-regular language, and $S$ be the finite set of morphisms generating the Arnoux-Rauzy words over $\Sigma$ as described above.
    We can compute an automaton $\Ccal$ that accepts $s \in S^\omega$ if and only if $s$ is weakly primitive, contains infinitely many left-proper terms, and the unique Arnoux-Rauzy word directed by $s$ is in $L$.
\end{theorem}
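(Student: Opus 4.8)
I would build $\Ccal$ as a product of three automata over the fixed finite alphabet $S$. The first is the automaton $\Bcal$ furnished by Corollary~\ref{cor::generated-main-finite}, which accepts $s\in S^\omega$ precisely when $s$ directs \emph{some} word in $L$. The second, $\Acal_{\mathrm{wp}}$, checks weak primitivity of $s$; the third, $\Acal_{\mathrm{lp}}$, checks that infinitely many terms of $s$ are left-proper. Since $\omega$-regular languages are effectively closed under intersection, it remains to exhibit $\Acal_{\mathrm{wp}}$ and $\Acal_{\mathrm{lp}}$ and to verify correctness of the product.

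Constructing the two side automata is routine. Among the morphisms in $S$, exactly $\lambda_0,\dots,\lambda_{d-1}$ are left-proper: every image of $\lambda_i$ begins with $i$, whereas for $j\ne i$ the image $\rho_i(j)=ji$ begins with $j$ (and $\rho_i(i)=i$). Hence ``$s$ has infinitely many left-proper terms'' is the B\"uchi condition ``some letter of $\{\lambda_0,\dots,\lambda_{d-1}\}$ occurs infinitely often in $s$'', which is trivially $\omega$-regular and computable. For weak primitivity, the characterisation recalled just before the theorem states that $s\in S^\omega$ is weakly primitive iff for every $i\in\{0,\dots,d-1\}$ at least one of $\lambda_i,\rho_i$ occurs infinitely often in $s$; this is a Boolean combination of ``letter occurs infinitely often'' conditions, hence again $\omega$-regular and computable.

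For correctness, let $\Ccal$ recognise $\Lcal(\Bcal)\cap\Lcal(\Acal_{\mathrm{wp}})\cap\Lcal(\Acal_{\mathrm{lp}})$. If $\Ccal$ accepts $s$, then $s$ is weakly primitive with infinitely many left-proper terms, so by Lemma~\ref{lem::left-proper} it directs a \emph{unique} word $\alpha\in\Sigma^\omega$, which is infinite and, by the characterisation in Sec.~\ref{subsection:Sturmian}, Arnoux-Rauzy; and since $\Bcal$ accepts $s$, the word $s$ directs lies in $L$, i.e.\ $\alpha\in L$. Conversely, if $s$ is weakly primitive, has infinitely many left-proper terms, and the unique Arnoux-Rauzy word $\alpha$ it directs is in $L$, then $\Acal_{\mathrm{wp}}$ and $\Acal_{\mathrm{lp}}$ accept $s$, and $\Bcal$ accepts $s$ because $s$ directs $\alpha\in L$; hence $\Ccal$ accepts $s$.

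\textbf{Main obstacle.} Essentially all the difficulty is already absorbed into Corollary~\ref{cor::generated-main-finite} (via Theorem~\ref{desubstitutible-dream} and the congeniality machinery), and the two side conditions are easy $\omega$-regular properties. The one point that needs care is reconciling the ``$s$ directs \emph{some} word of $L$'' semantics of $\Bcal$ with the theorem's reference to ``\emph{the unique} Arnoux-Rauzy word directed by $s$''; this is exactly bridged by the uniqueness clause of Lemma~\ref{lem::left-proper}, which becomes available once weak primitivity and the infinitely-many-left-proper condition are imposed.
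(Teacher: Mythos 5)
Your proposal is correct and follows essentially the same route as the paper: intersect the automaton of Corollary~\ref{cor::generated-main-finite} with the two easy $\omega$-regular checks for weak primitivity and infinitely many left-proper terms, and invoke Lemma~\ref{lem::left-proper} to identify ``directs some word in $L$'' with ``the unique directed word is in $L$''. No gaps.
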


\section{Partial Quotients}
\label{sec:partial-quotients}
In this section, we refine our main results presented in Sec.~\ref{sec::aut-acceptance}.
For characteristic Sturmian words, in particular, we will show that whether a deterministic parity automaton $\Acal$ accepts the characteristic Sturmian word  $\alpha_\eta$ with slope $\eta$  only depends on the first $N(\Acal)$ terms in the continued fraction expansion of $\eta$, where $N$  is independent of the slope  $\eta$.

Fix a (deterministic) automaton $\Acal$, a class of words $\Wcal \subseteq \Sigma^\omega$, and $S \subseteq S(\Sigma)$ such that
\begin{enumerate}
    \item $\alpha \in \Wcal$ if and only if $\alpha$ is directed by some weakly primitive $s$ over $S^\omega$, and
    \item there exists an effectively computable function $p$ such that $p_\alpha(n) = p(n)$ for all $\alpha \in \Wcal$ and $n \in\nat$, where $p_\alpha(n)$ is the number of distinct factors of $\alpha$ of length $n$.
\end{enumerate}
By 1, every $\alpha \in \Wcal$ is uniformly recurrent.
Characteristic Sturmian, Sturmian, and $d$-letter Arnoux-Rauzy words are captured, respectively, with $S = \{\lambda_0,\rho_0\}$, $p(n) = n+1$; $S = \{\lambda_0,\rho_0,\lambda_1,\rho_1\}$, $p(n) = n+1$; and $S = \{\lambda_0,\ldots,\lambda_{d-1},\rho_0,\ldots,\rho_{d-1}\}$, $p(n) = n(d-1)+1$.

The main idea is that every $\alpha \in \mathcal{W}$ has a prefix $u$ that is \emph{$p$-saturated} with respect to~$\mathcal{A}$, i.e., any uniformly recurrent extension $\beta$ of $u$ with factor complexity $p$ agrees with $\alpha$ upon acceptance by~$\mathcal{A}$ (Lem.~\ref{lem::saturated-prefix}). 
The proof involves a careful consideration of Sem\"enov's algorithm for deciding whether a given automaton accepts a given uniformly recurrent word.
\begin{lemma}
\label{lem::saturated-prefix}
Let $\alpha \in \Wcal$ with effectively computable letters.
We can compute $N$ such that any $\beta \in \Wcal$ with $\beta[0,N)=\alpha[0,N)$ is accepted by $\Acal$ if and only if $\alpha$ is accepted by $\Acal$.
\end{lemma}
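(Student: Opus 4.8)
The plan is to invoke Semënov's algorithm (Prop.~\ref{prop::semenov-and-prefix}) with the twist that the recurrence oracle can be supplied \emph{uniformly} across all of $\Wcal$. First I would observe that, by assumption 2, the factor complexity of every $\alpha \in \Wcal$ is the fixed, computable function $p$; in particular every $\alpha \in \Wcal$ is uniformly recurrent by assumption 1, and for uniformly recurrent words the recurrence function $R_\alpha$ depends only on the factor language $\Lcal(\alpha)$ (as noted after Lem.~\ref{lemma:language-to-return-time}). The point is that I do \emph{not} need $R_\alpha$ to be the same for all $\alpha\in\Wcal$; I only need to bound the \emph{precision} to which $\Acal$ can detect a difference.

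The key steps, in order, are as follows. Given $\alpha \in \Wcal$ with effectively computable letters, I first note that the factor-language oracle (A) $\mapsto$ ``$u \in \Lcal(\alpha)$?'' is computable: enumerate prefixes of $\alpha$ of growing length and check whether $u$ occurs; termination is guaranteed because uniform recurrence ensures $u$ occurs in $\alpha$ within a bounded window if it occurs at all, and the bound is detected once a window of the relevant size has been scanned. From this oracle, Lem.~\ref{lemma:language-to-return-time} gives an effective upper bound oracle (B) for $R_\alpha$. Now apply Prop.~\ref{prop::semenov-and-prefix} to $\Acal$ and $\alpha$: it yields $l, M$ such that any uniformly recurrent $\beta$ with $\beta[0,M) = \alpha[0,M)$ and $R_\beta(n) = R_\alpha(n)$ for all $n \le l$ satisfies ``$\Acal$ accepts $\beta$'' iff ``$\Acal$ accepts $\alpha$''. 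The issue is that a generic $\beta \in \Wcal$ agreeing with $\alpha$ on a prefix need not have $R_\beta(n) = R_\alpha(n)$ for $n \le l$.

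To close this gap I would push the agreement length up: since $R_\alpha(l)$ is finite and computable, choose $N \ge R_\alpha(l)$ large enough that the prefix $\alpha[0,N)$ determines the set of all factors of $\alpha$ of length $\le l$ — concretely, $N = R_\alpha(l)$ works, because by definition of the recurrence function every factor of $\alpha$ of length $\le l$ already occurs inside \emph{any} factor of $\alpha$ of length $R_\alpha(l)$, in particular inside $\alpha[0,N)$. Now if $\beta \in \Wcal$ has $\beta[0,N) = \alpha[0,N)$, then every length-$\le l$ factor of $\alpha$ is a factor of $\beta$; but $\alpha$ and $\beta$ have the same number $p(n)$ of length-$n$ factors for each $n \le l$, so in fact $\Lcal_{\le l}(\alpha) = \Lcal_{\le l}(\beta)$. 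Since $R_\beta(n)$ for $n \le l$ depends only on $\Lcal_{\le R_\beta(n)}(\beta)$, and one checks inductively (starting from the smallest relevant length) that the common factor sets up to length $l$ force $R_\beta(n) = R_\alpha(n)$ for all $n \le l$, the hypotheses of Prop.~\ref{prop::semenov-and-prefix} are met. Taking $N \coloneqq \max\{M, R_\alpha(l)\}$ (all effectively computable from $\alpha$ and $\Acal$) then gives the claim.

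The main obstacle is the bookkeeping in the last step: verifying that agreement of the factor \emph{languages} up to length $l$ genuinely forces agreement of the recurrence \emph{values} $R_\alpha(n) = R_\beta(n)$ for $n \le l$, rather than merely up to length $l$ of the factor sets. This requires being slightly careful that $R_\beta(n)$ only queries factor-containment among words whose lengths are themselves controlled — which holds because $R_\beta(n) \le R_\alpha(n) \le R_\alpha(l)$, a uniform bound once we also know $p_\beta = p_\alpha$, so no factor longer than $R_\alpha(l)$ is ever consulted. One minor subtlety worth flagging explicitly is that Semënov's theorem and Prop.~\ref{prop::semenov-and-prefix} are stated for deterministic automata; this is fine, as Lem.~\ref{lem::saturated-prefix} is only invoked for a (deterministic) parity automaton $\Acal$.
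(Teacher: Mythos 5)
Your overall strategy is the same as the paper's: compute $R_\alpha$ effectively from the known complexity function $p$, invoke Prop.~\ref{prop::semenov-and-prefix} to get $l, M$, and then enlarge the prefix-agreement length so that agreement of prefixes forces agreement of the recurrence values up to $l$. However, the last step as you wrote it has a genuine gap. Taking $N = R_\alpha(l)$ only guarantees $\Lcal_{\le l}(\alpha) = \Lcal_{\le l}(\beta)$, and that is not enough to conclude $R_\beta(n) = R_\alpha(n)$ for $n \le l$: the defining condition for $R_\beta(n)$ quantifies over factors of $\beta$ of length up to $R_\beta(n)$, which typically exceeds $l$ (indeed $R_\beta(n) \ge p(n) + n - 1 > n$). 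Your patch --- ``$R_\beta(n) \le R_\alpha(n)$ \ldots once we also know $p_\beta = p_\alpha$'' --- is unjustified and in fact false: equal factor complexity does not bound the recurrence function. Concretely, for Sturmian words ($p(n) = n+1$ for all of them), $R_\beta(n)$ depends on the continued-fraction denominators of the slope of $\beta$ beyond the level that controls the length-$l$ factors, so two Sturmian words can agree on all factors up to length $l$ yet have $R_\beta(l)$ arbitrarily larger than $R_\alpha(l)$.

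The repair is small but necessary: take $N \ge R_\alpha\bigl(R_\alpha(l)\bigr)$ (together with the $M$ from Prop.~\ref{prop::semenov-and-prefix}). Then $\alpha[0,N)$ contains every factor of $\alpha$ of length $K \coloneqq R_\alpha(l)$, so $\Lcal_{K}(\alpha) \subseteq \Lcal_{K}(\beta)$, and equality follows from $|\Lcal_{K}(\beta)| = p(K) = |\Lcal_{K}(\alpha)|$; this propagates down to all lengths $\le K$. Now for each $n \le l$ and each $r \le K$, the predicate ``every length-$n$ factor occurs in every length-$r$ factor'' is determined by $\Lcal_{\le K}$, hence agrees for $\alpha$ and $\beta$; since $R_\alpha(n) \le R_\alpha(l) = K$ by monotonicity of $R_\alpha$, the least $r$ making the predicate true is the same for both words, giving $R_\beta(n) = R_\alpha(n)$ for all $n \le l$. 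With this correction your argument matches the paper's (which computes $N_2$ by exactly this kind of double application of the recurrence function, albeit tersely). Your closing remark about determinism is fine.
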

\begin{proof}
	Since we have access to the factor complexity function of $\alpha$ (and $\beta$), we can evaluate $R_\alpha(n)$ for all $n$: given $n$, enumerate prefixes $\alpha(0, L)$ until finding $M$ such that (i) $\gamma[0, L)$ contains $p_\alpha(M)$ distinct factors of length $M$, and (ii) all of these $p_\alpha(M)$ factors contain $p_\alpha(n)$ distinct factors of length $n$. 
	Then $M = R_\alpha(n)$.
	
	Applying Prop.~\ref{prop::semenov-and-prefix}, we obtain $N_1$ and an $\ell$ such that if $\alpha[0, N_1) = \beta[0, N_1)$ and $R_\alpha(n) = R_\beta(n)$ for $n \le \ell$, then $\alpha$ and $\beta$ agree upon acceptance by~$\mathcal{A}$. By the above paragraph, we can compute $N_2$ such that if $\alpha[0, N_2) = \beta[0, N_2)$, then $R_\alpha(n) = R_\beta(n)$ for $n \le \ell$. 
	It remains to take $N = \max(N_1, N_2)$.
\end{proof}

Given a weakly primitive and congenial expansion $(\sigma_n, a_n)_{n\in\nat}$ of $\alpha$, we can compute an increasing sequence $(k_n)_{n\in\nat}$ with $k_0 = 0$ such that, writing $l_n = k_{n+1}-1$, $\tau_n = \sigma_{k_n}\cdots\sigma_{l_n}$ and $b_n = a_{l_n}$, each $\tau_n$ is positive, all strict prefixes of the composition $\sigma_{k_n}\cdots\sigma_{l_n}$ are not positive, and $(\tau_n, b_n)_{n\in\nat}$ is also a congenial expansion of $\alpha$.
We refer to $(\tau_n, b_n)_{n\in\nat}$ as the sequence of \emph{partial quotients} of $(\sigma_n, a_n)_{n\in\nat}$. By construction, the sequence of partial quotients is weakly primitive.
The main result of this section is that, for the class $\Wcal$, acceptance by $\Acal$ is determined by a few initial partial quotients; for reasons of space, we delegate the proof, which combines algebraic, combinatorial, and topological reasoning, to the appendix.

\begin{theorem}
    \label{thm::partial-quotients}
    We can compute $N$ with the following property.
    Let $\alpha, \alpha'$ be in $\Wcal$ with congenial $S$-adic expansions $(\tau_n,a_n)_{n\in\nat}, (\tau'_n,a'_n)_{n\in\nat}$ and partial quotients $(\pi_n, b_n)_{n \in \mathbb{N}}$ and $(\pi'_n, b'_n)_{n \in \mathbb{N}}$, respectively.
    If $\pi_n \equiv_L \pi'_n$ and $b_n = b'_n$ for all $n \le N$, then $\Acal$ accepts $\alpha$ if and only if it accepts $\alpha'$.
\end{theorem}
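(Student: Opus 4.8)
The plan is to combine Lemma~\ref{lem::saturated-prefix} with the structural facts about congenial expansions and partial quotients, together with the finiteness of $\Xi_L$. First I would invoke Lemma~\ref{lem::saturated-prefix}: for a fixed reference word $\alpha_0 \in \Wcal$ it produces a bound $N_0$ so that acceptance by $\Acal$ depends only on the prefix $\alpha_0[0,N_0)$ among words in $\Wcal$. The issue is that $N_0$ \emph{a priori} depends on $\alpha_0$, so the first task is to make this uniform over all of $\Wcal$: since the factor complexity function $p$ is the same for all $\alpha \in \Wcal$, Sem\"enov's bounds (via Prop.~\ref{prop::semenov-and-prefix} and the effective bounds in \cite[Chap.~3.1]{karimov-thesis}) depend only on $\Acal$ and $p$, not on the individual word; hence there is a single $M$ (number of automaton steps) and a single $\ell$ (recurrence-depth) that work for \emph{every} $\alpha \in \Wcal$. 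So acceptance of any $\alpha \in \Wcal$ by $\Acal$ is determined by $\alpha[0,M)$ together with $R_\alpha(n)$ for $n \le \ell$; but since $\Wcal$ has a common complexity function, $R_\alpha$ is itself determined by $\Lcal(\alpha)$ alone, and in the uniformly recurrent setting this is in turn pinned down once a long enough prefix is known. The upshot: there is a single bound $K = K(\Acal, p)$ such that acceptance of $\alpha \in \Wcal$ by $\Acal$ depends only on $\alpha[0,K)$.

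Next I would translate ``$\alpha[0,K)$ is determined'' into ``finitely many partial quotients are determined''. Given the congenial expansion $(\tau_n, a_n)_{n\in\nat}$ and its partial quotients $(\pi_n, b_n)_{n\in\nat}$, Lemma~\ref{lem::congenial-limit} (applied to the partial-quotient expansion, which is itself congenial) gives
\[
\pi_0\cdots\pi_n(b_n) = \pi_0(b_0)\cdot \pi_0(w_1)\cdots \pi_0\cdots\pi_{n-1}(w_n),
\]
where $\pi_n(b_n) = b_{n-1} w_n$, and each $\pi_0\cdots\pi_n(b_n)$ is a prefix of $\alpha$. Since each $\pi_n$ is positive, the lengths $|\pi_0\cdots\pi_n(b_n)|$ grow (strictly, and in fact at least linearly); so there is an index $N_1 = N_1(K)$ such that $|\pi_0\cdots\pi_{N_1}(b_{N_1})| \ge K$, and moreover this $N_1$ can be bounded uniformly because positivity forces $|\pi(u)| \ge |u| + 1$ for any word $u$ containing all letters, giving a crude but uniform linear lower bound on the growth. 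Thus $\alpha[0,K)$ is a prefix of $\pi_0\cdots\pi_{N_1}(b_{N_1})$, which depends only on $b_{N_1}$ and the morphisms $\pi_0,\ldots,\pi_{N_1}$.

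The last step is to pass from the morphisms $\pi_n$ themselves to their classes $[\pi_n]_L \in \Xi_L$. Here I would use that $\mathsf{segments}_\pi$, and hence the length of $\pi(a)$ together with its letter-structure up to $L$-indistinguishable blocks, is a function of $[\pi]_L$ — in particular $\mathsf{head}_{\pi}$ and the combinatorial ``skeleton'' of $\pi(a)$ are class invariants, and the composition $\pi_0\cdots\pi_{N_1}$ acting on $b_{N_1}$ produces, up to $\equiv_L$, a fixed $h_L$-image determined only by the classes $[\pi_0]_L,\ldots,[\pi_{N_1}]_L$ and $b_{N_1}$ (this is exactly the content of the composition-on-classes construction in Sec.~\ref{sec:S-mod-A} and of Theorem~\ref{morphic-regular-lang}-style reasoning). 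More carefully: whether $\Acal$ accepts $\alpha$ is an $\omega$-regular property of $\alpha$, so it is an $h_L$-property; since $\alpha$ is a concatenation of the congenial blocks whose $h_L$-images are computable from the classes $\xi_n = [\pi_n]_L$ and letters $b_n$ (Lem.~\ref{lem::congenial-monoid-version}), and acceptance is determined by finitely many of them, it is determined by $\xi_0, b_0, \ldots, \xi_N, b_N$ for $N := N_1$. Setting $L$ to be the automaton language of $\Acal$ (viewed via a parity automaton so that acceptance is genuinely $\omega$-regular) and taking $N = N(\Acal)$ as above completes the argument: if $\pi_n \equiv_L \pi'_n$ and $b_n = b'_n$ for all $n \le N$, then $h_L(\alpha) $ and $h_L(\alpha')$ land in the same accepting/rejecting bucket. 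The main obstacle I anticipate is making the bound $N$ genuinely uniform over $\Wcal$ — i.e., carefully extracting from Sem\"enov's algorithm and the effective bounds in \cite{karimov-thesis} that the required prefix length $K$ depends only on $\Acal$ and the common complexity function $p$, and not on the slope/word — since Lemma~\ref{lem::saturated-prefix} as stated only gives a per-word bound; the translation from prefix length to partial-quotient count and then to $\Xi_L$-classes is comparatively routine once the growth lower bound for positive morphisms is in hand.
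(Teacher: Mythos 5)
Your plan founders at its first and load-bearing step: the claim that Sem\"enov's bounds ``depend only on $\Acal$ and $p$'' and hence that there is a single $K = K(\Acal, p)$ such that acceptance is determined by $\alpha[0,K)$ uniformly over $\Wcal$. The bound in Prop.~\ref{prop::semenov-and-prefix} is computed from the recurrence function $R_\alpha$, not from the complexity function, and $R_\alpha$ is not uniform over $\Wcal$: already $R_\alpha(1)$ is unbounded over Sturmian words (it grows like $1/\eta$ as the slope $\eta \to 0$). Worse, the uniform-prefix conclusion itself is false, not merely unproven. Take $\Acal$ accepting exactly the words containing the factor $11$. For any $K$ there are Sturmian words $\beta, \beta'$ of slopes just above and just below $1/2$ (with suitable intercepts) that both begin with $(01)^K$; the first contains $11$ (since $11 \in \Lcal(\beta)$ and $\beta$ is uniformly recurrent) while the second does not. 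So no prefix length, however large, determines acceptance uniformly over $\Wcal$, and any route through such a $K$ cannot be repaired. This is exactly why Thm.~\ref{thm::partial-quotients} is stated in terms of a number of \emph{partial quotients} rather than a number of letters: $N$ partial quotients correspond to prefixes whose lengths vary unboundedly as the word ranges over $\Wcal$.

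The paper obtains the needed uniformity by a different mechanism. It first replaces each partial quotient $\tau_n$, drawn from the \emph{infinite} set $P$ of minimal positive products over $S$, by a representative $\pi_n$ from a \emph{finite} set $\Pi$ of $\equiv_L$-class representatives; by Lem.~\ref{lem::congenial-monoid-version} the word generated by $(\pi_n,b_n)_{n\in\nat}$ has the same trace as $\alpha$, hence the same acceptance status. It then works in the space $\Omega$ of all congenial sequences over the finite alphabet $\Pi\times\Sigma$, which is compact; the sets $\Omega_M$ of sequences whose first $M$ terms already generate a $p$-saturated word are open and, by the per-word Lem.~\ref{lem::saturated-prefix}, cover $\Omega$, so a finite subcover yields one $N$ valid for all of $\Omega$. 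The per-word bound is thus made uniform by compactness in the space of directive sequences, not in the space of words. Your later steps (growth of positive morphisms, reduction of acceptance to an $h_L$-property of the trace) are in the right spirit and essentially reproduce Lem.~\ref{lem::congenial-monoid-version}, but without the compactness step, or some substitute for it, the argument does not close.
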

\begin{proof}
	Let $P$ be the set of all positive $\sigma_1\cdots\sigma_r$ such that $\sigma_i\in S$ for all $i$ and $\sigma_1\cdots \sigma_i$ is not positive for all $i < r$.
	Construct finite $\Pi \subseteq P$ such that $\{[\pi]_L \colon \pi \in P\} = \{[\pi]_L \colon \pi  \in \Pi\}$, and let $\Omega$ be the set of all congenial $(\pi_n, b_n)_{n\in\nat}$ over $\Pi~\times~\Sigma$.
	Because congeniality is a local condition,  $\Omega$ is a compact subset of $(\Pi \times \Sigma)^\omega$.
	Next, consider $\alpha$ generated by some $\hat s = (\pi_n, b_n)_{n\in\nat} \in \Omega$.
	By Lem.~\ref{lem::saturated-prefix} there exists $M$ such that $\alpha[0,M)$ is $p$-saturated.
	Write $\Omega_M$ for the set of all $\hat s \in \Omega$ whose first $M$ terms generate a $p$-saturated finite word, observing that each $\Omega_M$ is open.
	From Lem.~\ref{lem::saturated-prefix} it follows that $\{\Omega_M \colon M \in \nat \}$
	is an open cover of $\Omega$, which, by compactness, admits a finite sub-cover.
	That is, there exists $N$ (which we can be effectively computed by enumeration) such that for every $(\pi_n,b_n)_{n\in\nat} \in \Omega$, $(\pi_n,b_n)_{n=0}^N$ generates a finite word that is $p$-saturated.
	
	Now consider $\alpha \in \Wcal$ with a weakly primitive and congenial $S$-adic expansion $(\mu_n, a_n)_{n\in\nat}$ and partial quotient sequence $\hat s \coloneqq (\tau_n, b_n)_{n\in\nat}$.
	Let $(\pi_n)_{n\in\nat}$ over $\Pi$ be such that $\tau_n \equiv_L \pi_n$ for all $n$, and observe that $\hat t \coloneqq (\pi_n,b_n)_{n\in\nat}$ is also congenial.
	Let $\beta$ be the word generated by $\hat{t}$.
	By Thm.~\ref{lem::congenial-monoid-version}, $\Acal$~accepts $\alpha$ if and only if it accepts $\beta$.
	In particular, acceptance by $\Acal$ only depends on the trace $([\tau_n]_{\Acal}, b_n)_{n\in\nat}$.
	By the earlier argument, whether $\Acal$ accepts $\beta$ only depends on 
	$([\pi_n]_L,b_n)_{n\in\nat}$.
	Combining the argument of $p$-saturation, it follows that whether $\Acal$ accepts $\alpha$ only depends on $([\tau_n]_L,b_n)_{n=0}^N$.
\end{proof}
Let us illustrate this result on Sturmian words.
Given an automaton $\Acal$, apply Thm.~\ref{thm::partial-quotients} with $S = \{\lambda_0,\lambda_1,\rho_0,\rho_1\}$ and $p(n)=n+1$ to compute $N$.
Let $\eta \in (0,1)\setminus \rat$, $\chi \in [-\eta, 1-\eta]$, $[0;a_1+1,a_2,\ldots]$ be the continued fraction expansion of $\eta$, $(b_n)_{n\in\nat}$ be an Ostrowski expansion of $\chi$ in base $\eta$, and $\alpha$ be the corresponding Sturmian word with slope $\eta$ and intercept $\chi$.
Recall that $\alpha$ is directed (and uniquely generated; see Lem.~\ref{lem::left-proper}) by the sequence 
\[
(\sigma_n)_{n\in\nat} = 
\langle \lambda_0^{b_1}, \rho_0^{c_1}, \lambda_1^{b_2}, \rho_1^{c_2}, \lambda_0^{b_3}, \rho_0^{c_3}, \lambda_1^{b_4}, \rho_1^{c_4} \ldots \rangle
\]
where $c_i = a_i - b_i$.
By the rules of Ostrowski expansion, at least one of $c_n,c_{n+1}$ is non-zero for all $n$.
Moreover, every composition of morphisms from $\{\lambda_0,\lambda_1,\rho_0,\rho_1\}$ that includes two morphisms with differing indices is positive.
Since at least one of $b_n,c_n$ is non-zero for all $n > 1$ (we could, however, have $a_1 = b_1 = c_1 = 0$), we have that for every~$n$, $\sigma_n \cdots \sigma_{n+5}$ is positive.
Applying Thm.~\ref{thm::partial-quotients}, whether $\alpha$ is accepted by $\Acal$ can be determined by looking at the first $6N$ digits of the expansions of $\eta$ and~$\chi$, since these are guaranteed to generate at least the first $N$ partial quotients of the unique $S$-adic expansion of $\alpha$.
%In fact, because  $([\sigma^n]_L)_{n\in\nat}$ is ultimately periodic, the acceptance of $\alpha$ by $\Acal$ can be formulated as a Boolean combination of formulas in modular arithmetic involving the first $6N$ digits of the expansions of $\eta$ and $\chi$.

%\section{Discussion}
%For an automaton $\Acal$ over $\Sigma$, we gave a non-trivial equivalence relation $\equiv_L$ over substitutions of type $\Sigma^* \to \Sigma^*$ such that acceptance (by $\Acal$) properties of $\alpha \in \Sigma^\omega$ associated with a substitution $\sigma$ or directive sequence $(\sigma_n)_{n\in\nat}$ can be inferred from $[\sigma]_L$ and $([\sigma_n]_L)_{n\in\nat}$, respectively.
%We also completely characterised the set of $S$-directed words accepted by~$\Acal$.
%The next step is to consider \emph{billiard words}, which play a key role in decidability of the MSO theory of $\langle \nat; <, a^\nat, b^\nat, c^\nat\rangle$ and are not known to have good $S$-adic representations \cite{berthe2025monadic}.

\bibliography{refs}

\end{document}